\documentclass[12pt]{article}%





\usepackage[authoryear]{natbib}

\usepackage{amssymb}
\usepackage{amsfonts}
\usepackage{amsmath}
\usepackage[nohead]{geometry}
\usepackage[singlespacing]{setspace}
\usepackage[bottom]{footmisc}
\usepackage{indentfirst}
\usepackage{endnotes}
\usepackage{graphicx}%
\usepackage{epstopdf}
\usepackage{rotating}
\usepackage{verbatim}
\usepackage{setspace}
\setcounter{MaxMatrixCols}{30}
\usepackage{multirow}
\usepackage{latexsym}

\usepackage{amsthm}
\usepackage{makeidx}
\usepackage{fancyhdr}
\usepackage{type1cm}

\newcommand{\bb}{\mathrm{\bf b}}
\newcommand{\bff}{\mathrm{\bf f}}
\newcommand{\bx}{\mathrm{\bf x}}

\newcommand{\bA}{\mathrm{\bf A}}

\newcommand{\bu}{\mathrm{\bf u}}
\newcommand{\bv}{\mathrm{\bf v}}
\newcommand{\be}{\mathrm{\bf e}}

\newcommand{\bM}{\mathrm{\bf M}}

\newcommand{\bc}{\mathrm{\bf c}}

\newcommand{\bC}{\mathrm{\bf C}}
\newcommand{\bD}{\mathrm{\bf D}}

\newcommand{\bG}{\mathrm{\bf G}}

\newcommand{\bI}{\mathrm{\bf I}}
\newcommand{\bV}{\mathrm{\bf V}}

\newcommand{\btheta}{\mbox{\boldmath $\theta$}}

\newcommand{\htheta}{\widehat\btheta}

\newcommand{\bbeta}{\mbox{\boldmath $\beta$}}

\newcommand{\stheta}{\mbox{\scriptsize \btheta}}

\newcommand{\bzero}{\mathrm{\bf 0}}

\newcommand{\bmu}{\mbox{\boldmath $\mu$}}

\newcommand{\bSigma}{\mbox{\boldmath $\Sigma$}}

\newcommand{\sbtheta}{\mbox{\scriptsize \boldmath $\theta$}}

\newcommand{\tx}{\widetilde \bx}

\newcommand{\hb}{\widehat \bb}

\newcommand{\hSig}{\widehat\Sig}

\newcommand{\hsig}{\widehat\sigma}

\newcommand{\hbeta}{\widehat\bbeta}

\newcommand{\cov}{\mathrm{cov}}
\newcommand{\Sig}{\mathbf{\Sigma}}

\newcommand{\diag}{\mathrm{diag}}

\newcommand{\bw}{\mbox{\bf w}}

\newcommand{\var}{\mathrm{var}}
\newcommand{\MKT}{\mathrm{MKT}}
\newcommand{\SMB}{\mathrm{SMB}}
\newcommand{\HML}{\mathrm{HML}}
\newcommand{\MOM}{\mathrm{MOM}}
\newcommand{\beq}{\begin{eqnarray*}}
\newcommand{\eeq}{\end{eqnarray*}}

\numberwithin{equation}{section}
\theoremstyle{plain}
\newtheorem{thm}{Theorem}[section]

\newtheorem{lem}{Lemma}[section]
\newtheorem{cor}{Corollary}[section]
\newtheorem{prop}{Proposition}[section]
\newtheorem{assum}{Assumption}[section]
\theoremstyle{definition}

\newtheorem{remark}{Remark}[section]

\usepackage{textcomp}
\usepackage{xr}
\externaldocument{Supp}

\makeatletter
\def\@biblabel#1{\hspace*{-\labelsep}}
\makeatother
\geometry{left=1in,right=1in,top=1.25in,bottom=1.25in}

\begin{document}

\title{
\bf Power Enhancement  in High Dimensional Cross-Sectional Tests }
\author{Jianqing Fan
\thanks{The authors are grateful to the comments from seminar and conference participants at   UChicago, Princeton, Georgetown,   George Washington, 2014 Econometric Society North America Summer Meeting, UCL workshop on High-dimensional Econometrics Models, The 2014 Annual meeting of Royal Economics Society, The 2014 Asian Meeting of the Econometric Society,  2014 International Conference on Financial Engineering and Risk Management, and 2014 Midwest Econometric Group meeting.  Address: Department of Operations Research and Financial Engineering, Sherrerd Hall, Princeton University, Princeton, NJ 08544, USA. Department of Mathematics, University of Maryland, College Park, MD 20742, USA.  E-mail: \textit{jqfan@princeton.edu},
\textit{yuanliao@umd.edu}, \textit{jiaweiy@princeton.edu}.  The research was partially supported by National Science Foundation grants DMS-1206464 and DMS-1406266, and National Institute of Health grants R01GM100474-01 and R01-GM072611.}$\; ^\dag$, Yuan Liao$^\ddag$ and Jiawei Yao$^*$
\medskip\\{\normalsize $^*$Department of Operations Research and Financial Engineering,  Princeton University}
\medskip\\{\normalsize $^\dag$ Bendheim Center for Finance, Princeton University}
\medskip\\{\normalsize $^\ddag$ Department of Mathematics, University of Maryland}}

\date{}
\maketitle

\sloppy%

\onehalfspacing


\begin{abstract}
  We propose a novel technique to boost the power of testing a high-dimensional vector $H:\btheta=0$ against sparse alternatives where the null hypothesis is violated only by a couple of components. Existing tests  based on  quadratic forms such as the Wald statistic often suffer from low powers  due to the accumulation of errors in estimating high-dimensional parameters. More powerful tests for sparse alternatives such as thresholding and extreme-value tests, on the other hand,  require either  stringent conditions or bootstrap to derive the null distribution and often suffer from size distortions due to the slow convergence. Based on a screening technique, we introduce a ``power enhancement component", which is zero under the null hypothesis with high probability, but diverges quickly under sparse alternatives.   The proposed test statistic combines the power enhancement component with an asymptotically pivotal statistic, and strengthens the power  under sparse alternatives.  The null distribution does not require stringent regularity conditions, and is completely determined by that of the pivotal statistic.  As a byproduct, the power enhancement component also consistently identifies the elements that violate the null hypothesis. As specific applications, the proposed methods are applied to testing   the factor pricing models and validating the cross-sectional independence in panel data models.
  \end{abstract}

\strut


\textbf{Keywords:}  sparse alternatives,  thresholding,  large covariance matrix estimation, Wald-test, screening,  cross-sectional independence,  factor pricing model

\strut

\textbf{JEL code:}  C12,  C33, C58

\onehalfspacing

\section{Introduction}

High-dimensional cross-sectional models have received growing attentions in both theoretical and applied econometrics. These models typically involve a   structural parameter, whose dimension can be either comparable or much larger than the sample size. 
This paper addresses testing  a high-dimensional structural parameter:
$$
H_0: \btheta=\bzero,
$$
where $N=\dim(\btheta)$ is allowed to grow faster than the sample size $T.$ We are particularly interested in boosting the power in \textit{sparse} alternatives under which $\btheta$ is approximately a sparse vector. 
This type of alternative is of particular interest, as the null hypothesis typically represents some economic theory and violations are expected   to be only by some exceptional individuals.

A showcase example is the factor pricing model  in financial economics.
 Let $y_{it}$ be the excess return of the $i$-th asset at time $t$, and $\bff_t=(f_{1t},...,f_{Kt})'$ be the excess returns of $K$ tradable market risk factors.
Then, the excess return has   the following decomposition:
$$
    y_{it}=\theta_i+\bb_i'\bff_t+u_{it}, \quad i=1,...,N, \quad t=1,...,T,
$$
where $\bb_i=(b_{i1},...,b_{iK})'$ is a vector of factor loadings and $u_{it}$ represents the idiosyncratic error.  The key implication from the multi-factor  pricing theory is that the intercept $\theta_i$ should be zero, known as the ``mean-variance efficiency" pricing, for any asset $i$.   An important question is then if such a pricing theory can be validated by empirical data, namely we wish to test the null hypothesis
$
    H_0: \btheta=0,
$
where  $\btheta=(\theta_1,...,\theta_N)'$ is the vector of intercepts for all $N$ financial assets.  As the factor pricing model is derived from theories of  financial economics  \citep{merton, ross}, one would expect that inefficient pricing by the market should only occur to a small fractions of exceptional assets.
Indeed, our empirical study of the constituents in the S\&P 500 index indicates   that there are only a couple of significant nonzero-alpha stocks, corresponding to
a small portion of  mis-priced stocks instead of systematic mis-pricing of the whole market. Therefore, it is  important to construct tests that have high power when $\btheta$ is sparse.

Most of the conventional tests for $H_0: \btheta=0$ are based on a quadratic form:
$$
W=\widehat\btheta'\bV\widehat\btheta.
$$
Here $\htheta$ is an element-wise consistent estimator  of $\btheta$, and  $\bV$ is a high-dimensional positive definite weight matrix, often taken to be the inverse of the asymptotic covariance matrix of $\htheta$ (e.g., the Wald test). 
After a proper standardization, the standardized $W$ is asymptotically pivotal  under the null hypothesis. In high-dimensional testing problems, however, various difficulties arise when using  a quadratic statistic. First, when $N>T$, estimating $\bV$ is challenging, as the   sample analogue of the covariance matrix is singular. More fundamentally,  tests based on $W$ have low powers under sparse alternatives.
The reason is that the quadratic statistic accumulates high-dimensional estimation errors under $H_0$, which results in large critical values that can dominate the signals in the sparse alternatives. A formal proof  of this will be given  in Section 3.3. 

To overcome the aforementioned drawbacks, this paper introduces a novel technique for high-dimensional cross-sectional testing problems, called the ``power enhancement".   Let $J_1$ be a test statistic that has a correct asymptotic size (e.g., Wald statistic), which may suffer from  low powers under  sparse alternatives.  Let us augment the test by adding a \textit{power enhancement component} $J_0 \geq 0$, which satisfies the following three properties:

\textbf{Power Enhancement Properties}:

\begin{itemize}

\item[(a)] Non-negativity: $J_0\geq0$ almost surely.

 \item [(b)] No-size-distortion: Under $H_0,$ $P(J_0=0|H_0)\rightarrow1$.

\item [(c)]  Power-enhancement: $J_0$  diverges in probability under some specific regions of alternatives $H_a$.

\end{itemize}

Our constructed power enhancement  test  takes  the form
$$
    J= J_0+J_1.
$$
The non-negativity property of $J_0$ ensures that $J$ is at least as powerful as $J_1$. Property (b) guarantees that the asymptotic null distribution of $J$ is   determined by that of $J_1$, and the size distortion due to adding $J_0$ is negligible, and property (c) guarantees significant power improvement under the designated alternatives.  The \textit{power enhancement principle} is thus summarized as follows:  Given a  standard test statistic  with a correct asymptotic size, its power is substantially enhanced with little size distortion; this is achieved  by adding a component $J_0$ that is asymptotically zero under the null, but diverges  and dominates  $J_1$ under some specific regions of alternatives.

An example of such a  $J_0$ is  a \textit{screening statistic}:
$$
J_0= \sqrt{N}\sum_{j\in\widehat S}\widehat\theta_j^2\widehat v_j^{-1}= \sqrt{N}\sum_{j=1}^N\widehat\theta_j^2\widehat v_j^{-1}1\{|\widehat\theta_j|>\widehat v_j^{1/2}\delta_{N,T}\},
$$
where $\widehat S=\{j\leq N: |\widehat\theta_j|>\widehat v_j^{1/2}\delta_{N,T} \}$, and $\widehat v_j$ denotes a data-dependent normalizing factor,   taken as the estimated asymptotic variance of $\widehat\theta_j$.   The threshold $\delta_{N,T}$, depending on $(N,T)$, is a high-criticism threshold, chosen to be slightly larger than the noise level $\max_{j\leq N}|\widehat\theta_j-\theta_j|/\widehat v_j^{1/2}$ so that under $H_0$, $J_0=0$ with probability approaching one.  In addition, we take $J_1$ as a pivotal statistic, e.g., standardized Wald statistic or other quadratic forms such as the sum of the squared  marginal  $t$-statistics \citep{bai1996effect, chen2010two,PY}.   As a byproduct, the screening set $\widehat S$ also consistently identifies indices where the null hypothesis is violated.

One of the major differences of our test from most of the thresholding tests  \citep{fan, hansen05}  is that,  it enhances    the power substantially by adding a screening statistic, which does not introduce extra difficulty in deriving the asymptotic null distribution.  Since $J_0=0$ under $H_0$, it relies on  the   pivotal statistic $J_1$ to determine its null distribution. In contrast,   the existing thresholding tests and extreme value tests often require stringent  conditions to derive their asymptotic null distributions, making them restrictive in  econometric applications, due to slow rates of convergence.   Moreover, the asymptotic null distributions are inaccurate at finite sample. As pointed out by \cite{hansen03}, these statistics are non-pivotal even asymptotically, and require bootstrap methods to simulate the null distributions.



As for specific applications, this paper studies the tests of the aforementioned {factor pricing model},  and of cross-sectional  independence in mixed effect panel data models:
$$
y_{it}=\alpha+\bx_{it}'\bbeta+\mu_i+u_{it}, \quad i\leq n, t\leq T.
$$
Let
$\rho_{ij}$  denote the correlation between $u_{it}$ and $u_{jt}$, assumed to be time invariant. The   ``cross-sectional independence" test is concerned about the following null hypothesis:
$$
H_0: \rho_{ij}=0,\text{ for all } i\neq j,
$$
that is, under the null hypothesis, the $n\times n$ covariance matrix $\Sig_u$ of $\{u_{it}\}_{i\leq n}$ is diagonal.   In empirical applications,  weak cross-sectional correlations are often present, which results in a sparse covariance $\Sig_u$   with just a few nonzero off-diagonal elements. This results in a sparse vector  $\btheta=(\rho_{12},\rho_{13},...,\rho_{n-1,n})$.  The dimensionality $N=n(n-1)/2$  can be much larger than the number of observations.  Therefore, the power enhancement in sparse alternatives is very important to the testing problem.

There has been a large literature on  high-dimensional  cross-sectional tests. For instance, the literature on testing   the factor pricing model  is found in \cite{GRS},  \cite{MR91},    \cite{BDK}  and \cite{PY}, all  in   quadratic forms.    Moreover, for the mixed   effect panel data model,   most of the existing statistics in the literature are based on the sum of squared residual correlations, which also accumulates many off-diagonal estimation errors in the covariance matrix of $(u_{1t},...,u_{nt})$. The literature includes \cite{BP80},  \cite{PUY}, \cite{BFK}, etc. In addition, our problem is also related  to the test with a restricted parameter space, previously considered by \cite{andrews98}, who improves the power by directing towards the ``relevant" alternatives (also see \cite{hansen03} for a related idea). Recently, \cite{CCK}  proposed a high-dimensional inequality test, and employed an extreme value statistic, whose critical value is determined through applying  the moderate deviation theory on an upper bound of the rejection probability.  In contrast,   the asymptotic distribution of our proposed power enhancement statistic is determined through the pivotal statistic $J_1$, and the power is improved via screening off most of the noises under sparse alternatives.

Two of the referees kindly reminded us a related recent paper   by \cite{GOS}, which   studied estimating and testing about the risk premia in a CAPM model.  While  we also  study a large panel of stock returns as a specific example and  double asymptotics (as $N,T\rightarrow\infty$),  the problems and approaches being considered are very different.  This paper addresses a general problem of enhancing powers under high-dimensional sparse alternatives. 

The remainder of the paper is organized as follows. Section 2 sets up the preliminaries and highlights  the major differences from existing tests. Section 3  presents the main result of power enhancement test. As applications to specific cases, Section 4 and Section 5 respectively study the factor pricing model and test of cross-sectional independence.   Simulation results are presented in Section 6, along with an empirical application to the stocks in the S$\&$P 500 index in Section 7. Section 8 concludes. All the proofs are given in the appendix.

Throughout the paper,  for a symmetric matrix $\bA$, let $\lambda_{\min}(\bA)$ and $\lambda_{\max}(\bA)$ represent its minimum and maximum eigenvalues. Let    $\|\bA\|_2$ and $\|\bA\|_1$  denote its   operator norm and $l_1$-norm respectively,  defined by $\|\bA\|_2=\lambda_{\max}^{1/2}(\bA' \bA)$ and $\max_{i}\sum_j|\bA_{ij}|$. For a vector $\btheta$, define   $\|\btheta\|=(\sum_{j}\theta_j^2)^{1/2}$ and $\|\btheta\|_{\max}=\max_j|\theta_j|$.  For two deterministic sequences $a_T$ and $b_T$, we write $a_T\ll b_T$ (or equivalently $b_T\gg a_T$) if $a_T=o(b_T)$. Also, $a_T\asymp b_T$ if  there are constants $C_1, C_2>0$ so that $C_1b_T\leq a_T\leq C_2b_T$ for all large $T$.  Finally,  we denote $|S|_0$ as the number of elements in a set $S$.

\section{Power Enhancement in high dimensions}

This section introduces power enhancement techniques and provides   heuristics to justify the techniques.  Their differences with related ideas in the literature are also highlighted.

\subsection{Power enhancement}

Consider a testing problem:
$$
H_0: \btheta=\bzero, \qquad  H_a: \btheta \in \Theta_a,
$$
where $\Theta_a\subset\mathbb{R}^N\backslash\{\bzero\}$ is an alternative set.  A typical example is $\Theta_a = \{\btheta: \btheta \not = 0 \}$. Suppose we observe a stationary process $\bD=\{\bD_t\}_{t=1}^T$ of size $T$. Let $J_1(\bD)$ be a certain  test statistic, and for notational simplicity, we  write $J_1=J_1(\bD)$. Often $J_1$ is constructed such that under $H_0$, it  has a non-degenerate  limiting distribution $F$: As $T, N\rightarrow\infty$,
\begin{equation}\label{e2.1}
J_1|H_0\rightarrow^d F.
\end{equation}
For the significance level $q\in(0,1)$, let $F_q$ be the critical value for $J_1$. Then  the critical region is taken as  $\{\bD: J_1>F_q\}$ and satisfies
\begin{equation}\label{e2.2}
    \limsup_{T, N\rightarrow\infty}P(J_1>F_q|H_0) = q.
\end{equation}
This ensures that $J_1$ has a correct asymptotic size.
In addition, it is often the case that $J_1$  has high power against $H_0$ on a subset $\Theta(J_1)\subset \Theta_a$, namely,
\begin{equation}\label{e2.3}
\liminf_{T, N\rightarrow\infty} \inf_{\stheta \in \Theta(J_1)} P(J_1 > F_q|\btheta)\rightarrow   1.
\end{equation}
Typically, $\Theta(J_1)$ consists of those $\btheta's$, whose $l_2$-norm is relatively large, as $J_1$ is normally an omnibus test (e.g. Wald test).

In a data-rich environment,  econometric models often involve high-dimensional parameters in which $\dim(\btheta)=N$ can grow fast with the sample size $T$.   We are particularly interested in \textit{sparse alternatives} $\Theta_s \subset \Theta_a$ under  which $H_0$ is violated only on a couple of exceptional components of $\btheta$. Specifically, when $\btheta\in\Theta_s$, the number of non-vanishing components is much less than $N$.  As a result, its $l_2$-norm is relatively small.  
Therefore,
under  sparse alternative $\Theta_s$,
the omnibus test $J_1$ typically has lower power,
due to the accumulation of high-dimensional estimation errors.
Detailed explanations are given in Section 3.3 below.

We introduce a \textit{power enhancement principle} for high-dimensional sparse testing, by bringing in a data-dependent component $J_0$ that satisfies the  \textbf{Power Enhancement Properties}   as defined in Section 1.
 The introduced component $J_0$ does not serve as a test statistic on its own, but is added to a   classical statistic $J_1$ that is often pivotal (e.g., Wald-statistic), so  the proposed test statistic is defined by
$$
        J=J_0+J_1.
$$
Our introduced ``power enhancement principle" is explained as follows.
\begin{enumerate}
\item  The critical region of $J$  is defined by
$$
    \{\bD: J>F_q\}.
$$
As $J_0\geq0$,   $P(J> F_q|\btheta)\geq P(J_1> F_q|\btheta)$ for all $\btheta \in \Theta_a$. Hence the power of $J$ is at least as large as that of $J_1$.

\item When $\btheta\in\Theta_s$ is a sparse high-dimensional vector under the alternative, the ``classical" test $J_1$ may have low power as $\|\btheta\|$ is typically relatively small. On the other hand, for $\btheta \in \Theta_s$, $J_0$ stochastically dominates $J_1$. As a result, $P(J> F_q|\btheta)> P(J_1> F_q|\btheta)$ strictly holds, so the power of $J_1$ over the set $\Theta_s$ is enhanced after adding $J_0$. Often $J_0$ diverges fast under sparse alternatives $\Theta_s$, which ensures $P(J>F_q|\btheta)\rightarrow1$ for $\btheta \in \Theta_s$.  In contrast, the classical test only has $P(J_1>F_q|\btheta)<c<1$ for some $c\in(0,1)$ and $\btheta \in \Theta_s$, and when $\|\btheta\|$ is sufficiently small, $P(J_1 > F_q|\btheta)$ is approximately $q$.

\item Under mild conditions, $P(J_0=0|H_0)\rightarrow0$. Hence when (\ref{e2.1}) is satisfied, we have
$$
\limsup_{T, N\rightarrow\infty}P(J>F_q|H_0)=q.
$$
Therefore, adding $J_0$ to $J_1$ does not affect the size of the standard test statistic asymptotically. Both $J$ and $J_1$ have the same limiting distribution under $H_0.$

\end{enumerate}


It is important to note that the power is enhanced without sacrificing the size asymptotically. In fact the power enhancement principle can be asymptotically fulfilled under a weaker condition  $J_0|H_0\rightarrow^p0.$  However, we construct $J_0$ so that $P(J=0|H_0)\rightarrow1$  to ensure a good finite sample  size.



\subsection{Construction of power enhancement component}


We construct a specific power enhancement component $J_0$ that satisfies  (a)-(c) of the power enhancement properties simultaneously, and identify the sparse alternatives in $\Theta_s$.  Such a component can be constructed via \textit{screening} as follows.
Suppose we have a consistent estimator $\htheta$ such that $ {\max_{j\leq N}}|\widehat\theta_j-\theta_j|=o_P(1)$. For some  slowly growing sequence  $\delta_{N,T}\rightarrow\infty$ (as $T, N\rightarrow\infty$), define a screening set:
\begin{equation}\label{e2.4}
\widehat S=\{j: |\widehat\theta_j|>\widehat v_j^{1/2}\delta_{N,T}, j=1,...,N\},
\end{equation}
where $\widehat v_j>0$ is a data-dependent   normalizing constant, often taken as the estimated asymptotic variance of $\widehat\theta_j$.
The sequence  $\delta_{N,T}$, called ``high criticism",   is chosen to be  slightly larger than the maximum-noise-level, satisfying: (recall that $\Theta_a$ denotes the alternative set)
\begin{equation}\label{e2.5} 
\inf_{ \stheta \in \Theta_a \cup \{\bzero\}} P(\max_{j\leq N}|\widehat\theta_j-\theta_j|/\widehat v_j^{1/2}<\delta_{N,T}/2| \btheta)\rightarrow 1
\end{equation}
for $\btheta$ under both null and alternate hypotheses.
The screening statistic $J_0$ is then defined as
$$
J_0= \sqrt{N}\sum_{j\in\widehat S}\widehat\theta_j^2\widehat v_j^{-1}= \sqrt{N}\sum_{j=1}^N\widehat\theta_j^2\widehat v_j^{-1}1\{|\widehat\theta_j|>\widehat v_j^{1/2}\delta_{N,T}\}.
$$
By (\ref{e2.4}) and  (\ref{e2.5}), under $H_0:\btheta=0$,
$$
P(J_0=0|H_0)\geq P(\widehat S=\emptyset|H_0) = P(\max_{j\leq N}|\widehat\theta_j| / \widehat v_j^{1/2} \leq\delta_{N,T} |H_0)\rightarrow1.
$$
Therefore $J_0$ satisfies the non-negativeness and no-size-distortion properties.

 Let $\{v_j\}_{j\leq N}$ be the population counterpart of $\{\widehat v_j\}_{j\leq N}$. For instance, one can take $v_j $ as the asymptotic variance of $\widehat\theta_j$, and $\widehat v_j$ as its estimator.  To satisfy the power-enhancement property, note that the screening set mimics  \begin{equation}\label{e2.6}
  S(\btheta) =\left\{j : |\theta_j|>2v_j^{1/2}\delta_{N,T}, j=1,...,N \right\},
\end{equation}
and in particular $S(\bzero) = \emptyset$.  We shall show in Theorem \ref{t3.1} below that  $P(\widehat S= S(\btheta)| \btheta)\rightarrow1$, for all $\btheta \in \Theta_a \cup \{\bzero\}$.
Thus, the subvector  $\htheta_{\widehat S}=(\widehat\theta_j: j\in \widehat S)$
behaves like $\btheta_S=(\theta_j: j\in S(\btheta))$,  which can be interpreted as estimated significant signals.
If $S(\btheta)\neq\emptyset$, then by the definition of $\widehat S$ and $\delta_{N,T}\rightarrow\infty$, we have
\begin{equation*}\label{eq2.11}
P(J_0>\sqrt{N} |S(\btheta)\neq \emptyset)\geq P(\sqrt{N} \sum_{j\in \widehat S}\delta_{N,T}^2>\sqrt{N}|S(\btheta)\neq\emptyset )\rightarrow1.
\end{equation*}
Thus, the power of $J_1$ is \textit{enhanced} on the subset
$$
  \Theta_s \equiv\{\btheta\in\mathbb{R}^N:  S(\btheta)\neq \emptyset\}
=\{\btheta\in\mathbb{R}^N: \max_{j\leq N}\frac{|\theta_j|}{v_j^{1/2}}>2\delta_{N,T}\}.$$
As a byproduct, the screening set consistently identifies the elements of $\btheta$ that violate the null hypothesis.

The  introduced  $J_0$    can be combined with any other  test statistic with an accurate asymptotic size. Suppose $J_1$ is a  ``classical" test statistic.    Our power enhancement test is simply
$$J = J_0 + J_1.$$
For instance, suppose we can consistently estimate the  asymptotic inverse  covariance matrix  of $\widehat\btheta$, denoted by $\widehat\var(\htheta)^{-1}$, then   $J_1$ can be chosen as the standardized Wald-statistic:
$$
J_1= \frac{\htheta'\widehat\var(\htheta)^{-1}\htheta-N}{\sqrt{2N}}.
$$
As a result, the asymptotic distribution of $J$ is  $\mathcal{N}(0,1)$ under the null hypothesis.

 In sparse alternatives where $\|\btheta\|$ may not grow fast with $N$ but $\btheta\in \Theta_s$,
the combined test $J_0+J_1$ can be very powerful. In contrast, we will formally show in Theorem \ref{t3.4}  below that  the conventional Wald test $J_1$  can have very low power on its own.  On the other hand, when the alternative is ``dense" in the sense that $\|\btheta\|$ grows fast with  $N$, the conventional test $J_1$ itself is consistent. In this case, $J$ is still as powerful as $J_1$. Therefore,  if we denote $\Theta(J_1)\subset\mathbb{R}^N/\{\bzero\}$ as the set of alternative $\btheta$'s against which the  classical  $J_1$ test has power converging to one,  then the combined $J=J_0+J_1$ test has power converging to one against   $\btheta$ on
 $$
\Theta_s\cup\Theta(J_1).
 $$
 We shall show in Section 3 that    the power is enhanced  uniformly over $\btheta\in\Theta_s\cup\Theta(J_1)$.  

  \subsection{Comparisons with  thresholding and extreme-value tests}\label{s2.2}

 One of the fundamental differences between our  power enhancement component $J_0$  and existing tests with good power under sparse alternatives is that,   existing test statistics have a non-degenerate distribution under the null, and often require either bootstrap or strong conditions to derive the null distribution.  Such convergences are typically slow and the serious size distortion appears at finite sample.  In contrast, our screening statistic $J_0$ uses ``high criticism" sequence $\delta_{N,T}$  to make  $P(J_0=0|H_0)\rightarrow1$, hence does not serve as a test statistic on its own.    Therefore, the asymptotic null distribution is determined by that of $J_1$, which may not be difficult to derive
 especially when $J_1$ is asymptotically pivotal.  As we shall see in sections below, the required regularity condition is relatively mild,  which makes the power enhancement test applicable  to many econometric problems.

In the high-dimensional testing literature, there are mainly two types  of statistics with good power under sparse alternatives:
  extreme value test and   thresholding  test respectively.
The test based on extreme values studies the maximum deviation from the null hypothesis across the components of $\htheta=(\widehat\theta_1,...,\widehat\theta_N)$, and forms the statistic based on $\max_{j\leq N}|\frac{\widehat\theta_j}{w_j}|^{\delta} $   for some $\delta>0$ and a   weight $w_j$ (e.g., \cite{CLX}, \cite{CCK}).  Such a test statistic typically converges slowly to its asymptotic counterpart.  An alternative test is based on thresholding:   for some $\delta>0$ and pre-determined threshold level $t_T$,
 \begin{equation}\label{e2.7}
 R=\sqrt{T}\sum_{j=1}^N|\frac{\widehat\theta_j}{w_j}|^{\delta}1\{|\widehat\theta_j|>t_{T}w_j\}
 \end{equation}
The accumulation of estimation errors is prevented due to the threshold $1\{|\widehat\theta_j|>t_{T}w_j\}$ (see, e.g.,  \cite{fan} and \cite{ZCX}) for sufficiently large $t_T$.  In a low-dimensional setting, \cite{hansen05} suggested using a threshold to enhance the power in a similar way.

Although (\ref{e2.7}) looks similar to $J_0$, the ideas behind are very different. 
Both extreme value test and thresholding test  require regularity conditions that may be restrictive in econometric applications. For instance, 
it can be difficult to employ the central limit theorem directly on  (\ref{e2.7}), as it requires the covariance between $\widehat\theta_j$ and $\widehat\theta_{j+k}$ decay fast enough as $k\rightarrow\infty$  \citep{ZCX}. 
In cross-sectional testing problems, this essentially requires an explicit ordering among the cross-sectional units which is, however, often unavailable in panel data applications. 
In addition, as \eqref{e2.7} involves effectively limited terms of summations due to thresholding, the asymptotic theory does not provide adequate approximations, resulting size-distortion in applications.
For example, when $t_T$ is taken slightly less than $\max_{j \leq N} |\widehat\theta_j|/w_j$, $R$ becomes the extreme statistic.  When $t_T$ is small (e.g. 0), $R$ becomes a traditional test, which is not powerful in detecting sparse alternatives, though it can have good size properties.








 \section{Asymptotic properties}

\subsection{Main results}

This section  presents the regularity conditions and formally establishes the claimed power enhancement properties. Below we use $P(\cdot|\btheta)$ to denote the probability measure defined from  the sampling distribution with parameter $\btheta$.  Let $\Theta\subset\mathbb{R}^N$ be the parameter space of $\btheta$. When we write $ \inf_{\stheta\in\Theta}{P}(\cdot|\btheta)$, the infimum is taken in the space that covers the union of both null and alternative space.

We begin with a high-level assumption.  In specific applications, they can be verified with primitive conditions.

\begin{assum}\label{a3.1}
As $T, N\rightarrow\infty$, the sequence $\delta_{N,T}\rightarrow\infty$,  and the estimators $\{\widehat\theta_j, \widehat v_j\}_{j\leq N}$ are such that   \\
(i) $\inf_{\stheta\in\Theta}P(\max_{j\leq N}|\widehat\theta_j-\theta_j|/\widehat v_j^{1/2}<\delta_{N,T}/2|\btheta)\rightarrow1;$\\
(ii)   $\inf_{\stheta\in\Theta}P(4/9< \widehat v_j/v_j<16/9,\forall j=1,...,N |\btheta)\rightarrow1.$
\end{assum}

The normalizing constant  $v_j$ is often taken as the asymptotic variance of $\widehat\theta_j$, with $\widehat v_j$ being its consistent estimator.  The constants $4/9$ and $16/9$ in condition (ii) are not optimally chosen, as this condition only requires   $\{\widehat v_j\}_{j\leq N}$  be  \textit{not-too-bad} estimators of their population counterparts.

In many high-dimensional problems with strictly stationary data that satisfy strong mixing conditions,  following from the large-deviation theory,   typically, $\max_{j\leq N}|\widehat\theta_j-\theta_j|/\widehat v_j^{1/2}=O_P(\sqrt{\log N})$. Therefore,  we shall fix
\begin{equation} \label{e3.1}
    \delta_{N,T}=\log(\log T)\sqrt{\log N},
\end{equation}
which is a high criticism that slightly dominates the standardized noise level. We shall provide primitive  conditions for this choice of $\delta_{N,T}$ in the subsequent sections, so that Assumption \ref{a3.1} holds.

Recall that  $\widehat S$ and $S(\btheta)$ are defined by  (\ref{e2.4}) and (\ref{e2.6}) respectively for a given $\btheta\in\Theta$ and its consistent estimator $\htheta$. In particular, $  S(\btheta) =\left\{j : |\theta_j|>2v_j^{1/2}\delta_{N,T}, j=1,...,N \right\}$, so  under $H_0:\btheta=0$, $S(\btheta)=\emptyset$. Note that $\Theta$ denotes the parameter space containing both the null and alternative hypotheses. The following theorem    characterizes the asymptotic behavior of  $
J_0=\sqrt{N}\sum_{j\in\widehat S}\widehat\theta_j^2\widehat v_j^{-1}$ under both the null and alternative hypotheses.

 Define the ``grey area set" as
$$
    \mathcal{G}(\btheta)=\{j: |\theta_j|/v_j^{1/2}\asymp \delta_{N,T}, j=1,...,N \}.
$$

\begin{thm}\label{t3.1}
Let Assumption \ref{a3.1} hold. As $T, N\rightarrow\infty$, we have,
 under $H_0:\btheta=0$,  $P(\widehat S=\emptyset|H_0)\rightarrow1$. Hence
$$
    P(J_0=0|H_0)\rightarrow1 \quad \mbox{and} \quad \inf_{ \{\stheta\in\Theta: S(\stheta)\neq\emptyset\}}P(J_0>\sqrt{N}|\btheta)\rightarrow1.
$$
In addition,
$$
    \inf_{\stheta\in\Theta}{P}(S(\btheta)\subset \widehat S|\btheta)\rightarrow1
    \quad \mbox{and} \quad  \inf_{\stheta\in \Theta}P(\widehat S\setminus S(\btheta) \subset \mathcal{G}(\btheta)|\btheta)\to 1.
$$
\end{thm}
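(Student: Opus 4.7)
The plan is to reduce every claim to a single \emph{good event}
\[
E_T = \Big\{\max_{j\leq N}\frac{|\widehat\theta_j-\theta_j|}{\widehat v_j^{1/2}}<\frac{\delta_{N,T}}{2}\Big\}\cap\Big\{\frac{4}{9}<\frac{\widehat v_j}{v_j}<\frac{16}{9},\ \forall j\leq N\Big\},
\]
whose probability is uniformly close to $1$ by Assumption \ref{a3.1}. On $E_T$ we have, in particular, $\tfrac{2}{3}v_j^{1/2}<\widehat v_j^{1/2}<\tfrac{4}{3}v_j^{1/2}$. Everything else is deterministic bookkeeping with the triangle inequality, so once the good event is isolated the five conclusions will all fall out by simple case analysis on whether $|\theta_j|$ is above or below the thresholds defining $S(\btheta)$ and $\widehat S$.

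The steps, in order, would be: \textbf{(1) Null behavior.} On $E_T\cap\{\btheta=\bzero\}$, for every $j$, $|\widehat\theta_j|=|\widehat\theta_j-\theta_j|<\widehat v_j^{1/2}\delta_{N,T}/2<\widehat v_j^{1/2}\delta_{N,T}$, so $\widehat S=\emptyset$ and hence $J_0=0$; this gives the first two displays. \textbf{(2) Inclusion $S(\btheta)\subset\widehat S$.} If $j\in S(\btheta)$, then $|\theta_j|>2v_j^{1/2}\delta_{N,T}$; using $\widehat v_j^{1/2}<\tfrac{4}{3}v_j^{1/2}$, the triangle inequality yields
\[
|\widehat\theta_j|\geq|\theta_j|-\tfrac{1}{2}\widehat v_j^{1/2}\delta_{N,T}>2v_j^{1/2}\delta_{N,T}-\tfrac{2}{3}v_j^{1/2}\delta_{N,T}=\tfrac{4}{3}v_j^{1/2}\delta_{N,T}>\widehat v_j^{1/2}\delta_{N,T},
\]
so $j\in\widehat S$. \textbf{(3) Divergence under a nonempty signal set.} Combined with Step 2: on $E_T$, for any $j\in S(\btheta)$, $|\widehat\theta_j|^2/\widehat v_j>\delta_{N,T}^2$, hence
\[
J_0=\sqrt{N}\sum_{j\in\widehat S}\widehat\theta_j^{\,2}\widehat v_j^{-1}\geq \sqrt{N}\,\delta_{N,T}^{2}>\sqrt{N}
\]
for all $T$ large (since $\delta_{N,T}\to\infty$), giving the third display uniformly over $\{\btheta\in\Theta:S(\btheta)\neq\emptyset\}$.

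\textbf{(4) False discoveries lie in the grey zone.} If $j\in\widehat S\setminus S(\btheta)$, then on $E_T$ we have $|\widehat\theta_j|>\widehat v_j^{1/2}\delta_{N,T}$ and $|\theta_j|\leq 2v_j^{1/2}\delta_{N,T}$; the reverse triangle inequality together with $\widehat v_j^{1/2}>\tfrac{2}{3}v_j^{1/2}$ gives
\[
|\theta_j|\geq|\widehat\theta_j|-\tfrac{1}{2}\widehat v_j^{1/2}\delta_{N,T}>\tfrac{1}{2}\widehat v_j^{1/2}\delta_{N,T}>\tfrac{1}{3}v_j^{1/2}\delta_{N,T},
\]
so $|\theta_j|/v_j^{1/2}\in(\tfrac{1}{3}\delta_{N,T},2\delta_{N,T}]$, i.e.\ $j\in\mathcal{G}(\btheta)$. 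Taking infima over $\btheta\in\Theta$ of the probabilities on the complement of $E_T$ (which tend to $0$ by Assumption \ref{a3.1}) finishes the last two displays.

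There is essentially no serious analytic obstacle here: the theorem is a deterministic consequence of the event $E_T$, and Assumption \ref{a3.1} was explicitly calibrated so that $E_T$ has uniformly vanishing complement probability. The only care required is in the constants — one must verify that the factor $2$ in the definition of $S(\btheta)$ (versus the factor $1$ in $\widehat S$) and the fractions $4/9$, $16/9$ in Assumption \ref{a3.1}(ii) are compatible with a strict triangle-inequality argument, as shown in Steps 2 and 4. The uniformity in $\btheta$ in each display is automatic because every inequality is derived on the single event $E_T$ whose probability is bounded below uniformly over $\Theta$.
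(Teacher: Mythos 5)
Your proposal is correct and follows essentially the same route as the paper: both reduce all five claims to the single good event guaranteed by Assumption \ref{a3.1} (with the $\delta_{N,T}/2$ noise bound and the $4/9 < \widehat v_j/v_j < 16/9$ ratio bound) and then carry out the same deterministic triangle-inequality case analysis, yielding $S(\btheta)\subset\widehat S$, $\widehat S\setminus S(\btheta)\subset\{j:\delta_{N,T}/3<|\theta_j|/v_j^{1/2}<2\delta_{N,T}\}\subset\mathcal{G}(\btheta)$, emptiness of $\widehat S$ under $H_0$, and $J_0\geq\sqrt{N}\delta_{N,T}^2>\sqrt{N}$ when $S(\btheta)\neq\emptyset$. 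The constants check out exactly as in the paper's argument, and the uniformity over $\Theta$ is handled the same way.
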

Besides the asymptotic behavior of $J_0$, 
Theorem \ref{t3.1} also provides  a   ``sure screening" property of $\widehat S$.
Sometimes we wish to find out the identities  of  the elements in $S(\btheta)  $, which  represent the   components of $\btheta$ that deviate from zero.   Therefore, we are particularly  interested in a type of alternative hypothesis that satisfies the following
\textit{empty grey area} condition.

\begin{assum}[Empty grey area]\label{a3.2}
    For any $\btheta\in\Theta$,     $\mathcal{G}(\btheta)=\emptyset$.
\end{assum}

Theorem~\ref{t3.1} shows that the ``large" $\theta_j$'s can be selected with no missing discoveries and Corollary~\ref{c31} below further asserts that the selection is consistent with no false discoveries either, under both the null and alternative hypotheses.

\begin{cor}\label{c31} Under Assumptions  \ref{a3.1}, \ref{a3.2},  as $T,N\rightarrow\infty$,
$$
  \inf_{\stheta\in\Theta} P(\widehat S=S(\btheta)|\btheta)\rightarrow 1.
$$
   \end{cor}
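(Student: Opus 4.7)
The proof plan is essentially a direct combination of the two containments already established in Theorem \ref{t3.1} with the empty grey area condition in Assumption \ref{a3.2}. The result is really a corollary in the strict sense: no new probabilistic machinery is needed, only set-theoretic bookkeeping together with a union bound on the exceptional events.

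More concretely, I would proceed as follows. First, introduce the two random events
\[
A(\btheta) = \{S(\btheta)\subset \widehat S\},\qquad B(\btheta)=\{\widehat S\setminus S(\btheta)\subset \mathcal{G}(\btheta)\},
\]
so that Theorem \ref{t3.1} guarantees
\[
\inf_{\stheta\in\Theta} P(A(\btheta)\mid \btheta)\to 1,\qquad \inf_{\stheta\in\Theta} P(B(\btheta)\mid \btheta)\to 1.
\]
Second, invoke Assumption \ref{a3.2}: since $\mathcal{G}(\btheta)=\emptyset$ for every $\btheta\in\Theta$, the event $B(\btheta)$ reduces to $\{\widehat S\subset S(\btheta)\}$. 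Hence on $A(\btheta)\cap B(\btheta)$ both inclusions $S(\btheta)\subset\widehat S$ and $\widehat S\subset S(\btheta)$ hold simultaneously, which is exactly the event $\{\widehat S=S(\btheta)\}$.

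Third, apply the elementary bound $P(A\cap B)\geq 1-P(A^c)-P(B^c)$ pointwise in $\btheta$ to obtain
\[
\inf_{\stheta\in\Theta} P(\widehat S=S(\btheta)\mid \btheta)\;\geq\; 1-\sup_{\stheta\in\Theta}P(A(\btheta)^c\mid \btheta)-\sup_{\stheta\in\Theta}P(B(\btheta)^c\mid \btheta),
\]
and let $T,N\to\infty$; both of the subtracted terms vanish by Theorem \ref{t3.1}, delivering the desired conclusion.

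There is no real obstacle: the substantive work has been done in Theorem \ref{t3.1}, and Assumption \ref{a3.2} is precisely what is needed to upgrade ``false discoveries lie in the grey area'' to ``no false discoveries''. The only mild point worth being careful about is that both convergences in Theorem \ref{t3.1} are \emph{uniform} in $\btheta\in\Theta$, which is what allows the union bound to be taken inside the infimum rather than outside.
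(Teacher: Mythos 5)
Your proof is correct and follows essentially the same route as the paper: both derive the result from the two containments of Theorem \ref{t3.1} together with the empty grey area condition, using the uniformity of those convergences. The paper's version is merely terser (it writes out only the $\widehat S\setminus S(\btheta)=\emptyset$ direction explicitly), whereas you spell out the intersection of events and the union bound, which is the same argument.
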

\begin{proof}
Corollary  \ref{c31} follows immediately from Theorem \ref{t3.1} and Assumption \ref{a3.2}:
$$
\inf_{\stheta\in \Theta}P(\widehat S\setminus S(\btheta)=\emptyset|\btheta)\geq
\inf_{\stheta\in \Theta}P(\widehat S\setminus S(\btheta) \subset \mathcal{G}(\btheta)|\btheta)\to 1.
$$\end{proof}

\begin{remark}

Corollary \ref{c31} and  its required assumptions (Assumptions \ref{a3.1} and \ref{a3.2}) are stated uniformly over $\btheta\in\Theta$.   The empty grey area condition (Assumption \ref{a3.2}) rules out   $\btheta$'s  that have components on the boundary of the screening set.
 Intuitively, when a component $\theta_j$ is on the boundary of the screening, it is hard to decide whether or not to eliminate it from the screening step. Note that the boundary of the screening depends on $(N, T)$, which is similar in spirit to the local alternatives in classical   testing problems, and  is also a common practice for asymptotic analysis of high-dimensional tests (e.g., \cite{Cai10, CCK}).

\end{remark}

We are now ready to formally show the power enhancement argument. The enhancement is achieved uniformly on the following set:
\begin{equation} \label{e3.2}
\Theta_s = \{\btheta\in\Theta: \max_{j\leq N}\frac{|\theta_j|}{v_j^{1/2}}>2\delta_{N,T}\}.
\end{equation}
In particular, if $\widehat\theta_j$ is  $\sqrt{T}$-consistent, and $v_j^{1/2}$ is the asymptotic standard deviation of $\widehat\theta_j$, then $\sigma_j=\sqrt{Tv_j}$ is bounded away from both zero and infinity.  Using  (\ref{e3.1}), we have
$$
    \Theta_s=\{\btheta\in\Theta: \max_{j\leq N}|\theta_j|/\sigma_j>2\log(\log T)\sqrt{\frac{\log N}{T}}\}.
$$
This is a relatively weak condition on the strength of the maximal signal in order to be detected by $J_0$.

 A test is said to have high power uniformly on a set $\Theta^\star \subset \mathbb{R}^N\setminus\{ {\bf 0}\}$ if
 $$
    \inf_{\stheta\in \Theta^\star}P(\text{reject $H_0$ by the test}|\btheta)\rightarrow1.
 $$
 For a given distribution function $F$, let $F_q$ denote its $q$th quantile.

\begin{thm}\label{t3.2}
Let  Assumptions  \ref{a3.1}-\ref{a3.2} hold.  Suppose  there is a test $J_1$ such that
\begin{itemize}\itemsep 0 in
\item [(i)] it has an asymptotic non-degenerate null distribution $F$, and the critical region takes the form  $\{\bD: J_1>F_q\}$ for the significance level $q\in(0, 1)$,
\item [(ii)] it has   high power uniformly on some set $\Theta(J_1)\subset\Theta$,
\item [(iii)] there is $c>0$ so that  $\inf_{\stheta\in\Theta_s}P(c\sqrt{N}+J_1>F_q|\btheta)\rightarrow1, $ as $T,N\rightarrow\infty$.
\end{itemize}
Then the power enhancement   test $J=J_0+J_1$ has the asymptotic null distribution $F$, and has  high power uniformly on the set $\Theta_s \cup \Theta(J_1)$: as $T,N\rightarrow\infty$
$$
\inf_{\stheta\in\Theta_s\cup\Theta(J_1)}P(J>F_q|\btheta)\rightarrow1.
$$
\end{thm}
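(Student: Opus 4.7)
The plan is to split the argument into two independent pieces: first verify that appending $J_0$ to $J_1$ does not disturb the asymptotic null distribution of $J_1$, and then establish uniform power on $\Theta_s \cup \Theta(J_1)$ by handling the two parts of the union separately. Both pieces rely directly on Theorem~\ref{t3.1} and on the three conditions imposed on $J_1$.

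\textbf{Null distribution.} First I would invoke Theorem~\ref{t3.1}, which under Assumption~\ref{a3.1} gives $P(J_0=0\,|\,H_0)\to 1$. Since $J=J_0+J_1$, on the event $\{J_0=0\}$ we have $J=J_1$, so $P(J=J_1\,|\,H_0)\to 1$. A routine Slutsky-type transfer then yields, for every continuity point $x$ of $F$,
\[
|P(J\leq x\,|\,H_0)-P(J_1\leq x\,|\,H_0)|\leq P(J\neq J_1\,|\,H_0)\to 0,
\]
and hence $J\,|\,H_0\rightarrow^d F$ together with $\limsup P(J>F_q\,|\,H_0)=q$, which is the first claim of the theorem.

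\textbf{Uniform power.} For $\btheta\in\Theta(J_1)$ the non-negativity $J_0\geq 0$ gives $\{J_1>F_q\}\subset\{J>F_q\}$, so condition (ii) immediately produces $\inf_{\stheta\in\Theta(J_1)}P(J>F_q\,|\,\btheta)\to 1$. For $\btheta\in\Theta_s$, I would first note that $\Theta_s\subset\{\btheta:S(\btheta)\neq\emptyset\}$, because the defining condition $\max_j|\theta_j|/v_j^{1/2}>2\delta_{N,T}$ forces at least one coordinate into $S(\btheta)$. Theorem~\ref{t3.1} then gives $P(\widehat S\supset S(\btheta)\,|\,\btheta)\to 1$ uniformly on $\Theta_s$, so $|\widehat S|_0\geq 1$ with high probability; on that event every $j\in\widehat S$ satisfies $\widehat\theta_j^2/\widehat v_j>\delta_{N,T}^2$ by the very definition of $\widehat S$, and hence
\[
J_0=\sqrt{N}\sum_{j\in\widehat S}\widehat\theta_j^2/\widehat v_j\;\geq\;\sqrt{N}\,\delta_{N,T}^2\,|\widehat S|_0\;\geq\;\sqrt{N}\,\delta_{N,T}^2.
\]
Because $\delta_{N,T}\to\infty$, for the constant $c>0$ supplied by (iii) we obtain $P(J_0>c\sqrt{N}\,|\,\btheta)\to 1$ uniformly on $\Theta_s$. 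Combining this with condition (iii) via a union bound,
\[
P(J>F_q\,|\,\btheta)\;\geq\;P\bigl(J_0>c\sqrt{N}\,|\,\btheta\bigr)+P\bigl(c\sqrt{N}+J_1>F_q\,|\,\btheta\bigr)-1\;\to\;1,
\]
uniformly on $\Theta_s$, which together with the $\Theta(J_1)$ case gives the desired uniform power statement on $\Theta_s\cup\Theta(J_1)$.

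\textbf{Where the work is.} There is no deep analytic obstacle here: the theorem is essentially a uniform Slutsky application coupled with the screening conclusions of Theorem~\ref{t3.1}. The only subtle point is that, on $\Theta_s$, the pivotal statistic $J_1$ could in principle drift to $-\infty$ fast enough to cancel the divergence of $J_0$; condition (iii) is precisely the technical assumption that rules this out. Once (iii) is in force, the remaining work reduces to upgrading the Theorem~\ref{t3.1} bound $J_0>\sqrt{N}$ to $J_0>c\sqrt{N}$ for the arbitrary fixed $c$ from (iii), which follows from the sharper screening-set lower bound $J_0\geq \sqrt{N}\,\delta_{N,T}^2$ together with $\delta_{N,T}\to\infty$.
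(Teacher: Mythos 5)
Your proof is correct and follows essentially the same route as the paper: the null distribution via $P(J_0=0|H_0)\rightarrow 1$, power on $\Theta(J_1)$ from $J_0\geq 0$, and power on $\Theta_s$ from the fact that $\widehat S\supset S(\btheta)\neq\emptyset$ with probability tending to one forces $J_0\geq\sqrt{N}\delta_{N,T}^2\geq c\sqrt{N}$, combined with condition (iii). Your explicit union bound is just a spelled-out version of the paper's final chain of inequalities, so there is nothing substantive to add.
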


The three required conditions for $J_1$ are easy to understand: Conditions (i) and (ii) respectively require the size and power conditions for $J_1$. Condition (iii) requires $J_1$ be dominated by $J_0$   under $\Theta_s$.  This condition is not restrictive  since  $J_1$ is typically standardized (e.g., \cite{donald2003empirical}).

Theorem \ref{t3.2} also shows that $J_1$ and $J$  have the critical regions  $\{\bD: J_1 >F_q\}$ and $\{\bD: J>F_q\}$ respectively, but the power is enhanced from $\Theta(J_1)$  to $\Theta_s\cup\Theta(J_1)$.  In high-dimensional testing problems with a  fast-growing dimension, $\Theta_s\cup\Theta(J_1)$ can be much larger than $\Theta(J_1)$. As a result, the power of $J_1$ can be substantially enhanced by adding $J_0.$

 \subsection{Power enhancement for quadratic tests}

 As an example of $J_1$, we consider   the widely used  quadratic test statistic, which is asymptotically pivotal:
$$
J_{Q}=\frac{T\htheta'\bV\htheta-N(1+\mu_{N,T})}{\xi_{N,T}\sqrt{N}},
$$
where $\mu_{N,T}$ and $\xi_{N,T}$ are deterministic sequences that may depend on $(N,T)$ and $\mu_{N,T} \rightarrow0$, $\xi_{N,T}\rightarrow\xi\in(0,\infty)$. The weight matrix $\bV$ is positive definite, whose eigenvalues are bounded away from both zero and infinity. Here $T\bV$ is often  taken to be the inverse of the asymptotic covariance matrix of $\htheta$.  Other popular choices are $\bV = \diag (\sigma_1^{-2}, \cdots, \sigma_N^{-2})$ with $\sigma_j=\sqrt{Tv_j}$ \citep{bai1996effect,chen2010two,PY} and $\bV = \bI_{N}$, the $N \times N$ identity matrix.
We set $J_1=J_Q$, whose power enhancement version is $J=J_0+J_Q.$ For the moment, we shall assume $\bV$ to be known,  and just focus on the  power enhancement properties.   
We will deal with unknown $\bV$ for testing factor pricing problem in the next section.

\begin{assum}\label{a3.3} (i) There is a non-degenerate distribution $F$ so that under $H_0,$ $J_Q\rightarrow^d F$\\
(ii) The critical value $F_q=O(1)$ and the critical region of $J_Q$  is $\{\bD: J_Q>F_q\}$,\\
(iii) $\bV$ is positive definite, and there exist two positive constants $C_1$ and $C_2$ such that $ C_1 \leq \lambda_{\min}(\bV) \leq \lambda_{\max}(\bV) \leq C_2$.\\
(iv)  $ C_3\leq Tv_j  \leq C_4, j=1,..., N$ for  positive constants $C_3$ and $C_4$.
\end{assum}

Analyzing the power properties of $J_Q$ and applying Theorem \ref{t3.2}, we  obtain the following theorem. Recall that  $\delta_{N,T}$ and $\Theta_s$ are defined by (\ref{e3.1}) and (\ref{e3.2}).

\begin{thm}\label{t3.3}
Under Assumptions \ref{a3.1}-\ref{a3.3},   the power enhancement test $J=J_0+J_Q$ satisfies: as $T,N\rightarrow\infty$,
\begin{itemize}
\item [(i)] under the null hypothesis $H_0: \btheta=\bzero$, $J \to^d F$,

\item [(ii)] there is $C>0$ so that $J$ has high power uniformly on the set
$$
  \Theta_s \cup \{\btheta\in\Theta: \|\btheta\|^2>C\delta_{N,T}^2N/T\}\equiv\Theta_s\cup\Theta(J_Q);
$$
that is,
$
\inf_{\stheta\in\Theta_s\cup\Theta(J_Q)}P(J>F_q|\btheta)\rightarrow1
$  for any $q\in(0,1)$.
\end{itemize}

\end{thm}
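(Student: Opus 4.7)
The plan is to apply Theorem~\ref{t3.2} with $J_1 = J_Q$, so the whole task reduces to verifying the three hypotheses of that theorem for the quadratic statistic.

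Part (i) of the conclusion is essentially immediate. By Theorem~\ref{t3.1} we have $P(J_0 = 0\mid H_0) \to 1$, so on an event of probability tending to one $J = J_Q$ under $H_0$; combining this with Assumption~\ref{a3.3}(i), $J \to^d F$.

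For part (ii), I would verify the three bullet points of Theorem~\ref{t3.2}. Bullet (i) of Theorem~\ref{t3.2} is exactly Assumption~\ref{a3.3}(i)--(ii). For bullet (iii) (the ``not too negative'' condition), I would use that $\bV$ is positive definite, hence $T\widehat\btheta'\bV\widehat\btheta \geq 0$ almost surely, so
\[
J_Q \;\geq\; -\,\frac{N(1+\mu_{N,T})}{\xi_{N,T}\sqrt N} \;=\; -\,\frac{(1+\mu_{N,T})\sqrt N}{\xi_{N,T}}.
\]
Picking any $c > 1/\xi$, we get $c\sqrt N + J_Q \geq (c - \xi^{-1} - o(1))\sqrt N \to \infty$ deterministically, dwarfing the bounded $F_q$; so bullet (iii) holds trivially and in particular uniformly in $\btheta$.

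The substantive work is bullet (ii), namely producing a set $\Theta(J_Q)$ on which $J_Q$ has power converging to one. Here I would decompose
\[
T\widehat\btheta'\bV\widehat\btheta
= T(\widehat\btheta-\btheta)'\bV(\widehat\btheta-\btheta)
+ 2T\btheta'\bV(\widehat\btheta-\btheta)
+ T\btheta'\bV\btheta.
\]
The first term is an $H_0$-type quadratic form centered around $N(1+\mu_{N,T})$ with fluctuations of order $\sqrt N$ (by Assumption~\ref{a3.3}(i), applied with $\btheta$ replaced by $0$ after recentering; concretely one argues that $[T(\widehat\btheta-\btheta)'\bV(\widehat\btheta-\btheta) - N(1+\mu_{N,T})]/(\xi_{N,T}\sqrt N) = O_P(1)$). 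The cross term is controlled by Cauchy--Schwarz as $|2T\btheta'\bV(\widehat\btheta-\btheta)| \leq 2\sqrt{T\btheta'\bV\btheta}\cdot\sqrt{T(\widehat\btheta-\btheta)'\bV(\widehat\btheta-\btheta)} = O_P(\sqrt{T\|\btheta\|^2\cdot N})$, using $\lambda_{\max}(\bV)\leq C_2$ and the order of the leading quadratic. The signal term is bounded below by $C_1 T\|\btheta\|^2$ because $\lambda_{\min}(\bV)\geq C_1$. Putting these together,
\[
J_Q \;\geq\; \frac{C_1 T\|\btheta\|^2 - O_P(\sqrt{N T\|\btheta\|^2}) - O_P(\sqrt N)}{\xi_{N,T}\sqrt N}.
\]
If $\|\btheta\|^2 > C\delta_{N,T}^2 N/T$, then $T\|\btheta\|^2/\sqrt N > C\delta_{N,T}^2\sqrt N$, which diverges and, for $C$ chosen large enough, strictly dominates both error terms uniformly in $\btheta$, giving $J_Q \to \infty$ in probability and hence $P(J_Q > F_q\mid\btheta) \to 1$ uniformly on $\Theta(J_Q):=\{\btheta\in\Theta:\|\btheta\|^2 > C\delta_{N,T}^2 N/T\}$. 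Theorem~\ref{t3.2} then delivers $\inf_{\stheta\in\Theta_s\cup\Theta(J_Q)}P(J > F_q\mid\btheta)\to 1$.

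The main obstacle is the uniformity in $\btheta$ in bullet (ii): the $O_P$ bounds on the leading quadratic and on the cross term must be made uniform over the alternative set, which requires confirming that the remainder bound in Assumption~\ref{a3.3}(i) holds not only at $\btheta=\bzero$ but, after recentering, under general $\btheta$---typically a sampling-distribution invariance that one gets from the structure of $\widehat\btheta-\btheta$ in the specific models (so in the abstract setup it is essentially a high-level assumption embedded in 3.3(i)). Assuming that, the remainder is routine algebra with the eigenvalue bounds $C_1\leq\lambda_{\min}(\bV)\leq\lambda_{\max}(\bV)\leq C_2$.
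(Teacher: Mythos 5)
Your overall plan (reduce to Theorem \ref{t3.2} with $J_1=J_Q$), your part (i), and your verification of bullet (iii) all coincide with the paper's proof (the paper takes $c>2/\xi$ and uses $T\htheta'\bV\htheta\ge 0$, exactly your bound). The gap is in bullet (ii). Your lower bound for $J_Q$ on $\Theta(J_Q)$ rests on the claim that $[T(\htheta-\btheta)'\bV(\htheta-\btheta)-N(1+\mu_{N,T})]/(\xi_{N,T}\sqrt N)=O_P(1)$ under a general alternative $\btheta$, ``by Assumption \ref{a3.3}(i) after recentering,'' and your Cauchy--Schwarz bound on the cross term uses the same claim. But Assumption \ref{a3.3}(i) is a statement about the null distribution of $J_Q$ only; nothing in the hypotheses of Theorem \ref{t3.3} gives the recentered convergence under $\btheta\neq\bzero$, let alone uniformly over $\Theta(J_Q)$, which is exactly the uniformity bullet (ii) requires. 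You flag this yourself as ``the main obstacle'' and then treat it as embedded in \ref{a3.3}(i); it is not, so as written this step fails.

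The paper's proof shows the extra input is unnecessary: since $(\htheta-\btheta)'\bV(\htheta-\btheta)\ge 0$, one can drop it and use $\htheta'\bV\htheta\ge\btheta'\bV\btheta-2|(\htheta-\btheta)'\bV\btheta|$, so the only stochastic control needed is Assumption \ref{a3.1}, which \emph{is} stated uniformly over $\Theta$. Concretely, with $\bD=\diag(v_1,\dots,v_N)$, Assumptions \ref{a3.1} and \ref{a3.3}(iv) give, uniformly in $\btheta$ and with probability tending to one, $\|(\htheta-\btheta)'\bD^{-1/2}\|^2=O(N\delta_{N,T}^2)$ and $\|\bD\|_2=O(1/T)$, hence $|(\htheta-\btheta)'\bV\btheta|\le\|(\htheta-\btheta)'\bD^{-1/2}\|\,\|\bD^{1/2}\bV\btheta\|\le\tfrac14\btheta'\bV\btheta$ once $\|\btheta\|^2>C\delta_{N,T}^2N/T$ with $C$ large enough. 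Then $T\htheta'\bV\htheta\ge\tfrac12 T\btheta'\bV\btheta\ge\tfrac12\lambda_{\min}(\bV)C\delta_{N,T}^2N$, which dominates both the centering $N(1+\mu_{N,T})$ and $F_q\xi_{N,T}\sqrt N$ because $\delta_{N,T}\to\infty$; no distributional statement about $\htheta-\btheta$ under the alternative is ever used. If you prefer to keep your three-term decomposition, the same repair works: on the uniform event of Assumption \ref{a3.1}, $(\widehat\theta_j-\theta_j)^2\le\widehat v_j\delta_{N,T}^2/4\le(16/9)v_j\delta_{N,T}^2/4\le C_4\delta_{N,T}^2/T$ by \ref{a3.3}(iv), so $T(\htheta-\btheta)'\bV(\htheta-\btheta)\le C_2T\|\htheta-\btheta\|^2=O(N\delta_{N,T}^2)$ uniformly, which (together with the resulting cross-term bound $O(\delta_{N,T}\sqrt{NT\|\btheta\|^2})$) is still dominated by the signal $C_1T\|\btheta\|^2>C_1C\delta_{N,T}^2N$ for $C$ large. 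Replace your appeal to a recentered \ref{a3.3}(i) with this Assumption-\ref{a3.1}-based bound and your argument becomes correct.
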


\subsection{Low power of quadratic statistics under sparse alternatives}

When $J_Q$ is used on its own,  it can suffer from a low power  under sparse alternatives  if  $N$ grows much faster than the sample size, even though  it  has been commonly used in the econometric literature. Mainly, $T\htheta'\bV\htheta$
 aggregates high-dimensional  estimation errors under $H_0$,  which become large with a non-negligible probability and potentially override the sparse signals under the alternative. 
The following result gives this intuition a more precise description.

To simplify our discussion, we shall focus on the Wald-test  with $T\bV$ being the  inverse
of the asymptotic covariance matrix of $\widehat\btheta$, assumed to exist.  Specifically, we assume  the standardized $T\widehat\btheta'\bV\widehat\btheta$ to be asymptotically normal under $H_0$:
\begin{equation}\label{e3.3}
\frac{T\htheta'\bV\htheta-N}{\sqrt{2N}}|H_0\rightarrow^d\mathcal{N}(0,1).
\end{equation}
This is one of the most commonly seen cases in various testing problems.  The diagonal entries of $\frac{1}{T}\bV^{-1}$ are given by $\{v_j\}_{j\leq N}$.


\begin{thm}\label{t3.4}
Suppose that (\ref{e3.3})  holds with  $\|\bV\|_1<C$ and  $\|\bV^{-1}\|_1<C$  for some $C>0$. Under Assumptions \ref{a3.1}- \ref{a3.3},   $T=o(\sqrt{N})$ and $\log N=o(T^{1-\gamma})$ for some $0<\gamma<1$, the quadratic test $J_Q$ has low power at the sparse alternative  $\Theta_b$ given by
$$
    \Theta_b =\{\btheta\in\Theta: \sum_{j=1}^N 1\{\theta_j\neq0\}=o(\sqrt{N}/T), \|\btheta\|_{\max} = O(1) \}.
$$
In other words,   $\forall \btheta\in\Theta_b$, for any significance level $q$,
 $$
    \lim_{T, N \to \infty} P(J_Q>z_q|\btheta) = q,
 $$
 where $z_q$ is the $q$th quantile of standard normal distribution.
\end{thm}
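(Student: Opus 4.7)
The plan is to decompose the quadratic form $T\htheta'\bV\htheta$ around the true parameter and show that the signal and cross-product contributions are asymptotically negligible, so that $J_Q$ retains its null limit $\mathcal{N}(0,1)$ even when $\btheta\in\Theta_b$. Setting $\be=\htheta-\btheta$ and using the symmetry of $\bV$,
\[
J_Q \;=\; \frac{T\be'\bV\be - N}{\sqrt{2N}} \;+\; \frac{2T\btheta'\bV\be}{\sqrt{2N}} \;+\; \frac{T\btheta'\bV\btheta}{\sqrt{2N}}.
\]
Since Assumption~\ref{a3.1} is imposed uniformly over $\btheta\in\Theta$ and hypothesis (\ref{e3.3}) is in essence a CLT for the noise component, I would argue that the first summand still converges in distribution to $\mathcal{N}(0,1)$ under any $\btheta\in\Theta_b$; the task then reduces to showing that the other two terms are $o_P(1)$.

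Let $s=\sum_{j=1}^N \mathbf{1}\{\theta_j\neq 0\}$, so that $s=o(\sqrt{N}/T)$ and $\|\btheta\|^2\leq s\|\btheta\|_{\max}^2=O(s)$. For the signal term, Assumption~\ref{a3.3}(iii) gives
\[
T\btheta'\bV\btheta \;\leq\; T\lambda_{\max}(\bV)\|\btheta\|^2 \;\leq\; C_2\,Ts\,\|\btheta\|_{\max}^2 \;=\; o(\sqrt{N}),
\]
hence dividing by $\sqrt{2N}$ yields $o(1)$. For the cross term, I would use the sparsity of $\btheta$ through a H\"older-type bound: $|\btheta'\bV\be|\leq \|\btheta\|_1\|\bV\be\|_{\max}\leq \|\btheta\|_1\|\bV\|_1\|\be\|_{\max}$, with $\|\btheta\|_1\leq s\|\btheta\|_{\max}=O(s)$ and $\|\bV\|_1\leq C$ (assumed). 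Assumption~\ref{a3.3}(iv) gives $v_j^{1/2}\asymp 1/\sqrt{T}$, which combined with parts (i)--(ii) of Assumption~\ref{a3.1} yields $\|\be\|_{\max}=O_P(\delta_{N,T}/\sqrt{T})$. Therefore
\[
\frac{T|\btheta'\bV\be|}{\sqrt{N}} \;=\; O_P\!\left(\frac{s\sqrt{T}\,\delta_{N,T}}{\sqrt{N}}\right) \;=\; o_P\!\left(\frac{\delta_{N,T}}{\sqrt{T}}\right) \;=\; o_P(1),
\]
where the last step uses $\delta_{N,T}^2/T=(\log\log T)^2(\log N)/T\to 0$ under $\log N=o(T^{1-\gamma})$.

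Combining the three pieces, $J_Q\to^d \mathcal{N}(0,1)$ for every $\btheta\in\Theta_b$, so that $P(J_Q>z_q|\btheta)\to q$. The main obstacle, and the only non-algebraic ingredient of the argument, is justifying the first step: hypothesis (\ref{e3.3}) is only postulated under $H_0$, yet I invoked it for the noise term $(T\be'\bV\be-N)/\sqrt{2N}$ under a general $\btheta\in\Theta_b$. The cleanest route is an invariance observation: in the applications treated later (for instance the factor-pricing OLS intercepts in Section 4) $\htheta$ decomposes as parameter-plus-noise whose law does not depend on $\btheta$, so the very CLT proving (\ref{e3.3}) applies to $\be$ verbatim. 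In a more abstract treatment one would need to strengthen (\ref{e3.3}) to a uniform-in-$\btheta$ CLT for $(T\be'\bV\be-N)/\sqrt{2N}$ along $\Theta_b$, which is the conceptually delicate part; once that is granted, the sparsity computations above are routine.
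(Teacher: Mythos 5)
Your estimates are correct and the conclusion follows, but your route differs from the paper's. You center at the true parameter, writing $T\htheta'\bV\htheta=T\be'\bV\be+2T\btheta'\bV\be+T\btheta'\bV\btheta$ with $\be=\htheta-\btheta$, and kill the last two terms via $\|\btheta\|^2=O(s)$, an $\ell_1$/$\ell_\infty$ H\"older bound with $\|\bV\|_1<C$, and $\|\be\|_{\max}=O_P(\delta_{N,T}/\sqrt{T})$; this is elementary and needs no matrix algebra. The paper instead splits the coordinates into the zero block $\btheta_1$ and the nonzero block $\btheta_2$ (of size $r_N$), partitions $\bV$ and $\bV^{-1}$ accordingly, and writes $T\htheta'\bV\htheta=T\htheta_1'\bM_1^{-1}\htheta_1+\Delta$, where $\bM_1/T$ is the asymptotic covariance of $\htheta_1$; the block-inversion formula together with $\|\bV\|_1,\|\bV^{-1}\|_1=O(1)$ gives $\Delta=O_P(r_N\delta_{N,T}^2+r_NT)=o_P(\sqrt{N})$, which plays exactly the role of your two remainder terms. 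The two approaches differ only in which ``null-like'' quadratic form the argument reduces to: you need $(T\be'\bV\be-N)/\sqrt{2N}\rightarrow^d\mathcal{N}(0,1)$ for the full error vector under $\btheta\in\Theta_b$, while the paper needs $(T\htheta_1'\bM_1^{-1}\htheta_1-(N-r_N))/\sqrt{2(N-r_N)}\rightarrow^d\mathcal{N}(0,1)$ for the zero subvector, which it asserts ``follows from (\ref{e3.3})''. Neither statement is literally contained in (\ref{e3.3}), which is postulated only for the full vector under the global null, so the issue you flag at the end is genuine but is the same leap the paper itself makes; in the applications it is covered precisely by the invariance you invoke (Proposition \ref{p4.1} assumes the distribution of $(\bff_t,\bu_t)$ does not depend on $\btheta$, so the law of $\htheta-\btheta$ is the null law). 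Granting that, your remainder computations are right ($Ts=o(\sqrt{N})$ and $s\sqrt{T}\delta_{N,T}/\sqrt{N}=o(\delta_{N,T}/\sqrt{T})=o(1)$ under $\log N=o(T^{1-\gamma})$ with the choice (\ref{e3.1})), and $P(J_Q>z_q|\btheta)\rightarrow q$ follows directly from the limiting normality, whereas the paper argues $\limsup P(J_Q>z_q)\leq q$ via an $\epsilon$-event bound and pairs it with the trivial lower bound; your version buys a cleaner one-shot conclusion, the paper's buys a weaker distributional input (only the marginal of the zero block rather than the full error vector).
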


 In the  above theorem,   the alternative is a sparse vector.  However, using the quadratic test itself,   the asymptotic  power  of the test is as low as $q$.  This is because    the   signals in the sparse alternative are dominated by the aggregated high-dimensional estimation errors: $T\sum_{i: \theta_i = 0}\widehat\theta_i^2$. In contrast, the  nonzero components of $\btheta$
(fixed constants) are actually detectable by using $J_0$.    The  power enhancement test $J_0+J_Q$ takes this into account, and  has a substantially improved power.


\section{Application: Testing Factor Pricing Models}

\subsection{The model}
The multi-factor pricing model,   derived by \cite{ross}  and \cite{merton},  is one of the most fundamental results in finance.  It postulates how financial returns are related to market risks, and has many important practical applications.   Let $y_{it}$ be the excess return of the $i$-th asset at time $t$ and $\bff_t=(f_{1t},...,f_{Kt})'$ be the observable excess returns of $K$ market risk factors.
Then, the excess return has   the following decomposition:
$$
    y_{it}=\theta_i+\bb_i'\bff_t+u_{it}, \quad i=1,...,N, \quad t=1,...,T,
$$
where $\bb_i=(b_{i1},...,b_{iK})'$ is a vector of factor loadings and $u_{it}$ represents the idiosyncratic error.
To make the notation consistent, we pertain to use $\btheta$ to represent the commonly used ``alpha" in the finance literature.


 The key implication from the multi-factor  pricing theory for tradable factors is that under no-arbitrage restrictions,  the intercept $\theta_i$ should be zero for any asset $i$ \citep{ross,merton,CR}.  An important question is then testing the null hypothesis
\begin{equation}  \label{e4.1}
    H_0: \btheta=0,
\end{equation}
namely, whether the factor pricing model
is consistent with empirical data,
where  $\btheta=(\theta_1,...,\theta_N)'$ is the vector of intercepts for all $N$ financial assets. One typically picks  five-year monthly data, 
because the factor pricing model is technically a one-period model whose factor loadings can be time-varying; see \cite{GOS} on how to model the time-varying effects using firm characteristics and market variables.  As the theory of the factor pricing model applies to all tradable assets,  rather than a handful selected portfolios, the number of assets $N$ should be much larger than $T$.  This ameliorates the selection biases in the construction of testing portfolios. On the other hand, if the theory does not hold,  it is expected that there are only a few significant nonzero components of $\btheta$, corresponding to a small portion of  mis-priced stocks instead of systematic mis-pricing of the whole market. Our empirical studies on the S\&P500 index lend further support to such kinds of sparse alternatives, under which  there are  only a few nonzero components of $\btheta$ compared to $N$.

Most   existing tests to the problem (\ref{e4.1}) are based on the quadratic statistic $W=T\htheta'\bV\htheta$,   where $\htheta$ is the OLS estimator for $\btheta$, and  $\bV$ is some positive definite matrix.  Prominent examples are given by \cite{GRS},   \cite{MR91} and   \cite{BDK}. When $N$ is possibly much larger than $T$,  \cite{PY}  showed that, under  regularity conditions (Assumption \ref{a4.1} below),
$$
    J_1=\frac{a_{f, T}T\htheta'\Sig_u^{-1}\htheta-N}{\sqrt{2N}}\rightarrow^d\mathcal{N}(0,1).
$$
where $a_{f, T}>0$ is a constant that depends only on  factors'  empirical moments, and $\Sig_u$ is the $N\times N $ covariance matrix of $\bu_t=(u_{1t},...,u_{Nt})'$, assumed to be time-invariant.


Recently,  \cite{GOS} propose a novel approach to modeling and estimating time-varying risk premiums using two-pass least-squares method under asset pricing restrictions.  Their problems and approaches differ substantially from ours, though both papers study similar problems in finance.
As a part of their model validation, they develop test statistics against the asset pricing restrictions and  weak risk factors.  Their test statistics are based on a weighted sum of squared residuals of the cross-sectional regression, which, like all classical test statistics, have power only when there are many violations of the asset pricing restrictions. 
 They do not consider the issue of enhancing the power under sparse alternatives, nor do they involve a Wald statistic that depends on a high-dimensional covariance matrix. In fact, their testing power can be enhanced by using our techniques.

\subsection{Power enhancement component}


We propose a  new statistic that depends on (i) the power enhancement component $J_0$,  and (ii) a feasible Wald component based on a consistent covariance estimator for $\Sig_u^{-1}$, which controls the size under the null even when $N/T\rightarrow\infty$.


Denote by $\bar\bff=\frac{1}{T}\sum_{t=1}^T\bff_t$ and $\bw=(\frac{1}{T}\sum_{t=1}^T\bff_t\bff_t')^{-1}\bar{\bff}$.  Also define
$$
a_{f, T}=1-\bar\bff'\bw,\quad \text{and } a_f=1-E\bff_t'(E\bff_t\bff_t')^{-1}E\bff_t.
$$
 The OLS estimator of $\btheta$ can be expressed as
\begin{equation}\label{e4.2}
\htheta=(\widehat\theta_1,...,\widehat\theta_N)',\quad
\widehat\theta_j= \frac{1}{Ta_{f,T}}\sum_{t=1}^Ty_{jt}(1-\bff_t'\bw).
\end{equation}
When $\cov(\bff_t)$ is positive definite, under mild regularity conditions (Assumption \ref{a4.1} below),  $a_{f, T}$ consistently estimates $a_f$, and $a_f>0$.  In addition,  without serial correlations, the conditional variance of $\widehat\theta_j$ (given $\{\bff_t\}$)  converges in probability to
$$v_j=\var(u_{jt})/(Ta_f),$$
which can be estimated by $\widehat v_j$ based on the residuals of OLS estimator:
$$
\widehat v_j=\frac{1}{T}\sum_{t=1}^T\widehat u_{jt}^2/(Ta_{f, T}),\quad \text{ where } \widehat u_{jt}=y_{jt}-\widehat\theta_j-\widehat \bb_j'\bff_t.
$$

We show in Proposition \ref{p4.1} below that $\max_{j\leq N}|\widehat\theta_j-\theta_j|/\widehat v_j^{1/2}=O_P(\sqrt{\log N})$.  Therefore,   $\delta_{N,T}=\log(\log T)\sqrt{\log N}$  slightly dominates  the maximum estimation noise. The screening set  and the power enhancement component are defined as
$$
\widehat S=\{j: |\widehat\theta_j|>\widehat v_j^{1/2}\delta_{N,T}, j=1,...,N\},
$$
and
$$
J_0=\sqrt{N}\sum_{j\in\widehat S}\widehat\theta_j^2\widehat v_j^{-1}.
$$

\subsection{Feasible Wald test in high dimensions}
 Assuming no serial correlations among $\{\bu_t\}_{t=1}^T$ and conditional homoskedasticity (Assumption \ref{a4.1} below),  given the observed factors,  the conditional covariance of $\htheta$ is $\Sig_u/(Ta_{f, T})$.  If the covariance matrix $\Sig_u$ of $\bu_t$ were known, the  standardized  Wald test statistic is
\begin{equation} \label{e4.3}
\frac{Ta_{f, T}\htheta'\Sig_u^{-1}\htheta-N}{\sqrt{2N}}.
\end{equation}
Under $H_0:\btheta=0$, it converges in distribution to $\mathcal{N}(0,1)$.  Note that the idiosyncratic errors $(u_{1t},...,u_{Nt})$ are often cross-sectionally correlated, which leads to a non-diagonal  inverse covariance matrix $\Sig_u^{-1}$.
When $N/T\rightarrow\infty$, it is  practically difficult to estimate $\Sig_u^{-1}$, as there are $O(N^2)$ free off-diagonal parameters.

 To consistently estimate $\Sig_u^{-1}$ when $N/T\rightarrow\infty$, without parametrizing the off-diagonal elements, we assume   $\Sig_u=\cov(\bu_t)$  be a  sparse matrix. This assumption is natural for large covariance estimations for factor models, and was previously considered by  \cite{FLM11}. Since the common factors dictate preliminarily the co-movement across the whole panel, a particular asset's idiosyncratic shock is usually correlated significantly only with a few  of other assets. For example, some shocks only exert influences on a particular industry, but are not pervasive for the whole economy \citep{CK93}.

 Following the approach of \cite{Bickel08a}, we can consistently estimate $\Sig_u^{-1}$ via thresholding:  let  $s_{ij}=\frac{1}{T}\sum_{t=1}^T\widehat u_{it}\widehat u_{jt}$. Define the covariance estimator as
$$
(\widehat \Sig_{u})_{ij}=\begin{cases}
s_{ij}, & \text{ if } i=j,\\
h_{ij} (s_{ij}), & \text{ if } i\neq j,
\end{cases}
$$
where $h_{ij} (\cdot)$ is a generalized thresholding function \citep{AF,Rothman09}, with threshold value $\tau_{ij}=C(s_{ii}s_{jj}\frac{\log N}{T})^{1/2}$ for some constant $C>0$, designed to keep only the sample correlation whose magnitude exceeds $C(\frac{\log N}{T})^{1/2}$. The hard-thresholding function, for example, is $h_{ij}(x) = x 1\{|x| > \tau_{ij}\}$, and many other thresholding functions such as soft-thresholding and  SCAD \citep{Fan01}  are specific examples. In general, $h_{ij} (\cdot)$ should satisfy:
\begin{itemize}
\item [(i)] $h_{ij}(z)=0$  if $|z|<\tau_{ij};$
\item [(ii)] $|h_{ij}(z)-z|\leq \tau_{ij};$
\item [(iii)] there are constants $a>0$ and $b>1$ such that $|h_{ij}(z)-z|\leq a\tau_{ij}^2$ if $|z|>b\tau_{ij}$.
\end{itemize}
The thresholded covariance matrix estimator sets most of the off-diagonal estimation noises in $(\frac{1}{T}\sum_{t=1}^T\widehat u_{it}\widehat u_{jt})$ to zero. As studied in \cite{POET},   the constant $C$ in the threshold can be chosen in a data-driven manner so that $\hSig_u$ is strictly positive definite in finite sample even when $N>T$. 

With $\hSig_u^{-1}$, we are ready to define  the \textit{feasible standardized Wald statistic}:
\begin{equation}\label{e4.4}
J_{wald}=\frac{Ta_{f, T}\htheta'\hSig_u^{-1}\htheta-N}{\sqrt{2N}},
\end{equation}
whose power can be enhanced under sparse alternatives by:
\begin{equation}\label{e4.5}
J=J_0+J_{wald}.
\end{equation}

\subsection{Does the thresholded covariance estimator affect the size?}

  A natural but technical question to address is that when $\Sig_u$ indeed admits a sparse structure,  is the thresholded estimator $\hSig_u^{-1}$ accurate enough so that the feasible $J_{wald}$ is still asymptotically normal? The answer is affirmative if $N(\log N)^4=o(T^2)$,  and  still we can allow $N/T\rightarrow\infty$.  However, such a simple question is far more technically involved than anticipated, as we now explain.

When $\Sig_u$ is a sparse matrix, under regularity conditions (Assumption \ref{a4.2} below), \cite{FLM11}  showed that
\begin{equation}\label{e4.6}
\|\Sig_u^{-1}-\hSig_u^{-1}\|_2=O_P(\sqrt{\frac{\log N}{T}}).
\end{equation}
By the lower bound derived by \cite{Cai10},  the convergence rate is minimax optimal for the  sparse covariance estimation. On the other hand, when replacing $\Sig_u^{-1}$ in (\ref{e4.3}) by $\hSig_u^{-1}$,  one needs to show that the effect of such a replacement is asymptotically negligible, namely, under $H_0$,
\begin{equation}\label{eq4.7}
    T\htheta'(\Sig_u^{-1}-\hSig_u^{-1})\htheta/\sqrt{N}=o_P(1).
\end{equation}
However, when $\btheta=0$, with careful analysis, $\|\htheta\|^2 = O_P(N/T)$. Using this and (\ref{e4.6}), by the Cauchy-Schwartz inequality, we have
$$
    |T\htheta'(\Sig_u^{-1}-\hSig_u^{-1})\htheta|/\sqrt{N}
    = O_P(\sqrt{\frac{N\log N}{T}}).
$$
We see that  it  requires $N\log N=o(T)$ to converge, which is basically a low-dimensional scenario.

The above simple derivation uses, however,  a Cauchy-Schwartz bound, which is too crude for a large $N$.
In fact, $\htheta'(\Sig_u^{-1}-\hSig_u^{-1})\htheta $ is a weighted estimation error of $\Sig_u^{-1}-\hSig_u^{-1}$, where the weights $\htheta$ ``average down" the accumulated estimation errors in estimating elements of $\Sig_u^{-1}$, and  result in an improved rate of convergence. The formalization of this argument requires further regularity conditions and novel technical arguments.  These are formally presented in the following subsection.

\subsection{Regularity conditions}

We are now ready to present the regularity conditions. These conditions are imposed for three technical purposes: (i) Achieving the uniform convergence for $\widehat\btheta-\btheta$ as required in Assumption \ref{a3.1}, (ii) defining the sparsity of $\Sig_u$ so that $\hSig_u^{-1}$ is consistent, and (iii) showing (\ref{eq4.7}), so that the errors from estimating $\Sig_u^{-1}$ do not affect the size of the   test.

Let $\mathcal{F}_{-\infty}^0$ and $\mathcal{F}_{T}^{\infty}$ denote the $\sigma$-algebras generated by $\{\bff_t: -\infty\leq t\leq 0\}$ and  $\{\bff_t:  T\leq t\leq \infty\}$ respectively. In addition, define the $\alpha$-mixing coefficient
\begin{equation*}
\alpha(T)=\sup_{A\in\mathcal{F}_{-\infty}^0, B\in\mathcal{F}_{T}^{\infty}}|P(A)P(B)-P(AB)|.
\end{equation*}

\begin{assum}\label{a4.1}
 (i) $\{\bu_{t}\}_{t\leq T}$ is  i.i.d. $\mathcal{N}(0,\Sig_u)$, where both $\|\Sig_u\|_1$ and $\|\Sig_u^{-1}\|_1$ are bounded; \\
 (ii) $\{\bff_t\}_{t\leq T}$ is strictly stationary, independent of $\{\bu_t\}_{t\leq T}$, and there are $r_1, b_1>0$ so that
$$ \max_{i\leq K}P(|f_{it}|>s)\leq\exp(-(s/b_1)^{r_1}).$$
 (iii) 
 There exists   $r_2>0$  such that $r_1^{-1}+r_2^{-1}>0.5$ and $C>0$,  for all $T\in\mathbb{Z}^+$,
$$\alpha(T)\leq \exp(-CT^{r_2}).$$
(iv)  $\cov(\bff_t)$ is positive definite, and  $\max_{i\leq N}\|\bb_i\|<c_1$ for some $c_1>0.$

\end{assum}

Some remarks are in order for the conditions in Assumption \ref{a4.1}.
\begin{remark}
Condition (i),  perhaps somewhat restrictive, substantially facilitates our technical analysis. Here  $\bu_t$ is required to be serially uncorrelated across $t$.  Under this condition, the conditional covariance of $\htheta$, given the factors, has a simple expression $\Sig_u/(Ta_{f, T})$.
On the other hand, if serial correlations are present  in $\bu_t$, there would be additional autocovariance terms in the covariance matrix, which need to be further estimated via regularizations. Moreover, given that $\Sig_u$ is a sparse matrix, the Gaussianity ensures that most of the idiosyncratic errors are cross-sectionally independent so that $\cov(u_{it}^2, u_{jt}^l)=0$, $l=1,2$, for most of the pairs in $\{(i,j): i\neq j\}$. 

Note that we do allow the factors $\{\bff_t\}_{t\leq T}$ to be weakly correlated across $t$, but satisfy the strong mixing condition Assumption~\ref{a4.1} (iii). 

 \end{remark}
\begin{remark}
The conditional homoskedasticity $E(\bu_t\bu_t'|\bff_t) = E(\bu_t\bu_t')$ is assumed, granted by condition (ii).  We admit that handling conditional heteroskedasticity, while important in empirical applications, is very technically challenging in our context.  Allowing the high-dimensional covariance matrix $E(\bu_t\bu_t'|\bff_t)$ to be time-varying is possible with suitable \textit{continuum of sparse} conditions on the time domain. In that case, one can  require the sparsity condition to hold  uniformly across $t$ and continuously apply thresholding.  However,  unlike in the traditional case, technically, estimating   the family of large  inverse covariances $\{E(\bu_t\bu_t'|\bff_t)^{-1}: t=1,2,...\}$ uniformly over $t$ is highly challenging. As we shall see in the proof of Proposition \ref{p4.2},  even in the homoskedastic  case, proving the effect of estimating $\Sig_u^{-1}$ to be first-order negligible when $N/T\rightarrow\infty$ requires delicate technical analysis.  


 \end{remark}




   To  characterize the sparsity of $\Sig_u$ in our context, define
$$m_N=\max_{i\leq N}\sum_{j=1}^N1\{(\Sig_u)_{ij}\neq0\},\quad D_N=\sum_{i\neq j}1\{(\Sig_u)_{ij}\neq0\}.$$
Here
$m_N$ represents  the maximum number of nonzeros in each row,  and $D_N$  represents the total number of nonzero off-diagonal entries.  Formally, we assume:

\begin{assum} \label{a4.2}  Suppose $ N^{1/2}(\log N)^{\gamma}=o(T)$ for some $\gamma>2$, and  \\
(i) $\min_{(\Sig_u)_{ij}\neq0}|(\Sig_u)_{ij}|\gg \sqrt{(\log N)/T}$;\\
(ii)  at least one of the following cases holds:\\
(a) $D_N=O(N^{1/2})$, and  $m_N^2=O(\frac{T}{N^{1/2}(\log N)^{\gamma}})$\\
(b) $D_N=O(N)$, and $m_N^2=O(1)$.
  \end{assum}

As regulated in  Assumption \ref{a4.2}, we consider two kinds of sparse matrices, and develop our results  for both cases. In the first case (Assumption \ref{a4.2} (ii)(a)), $\Sig_u$ is required to have no more than $O(N^{1/2})$ off-diagonal nonzero entries, but allows a diverging $m_N$. One typical example of this case is that there are  only a small portion (e.g., finitely many) of firms whose individual   shocks ($u_{it}$) are correlated with many other firms'.   
 In the   second case (Assumption \ref{a4.2}(ii)(b)), $m_N$ should be bounded, but $\Sig_u$ can have $O(N)$ off-diagonal nonzero entries. This allows block-diagonal  matrices with finite size of blocks or banded matrices with finite number of bands.  This case typically arises when firms' individual shocks are   correlated only within industries but not across industries. 

Moreover,  we require $N^{1/2}(\log N)^{\gamma}=o(T)$, which is the price to pay for estimating a large error covariance matrix.  But still we allow $N/T\rightarrow\infty.$   It  is also required that  the minimal signal for the nonzero components be larger than the noise level (Assumption \ref{a4.2} (i)), so that nonzero components are not thresholded off when estimating $\Sig_u$.


  \subsection{Asymptotic  properties}

The following result verifies the uniform convergence required in Assumption \ref{a3.1} over the entire parameter space that contains both the null and alternative hypotheses. Recall that the OLS estimator and its asymptotic standard error are defined in (\ref{e4.2}).

\begin{prop} \label{p4.1}Suppose the distribution of $(\bff_t, \bu_t)$ is independent of $\btheta$.
Under Assumption \ref{a4.1}, for  $\delta_{N,T}=\log(\log T)\sqrt{\log N}$, as $T,N\rightarrow\infty$,
\begin{eqnarray*}
&&\inf_{\stheta\in\Theta}P(\max_{j\leq N}|\widehat\theta_j-\theta_j|/\widehat v_j^{1/2}<\delta_{N,T}/2|\btheta)\rightarrow1.\cr
&&\inf_{\stheta\in\Theta}P(4/9< \widehat v_j/v_j<16/9,\forall j=1,...,N |\btheta)\rightarrow1.
\end{eqnarray*}
\end{prop}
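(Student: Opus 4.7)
The key observation is that $\widehat\theta_j-\theta_j$ and $\widehat u_{jt}$ do not depend on $\btheta$ as functions of the data. Substituting $y_{jt}=\theta_j+\bb_j'\bff_t+u_{jt}$ into (\ref{e4.2}) and using $\sum_t(1-\bff_t'\bw)=Ta_{f,T}$ together with $\sum_t\bff_t(1-\bff_t'\bw)=\bzero$ (since $(\sum_t\bff_t\bff_t')\bw=T\bar\bff$ by definition of $\bw$), one obtains
$$
\widehat\theta_j-\theta_j=\frac{1}{Ta_{f,T}}\sum_{t=1}^T u_{jt}(1-\bff_t'\bw).
$$
The OLS residuals $\widehat u_{jt}$ are invariant to shifts of $(\theta_j,\bb_j)$ and hence depend on the data only through $\{u_{jt},\bff_t\}$; consequently so does $\widehat v_j$. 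Because the distribution of $(\bff_t,\bu_t)$ is assumed independent of $\btheta$, the random variables inside the two probability statements have $\btheta$-free distributions, so the infimum over $\btheta\in\Theta$ is redundant and it suffices to prove the two convergences pointwise.

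The second statement is handled first, since the first depends on it. Using $\widehat u_{jt}=u_{jt}-(\widehat\theta_j-\theta_j)-(\widehat\bb_j-\bb_j)'\bff_t$ together with the OLS normal equations $\sum_t\widehat u_{jt}=0$ and $\sum_t\widehat u_{jt}\bff_t=\bzero$, one obtains the Pythagorean decomposition $T^{-1}\sum_t\widehat u_{jt}^2=T^{-1}\sum_t u_{jt}^2-T^{-1}\sum_t[(\widehat\theta_j-\theta_j)+(\widehat\bb_j-\bb_j)'\bff_t]^2$. The Gaussian representation above (conditional on $\{\bff_t\}$, using independence of $\bu_t$ from the factors in Assumption \ref{a4.1}(ii)) plus a union bound yields $\max_j|\widehat\theta_j-\theta_j|=O_P(\sqrt{\log N/T})$, and an analogous OLS calculation for the slope gives $\max_j\|\widehat\bb_j-\bb_j\|=O_P(\sqrt{\log N/T})$; boundedness of $T^{-1}\sum_t\bff_t\bff_t'$ under Assumption \ref{a4.1} then makes the correction term $O_P(\log N/T)$ uniformly in $j$. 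Standard chi-square concentration combined with a union bound delivers $\max_j|T^{-1}\sum_t u_{jt}^2-\var(u_{jt})|=O_P(\sqrt{\log N/T})$. Dividing by $Ta_{f,T}$ and using $a_{f,T}\rightarrow^p a_f>0$ (ergodic theorem under the strong-mixing Assumption \ref{a4.1}(iii)) together with $\min_j\var(u_{jt})>0$ (implied by $\|\Sig_u^{-1}\|_1<\infty$), we conclude $\max_j|\widehat v_j/v_j-1|\rightarrow^p 0$, which is much stronger than the required $(4/9,16/9)$ window.

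For the first statement, factor $|\widehat\theta_j-\theta_j|/\widehat v_j^{1/2}=(|\widehat\theta_j-\theta_j|/v_j^{1/2})(v_j/\widehat v_j)^{1/2}$. On the event from the second statement the right factor is within $(3/4,3/2)$ uniformly in $j$, so it suffices to show $\max_j|\widehat\theta_j-\theta_j|/v_j^{1/2}<\delta_{N,T}/3$ with probability tending to one. The algebraic identity $\sum_t(1-\bff_t'\bw)^2=T-T\bar\bff'\bw=Ta_{f,T}$ shows that conditional on $\{\bff_t\}$, $(\widehat\theta_j-\theta_j)/v_j^{1/2}=Z_j(a_f/a_{f,T})^{1/2}$ with $Z_j\sim\mathcal{N}(0,1)$. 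A Gaussian tail plus union bound gives $P(\max_{j\leq N}|Z_j|>c\sqrt{\log N})\leq 2N^{1-c^2/2}\to 0$ for any fixed $c>\sqrt{2}$, while $\delta_{N,T}/3=(\log\log T)\sqrt{\log N}/3$ eventually dominates any such $c\sqrt{\log N}$, which completes the proof. The principal technical obstacle is upgrading each pointwise Gaussian tail bound to a uniform-in-$j$ statement under the \emph{same} realization of $\{\bff_t\}$: one must isolate a high-probability event on which $T^{-1}\sum_t\bff_t\bff_t'$ and $\bar\bff$ are close to their population limits (controlled via the sub-exponential tails and strong mixing in Assumption \ref{a4.1}(ii)–(iii)), after which the conditional Gaussianity of $\bu_t$ reduces the remaining bookkeeping to a standard union bound across $j=1,\ldots,N$.
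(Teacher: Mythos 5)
Your proposal is correct, and its skeleton matches the paper's: note that $\widehat\theta_j-\theta_j=\frac{1}{Ta_{f,T}}\sum_t u_{jt}(1-\bff_t'\bw)$ and the residuals depend only on $(\bff_t,\bu_t)$, so the infimum over $\btheta\in\Theta$ is automatic; prove the $\widehat v_j/v_j$ statement first via a uniform $O_P(\sqrt{\log N/T})$ bound, then convert the bound on $\max_j|\widehat\theta_j-\theta_j|$ into the normalized statement using $\min_j v_j^{-1/2}\asymp\sqrt{T}$ and the dominance of $\delta_{N,T}=\log(\log T)\sqrt{\log N}$ over $\sqrt{\log N}$. Where you differ is in how the uniform rates are obtained: the paper routes everything through its Lemma~\ref{la.1add} and Lemma~\ref{la.1}, which invoke the Bernstein-type large-deviation bounds of \cite{FLM11} for strong-mixing, sub-exponential data (and cites FLM's Lemma 3.1 for $\max_j\|\widehat\bb_j-\bb_j\|$), and bounds $|\widehat\sigma_{jj}-\sigma_{jj}|$ by a triangle/Cauchy--Schwarz expansion; you instead exploit the exact conditional Gaussianity of $\bu_t$ given $\{\bff_t\}$, the identity $\sum_t(1-\bff_t'\bw)^2=Ta_{f,T}$ giving $(\widehat\theta_j-\theta_j)/v_j^{1/2}=Z_j(a_f/a_{f,T})^{1/2}$ with conditionally standard normal $Z_j$, a chi-square concentration plus union bound for the residual variances, and an exact Pythagorean decomposition of $T^{-1}\sum_t\widehat u_{jt}^2$ from the normal equations. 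Your route is more self-contained and elementary (no appeal to external large-deviation lemmas, and it yields the sharper conclusion $\max_j|\widehat v_j/v_j-1|\to^p 0$), at the cost of leaning harder on the Gaussian and serial-independence structure of Assumption~\ref{a4.1}(i); the paper's route is less explicit but generalizes beyond Gaussian errors. Two small housekeeping points you assert without proof but which the paper establishes: $a_f>0$ (its Lemma~\ref{lb.0}, from positive definiteness of $\cov(\bff_t)$) and the uniform-in-$j$ lower bound $\min_j\var(u_{jt})>0$ from $\|\Sig_u^{-1}\|_1<\infty$; both are easy and your parenthetical justifications are adequate.
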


\begin{prop} \label{p4.2}  Under Assumptions  \ref{a3.2},  \ref{a4.1}, \ref{a4.2}, and  $H_0$,
$$
    J_{wald}=\frac{Ta_{f, T}\htheta'\hSig_u^{-1}\htheta-N}{\sqrt{2N}} \rightarrow^d\mathcal{N}(0,1).
$$
\end{prop}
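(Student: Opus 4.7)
The plan is to reduce the feasible statistic to an oracle one and apply Slutsky. Under $H_0$, Assumption \ref{a4.1} gives $\sqrt{Ta_{f,T}}\htheta\mid\{\bff_t\}\sim\mathcal{N}(\bzero,\Sig_u)$ from (\ref{e4.2}), so a standard Pesaran--Yamagata-type CLT yields
\begin{equation*}
J^{or}:=\frac{Ta_{f,T}\htheta'\Sig_u^{-1}\htheta-N}{\sqrt{2N}}\longrightarrow^{d}\mathcal{N}(0,1).
\end{equation*}
Writing $\hDelta:=\hSig_u^{-1}-\Sig_u^{-1}$, the proposition reduces to proving that the perturbation
\begin{equation*}
R_T=\frac{Ta_{f,T}\,\htheta'\hDelta\,\htheta}{\sqrt{2N}}
\end{equation*}
is $o_P(1)$, since $J_{wald}=J^{or}+R_T$. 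The naive Cauchy--Schwarz bound exploited in Section~4.4 combined with $\|\htheta\|^2=O_P(N/T)$ and (\ref{e4.6}) only gives $O_P(\sqrt{N\log N/T})$, which requires $N\log N=o(T)$ and is too weak; the refinement must exploit that $\htheta$ averages down the errors in $\hDelta$.

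The key step is to rewrite $Ta_{f,T}\htheta'\hDelta\htheta=\tr(\hDelta\,\bM_T)$ with $\bM_T:=Ta_{f,T}\htheta\htheta'$, and to use that $E[\bM_T\mid\{\bff_t\}]=\Sig_u$. Decompose
\begin{equation*}
\tr(\hDelta\,\bM_T)=\underbrace{\tr(\hDelta\,\Sig_u)}_{I_1}+\underbrace{\tr\bigl(\hDelta(\bM_T-\Sig_u)\bigr)}_{I_2}.
\end{equation*}
For $I_1$, use the identity $\hDelta=\hSig_u^{-1}(\Sig_u-\hSig_u)\Sig_u^{-1}$ so that $I_1=\tr\bigl(\hSig_u^{-1}(\Sig_u-\hSig_u)\bigr)$. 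The minimum-signal condition in Assumption~\ref{a4.2}(i) combined with the generalized thresholding properties (i)--(iii) implies that with probability tending to one the support of $\hSig_u$ is contained in the support of $\Sig_u$, and the elementwise bound $\|\hSig_u-\Sig_u\|_{\max}=O_P(\sqrt{\log N/T})$ holds. Since $\|\hSig_u^{-1}\|_1=O_P(1)$ (inherited from $\|\Sig_u^{-1}\|_1<\infty$ via the operator-norm consistency in (\ref{e4.6})), one obtains $|I_1|=O_P((N+D_N)\sqrt{\log N/T})$; dividing by $\sqrt{N}$ is $o_P(1)$ in both sparsity regimes of Assumption~\ref{a4.2}(ii).

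For $I_2$, conditional on $\{\bff_t\}$ the Gaussianity of $\bu_t$ gives $\bM_T-\Sig_u=\Sig_u^{1/2}(\bg\bg'-\bI)\Sig_u^{1/2}$ for a standard Gaussian $\bg\in\mathbb{R}^N$, and a Hanson--Wright estimate bounds $|I_2|$ by $O_P(\|\Sig_u^{1/2}\hDelta\Sig_u^{1/2}\|_F)\le O_P(\|\Sig_u\|_2\sqrt{N}\|\hDelta\|_2)=O_P(\sqrt{N}\,m_N\sqrt{\log N/T})$ using (\ref{e4.6}); again $o(\sqrt{N})$ under Assumption~\ref{a4.2}.

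The main obstacle is that $\hSig_u$ and $\htheta$ are built from the same data, so the conditional Gaussian computation underlying the bound on $I_2$ is not literally valid. I would handle this by replacing $\hSig_u$ with an auxiliary $\hSig_u^\star$ constructed from the \emph{true} innovations $\bu_t$, showing via Proposition~\ref{p4.1} and the elementwise rate $\max_{i,t}|\widehat u_{it}-u_{it}|=O_P(\sqrt{\log N/T})$ that $\|\hSig_u^{-1}-\hSig_u^{\star-1}\|_2$ is of strictly smaller order than the $\sqrt{\log N/T}$ benchmark, so that substituting back into $R_T$ contributes only a lower-order term. The Hanson--Wright argument for $I_2$ then applies conditionally on $\{\bff_t,\hSig_u^\star\}$, and the condition $N^{1/2}(\log N)^\gamma=o(T)$ in Assumption~\ref{a4.2} is precisely what is needed to absorb the resulting cross terms. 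This decoupling step is the delicate piece; everything else is routine given the previous estimates in the paper.
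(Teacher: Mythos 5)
Your high-level plan (oracle statistic plus a negligible perturbation $R_T$, and the recognition that plain Cauchy--Schwarz is too crude) matches the paper, but the execution has a fatal rate error in the $I_1$ term. From $\tr(\hDelta\bM_T)=\tr(\hDelta\Sig_u)+\tr\bigl(\hDelta(\bM_T-\Sig_u)\bigr)$ you bound $|I_1|=|\tr(\hSig_u^{-1}(\Sig_u-\hSig_u))|$ by an elementwise-maximum argument, obtaining $O_P\bigl((N+D_N)\sqrt{\log N/T}\bigr)$, and claim this is $o_P(\sqrt N)$ in both regimes of Assumption \ref{a4.2}(ii). It is not: the $N$ diagonal entries alone give $|I_1|/\sqrt N \gtrsim \sqrt{N\log N/T}$, and Assumption \ref{a4.2} only imposes $N^{1/2}(\log N)^{\gamma}=o(T)$, so $N\log N/T$ may (and in the regime of interest does) diverge. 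In other words, your $I_1$ bound secretly requires $N\log N=o(T)$, which is precisely the ``basically low-dimensional'' trap described in Section 4.4 that the proposition is designed to escape. Any fix must exploit cancellation rather than uniform elementwise bounds, e.g.\ splitting $\hSig_u-\Sig_u$ into the mean-zero part $\frac1T\sum_t \bu_t\bu_t'-\Sig_u$ (whose trace against $\Sig_u^{-1}$ has variance of order $N/T$, hence is $o_P(\sqrt N)$) plus thresholding-bias and residual-estimation corrections of order $\log N/T$ per supported entry; this variance-computation style is exactly what the paper's key estimates (\ref{eb.1})--(\ref{eb.2}) carry out, except that the paper keeps the random weights $(\htheta'\Sig_u^{-1})_i$ in place instead of replacing $\htheta\htheta'$ by its conditional mean.

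The decoupling device you propose for $I_2$ also does not deliver what the Hanson--Wright step needs. An auxiliary estimator $\hSig_u^{\star}$ built from the \emph{true} innovations $\{\bu_t\}$ is still a quadratic functional of the very shocks that generate $\htheta$ (equivalently, of the Gaussian vector $\bg$ in your representation $\bM_T-\Sig_u=\Sig_u^{1/2}(\bg\bg'-\bI)\Sig_u^{1/2}$), so conditioning on $\{\bff_t,\hSig_u^{\star}\}$ changes the law of $\bg$ and the Hanson--Wright inequality is not licensed; replacing $\hSig_u$ by $\hSig_u^{\star}$ only removes the residual-estimation effect, not the dependence. The paper confronts this dependence head-on: under $H_0$ it writes $(\htheta'\Sig_u^{-1})_i=c\,\frac1T\sum_s(1-\bff_s'\bw)e_{is}$ with $\be_t=\Sig_u^{-1}\bu_t$, and then computes the means and variances of the resulting weighted cross-sectional and serial double sums (Lemmas \ref{la.3} and \ref{lb.2add}, built on (\ref{eb.1})--(\ref{eb.2})), using serial independence, Gaussianity, the independence $e_{it}\perp u_{jt}$ for $i\neq j$ (Lemma \ref{la.5}), and the two sparsity regimes of Assumption \ref{a4.2}. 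That explicit moment analysis is the substantive content of the proposition and is absent from your argument, so the proposal as written does not establish the result under the stated assumptions.
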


As shown, the effect of replacing $\Sig_u^{-1}$ by its thresholded estimator is asymptotically negligible and the size of the standard Wald statistic can be well controlled.



 We are now ready to apply  Theorem \ref{t3.3} to obtain the asymptotic properties of $J=J_0+J_{wald}$ as follows.
 For $\delta_{N,T}=\log(\log T)\sqrt{\log N}$, let
\begin{eqnarray*}
&&\Theta_s=\{\btheta\in\Theta:  \max_{j\leq N}\frac{T^{1/2}|\theta_j|}{\var^{1/2}(u_{jt})}>2a_f^{-1/2}\delta_{N,T}\},\cr
&& \Theta(J_{wald})=\{\btheta\in\Theta: \|\btheta\|^2>C\delta_{N,T}^2N/T\}.
\end{eqnarray*}

\begin{thm} \label{t4.1} Suppose the assumptions of Propositions \ref{p4.1} and \ref{p4.2} hold.
\begin{itemize}
\item [(i)] Under the null hypothesis $H_0: \btheta=\bzero$, as $T,N\rightarrow\infty$,
$$
   P(J_0=0|H_0)\rightarrow0, \quad J_{wald} \to^d \mathcal{N}(0,1),
$$
and hence
$$ J=J_0+J_{wald} \to^d \mathcal{N}(0,1).$$
\item [(ii)] There is $C>0$ so that   for any $q\in(0,1)$, as $T,N\rightarrow\infty$,
$$    \inf_{\stheta\in\Theta_s}P(J_{0}>\sqrt{N}|\btheta)\rightarrow1,
\quad \inf_{\stheta\in\Theta(J_{wald})}P(J_{wald}>z_q|\btheta)\rightarrow1,
$$
and hence
$$\inf_{\stheta\in\Theta_s\cup\Theta(J_{wald})}P(J>z_q|\btheta)\rightarrow1,
$$  where $z_q$ denotes the $q$th quantile of the standard normal distribution.

\end{itemize}
\end{thm}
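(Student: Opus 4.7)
My plan is to deduce Theorem~\ref{t4.1} from the generic results of Section~3 together with Propositions~\ref{p4.1}--\ref{p4.2}, which do most of the heavy lifting. Proposition~\ref{p4.1} verifies Assumption~\ref{a3.1} with $\delta_{N,T} = \log(\log T)\sqrt{\log N}$, so Theorem~\ref{t3.1} applies directly to the screening component $J_0$ and supplies both the no-size-distortion statement $P(J_0 = 0|H_0) \to 1$ and the uniform divergence $\inf_{\stheta \in \Theta_s} P(J_0 > \sqrt{N}|\btheta) \to 1$. This already delivers the $J_0$ portion of both parts of the theorem.

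For part (i), Proposition~\ref{p4.2} gives $J_{wald} \to^d \mathcal{N}(0,1)$ under $H_0$. Since $P(J_0 = 0|H_0) \to 1$, Slutsky's lemma immediately yields $J = J_0 + J_{wald} \to^d \mathcal{N}(0,1)$.

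For part (ii), the only genuinely new claim is the Wald-power bound on $\Theta(J_{wald})$. Setting $\bdelta = \htheta - \btheta$, I would expand
\begin{equation*}
Ta_{f,T}\htheta'\hSig_u^{-1}\htheta = Ta_{f,T}\btheta'\hSig_u^{-1}\btheta + 2Ta_{f,T}\btheta'\hSig_u^{-1}\bdelta + Ta_{f,T}\bdelta'\hSig_u^{-1}\bdelta.
\end{equation*}
Because $(\bff_t,\bu_t)$ does not depend on $\btheta$ under Assumption~\ref{a4.1}, the distribution of $\bdelta$ is parameter-free, so uniformly in $\btheta$ the pure-error term equals $N + O_P(\sqrt{N})$ by the argument in Proposition~\ref{p4.2}, and $\|\bdelta\|^2 = O_P(N/T)$. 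Combined with $\lambda_{\min}(\hSig_u^{-1}) \geq c > 0$ w.p.\ $\to 1$ (from (\ref{e4.6}) and $\|\Sig_u^{-1}\|_1$ bounded) and $a_{f,T} \to^p a_f > 0$, the leading term is at least $cT\|\btheta\|^2 \geq cC\delta_{N,T}^2 N$ on $\Theta(J_{wald})$, while the cross term is bounded by $CT\|\btheta\|\,\|\bdelta\| = O_P(\sqrt{TN}\|\btheta\|)$; the ratio of leading to cross is at least of order $\delta_{N,T} \to \infty$. Standardizing by $\sqrt{2N}$ and taking $C$ sufficiently large shows $J_{wald} \to \infty$ in probability uniformly on $\Theta(J_{wald})$.

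The union claim then follows from the non-negativity of $J_0$: on $\Theta(J_{wald})$, $J \geq J_{wald}$ already exceeds $z_q$ with probability tending to one; on $\Theta_s$, $J_0 > \sqrt{N}$ w.h.p., and since the standardized Wald statistic trivially satisfies $J_{wald} \geq -\sqrt{N/2}$, we have $J \geq (1 - 1/\sqrt{2})\sqrt{N} \to \infty > z_q$. The main technical subtlety I anticipate is the uniform $\ell_2$ control of $\bdelta$ across the parameter space, since Proposition~\ref{p4.1} delivers only a max-norm bound; this is resolved cleanly by the parameter-free distribution of $\bdelta$ under Assumption~\ref{a4.1}, which reduces the uniform statement to the $\btheta = 0$ case already handled by Proposition~\ref{p4.2}.
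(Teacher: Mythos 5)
Your proof is correct, and most of it coincides with the paper's route: Proposition \ref{p4.1} verifies Assumption \ref{a3.1} so that Theorem \ref{t3.1} gives $P(J_0=0|H_0)\rightarrow1$ and $\inf_{\stheta\in\Theta_s}P(J_0>\sqrt{N}|\btheta)\rightarrow1$, and part (i) is Proposition \ref{p4.2} plus Slutsky, exactly as in the paper. The genuine difference is part (ii): the paper disposes of it in one line by citing Theorem \ref{t3.3}, whose proof already contains the expansion and cross-term bound you redo, but which is stated for a known, deterministic weight matrix $\bV$; you instead verify the uniform power of the feasible statistic $J_{wald}$ directly, handling the estimated $\hSig_u^{-1}$ through the uniform consistency in (\ref{e4.6}) (made uniform over $\btheta$ by Lemma \ref{la.2}) and the observation that $\htheta-\btheta$ and $\hSig_u$ have parameter-free distributions under Assumption \ref{a4.1}. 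Your route buys an explicit justification of the point the paper glosses over (that the random weight $\hSig_u^{-1}$ satisfies the role of $\bV$ in Assumption \ref{a3.3}), at the cost of reproducing the quadratic-form argument of Theorems \ref{t3.2}--\ref{t3.3}. Three small remarks: the size of the pure-error term $Ta_{f,T}\bdelta'\hSig_u^{-1}\bdelta$ is not actually needed for the lower bound, since its nonnegativity suffices once the cross term is dominated; the lower bound on $\lambda_{\min}(\Sig_u^{-1})$ comes from $\lambda_{\max}(\Sig_u)\leq\|\Sig_u\|_1$ being bounded, not from $\|\Sig_u^{-1}\|_1$ (both are bounded by Assumption \ref{a4.1}(i), so nothing breaks); and the bound $J_{wald}\geq-\sqrt{N/2}$ is not unconditional but holds with probability tending to one, since it requires $a_{f,T}>0$ and positive semidefiniteness of $\hSig_u^{-1}$ --- this is precisely the content of condition (iii) of Theorem \ref{t3.2} as verified in the proof of Theorem \ref{t3.3}.
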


We see that the power is substantially enhanced after $J_0$ is added, as the region where the test has power is enlarged from $\Theta(J_{wald})$ to $\Theta_s\cup\Theta(J_{wald})$.

\section{Application:  Testing Cross-Sectional Independence}

\subsection{The model}

Consider a mixed effect panel data model
$$
y_{it}=\alpha+\bx_{it}'\bbeta+\mu_i+u_{it}, \quad i\leq n, t\leq T,
$$
where the idiosyncratic error $u_{it}$ is assumed to be Gaussian.   The  regressor $\bx_{it}$ could be correlated with the individual random effect $\mu_i$,  but is uncorrelated with $u_{it}$. Let
$\rho_{ij}$  denote the correlation between $u_{it}$ and $u_{jt}$, assumed to be time invariant. The goal is to test the following hypothesis:
$$
H_0: \rho_{ij}=0,\text{ for all } i\neq j,
$$
that is,  whether the cross-sectional dependence is present. It is commonly known that the cross-sectional dependence leads to efficiency loss for OLS, and sometimes it may even cause  inconsistent estimations \citep{andrews05}. Thus testing $H_0$ is an important problem in applied panel data models.
 If we let $N=n(n-1)/2$, and let $\btheta=(\rho_{12},...,\rho_{1n}, \rho_{23},...,\rho_{2n},...,\rho_{n-1, n})'$ be an $N\times 1$ vector stacking  all the  mutual correlations, then the problem is equivalent to testing about a high-dimensional vector $
H_0: \btheta=0.
$
Note that often the cross-sectional dependences are weakly present. Hence  the alternative hypothesis of interest is often    a sparse vector $\btheta$, corresponding to a sparse  covariance matrix $\Sig_u$ of $u_{it}$.

Most of the existing tests  are based on the quadratic statistic
$W=\sum_{i<j}T\widehat\rho_{ij}^2=T\htheta'\htheta,
$
where $\widehat\rho_{ij}$ is the sample correlation between $u_{it}$ and $u_{jt}$,    estimated by the within-OLS \citep{baltagi}, and $\htheta=(\widehat\rho_{12},...,\widehat\rho_{n-1,n})$.  \cite{PUY} and \cite{BFK} studied the rescaled $W$, and showed that after a proper standardization, the rescaled $W$ is asymptotically normal when both $n, T\rightarrow\infty.$ However,  the quadratic test    suffers from a low power if $\Sig_u$ is a sparse matrix under the alternative. In particular, as is shown in Theorem \ref{t3.4}, when $n/T\rightarrow\infty$, the quadratic test cannot detect the sparse alternatives with $\sum_{i<j}1\{\rho_{ij}\neq0\}=o(n/T)$, which is very restrictive.
Such a sparse structure is present, for instance, when $\Sig_u$ is a block-diagonal sparse matrix with finitely many blocks and finite block sizes.

\subsection{Power enhancement test}
  Following the conventional notation of panel data models, let  $\widetilde y_{it}=y_{it}-\frac{1}{T}\sum_{t=1}^Ty_{it}$, $\widetilde\bx_{it}=\bx_{it}-\frac{1}{T}\sum_{t=1}^T\bx_{it}$, and $\widetilde{u}_{it}=u_{it}-\frac{1}{T}\sum_{t=1}^Tu_{it}$.
Then
$
\widetilde y_{it}= \widetilde \bx_{it}'\bbeta+\widetilde u_{it}.$
 The within-OLS estimator $\widehat\bbeta$  is obtained by regressing $\widetilde y_{it}$ on $\widetilde \bx_{it}$, which leads to the estimated residual $\widehat u_{it}=\widetilde y_{it}-\widetilde \bx_{it}'\widehat\bbeta$. Then $\rho_{ij}$ is estimated by
$$
\widehat\rho_{ij}=\frac{\widehat\sigma_{ij}}{\widehat\sigma_{ii}^{1/2}\widehat\sigma_{jj}^{1/2}}, \quad \widehat\sigma_{ij}=\frac{1}{T}\sum_{t=1}^T\widehat u_{it}\widehat u_{jt}.
$$
For the within-OLS,  the asymptotic variance of $\widehat\rho_{ij}$ is given by $v_{ij}=(1-\rho_{ij}^2)^2/T$, and is estimated  by $\widehat v_{ij}=(1-\widehat\rho_{ij}^2)^2/T.$  Therefore the screening statistic for the power enhancement test is defined as
\begin{equation}\label{e5.1}
J_0=\sqrt{N}\sum_{(i,j)\in \widehat S}\widehat\rho_{ij}^2\widehat v_{ij}^{-1},\quad  \quad  \widehat S=\{(i,j): |\widehat\rho_{ij}|/\widehat v_{ij}^{1/2}>\delta_{N,T}, i<j\leq n\}.
\end{equation}
where  $\delta_{N,T}=\log(\log T)\sqrt{\log N}$ as before.   The set $\widehat S$ screens off most of the estimation errors.

To control the size, we employ   \cite{BFK}'s  bias-corrected quadratic statistic:
\begin{equation}\label{e5.2}
J_{1}=\sqrt{\frac{1}{n(n-1)}}\sum_{i<j}(T\widehat\rho^2_{ij}-1)-\frac{n}{2(T-1)}.
\end{equation}
Under regularity conditions (Assumptions \ref{a5.1}, \ref{a5.2} below), $J_{1}\rightarrow^d\mathcal{N}(0,1)$ under $H_0$.  Then the power enhancement test can be constructed as
$
J=J_0+J_{1}.
$
The  power is substantially  enhanced to cover the region
\begin{equation}\label{e5.3}
\Theta_s=\{\btheta: \max_{i<j}\frac{\sqrt{T}|\rho_{ij}|}{1-\rho_{ij}^2} >2\log(\log T)\sqrt{\log N}\},
\end{equation}
in addition to the region detectable by $J_{1}$ itself. As a byproduct, it  also identifies pairs $(i,j)$ for $\rho_{ij}\neq0$ through $\widehat S.$ Empirically,  this set helps us understand better  the underlying pattern of cross-sectional correlations.

\subsection{Asymptotic properties}

In order for the power to be uniformly enhanced,  the parameter space of  $\btheta=(\rho_{12},...,\rho_{1n}, \rho_{23},...,\rho_{2n},...,\rho_{n-1, n})'$  is required to be: $\btheta$ is element-wise bounded away from $\pm 1$: there is $\rho_{\max}\in(0,1)$,
$$
\Theta=\{\btheta\in\mathbb{R}^N:  \|\btheta\|_{\max}\leq \rho_{\max}\}.
$$
We denote $E(u_{it}^r|\btheta)$ as the $r$th moment of $u_{it}$ when the correlation vector of the underlying data generating process is $\btheta.$ The following regularity conditions are imposed.

\begin{assum}\label{a5.1}
There are $C_1, C_2>0$, so that \\
(i) $ \sup_{\stheta\in\Theta}\sum_{i\neq j\leq n}|E\tx_{it}'\tx_{jt}E(u_{it}u_{jt}|\btheta)|<C_1n$,\\
(ii) $\sup_{\stheta\in\Theta}\max_{j\leq n}E(u_{jt}^4|\btheta)<C_1$, $\inf_{\stheta\in\Theta}\min_{j\leq n} E(u_{jt}^2|\btheta)>C_2,$

\end{assum}

Condition (i) is needed for the within-OLS to be $\sqrt{nT}$-consistent (see, e.g., \cite{baltagi}). It is usually satisfied by weak cross-sectional correlations (sparse alternatives) among the error terms, or weak dependence among the regressors. We require the second moment of $u_{jt}$  be bounded away from zero uniformly in $j\leq n$ and $\btheta\in\Theta$, so that the cross-sectional   correlations can be estimated stably.

The following conditions are assumed in \cite{BFK}, which are needed for the asymptotic normality of $J_1$ under $H_0$.

\begin{assum}\label{a5.2}
(i) $\{\bu_t\}_{t\leq T}$ are i.i.d. $N(0,\Sig_u)$, $E(\bu_t|\{\bff_t\}_{t\leq T},\btheta)=0$ almost surely.\\
(ii) With probability approaching one,  all the eigenvalues of  $\frac{1}{T}\sum_{t=1}^T\tx_{jt}\tx_{jt}'$  are bounded away from both zero and infinity uniformly in $j\leq n.$
\end{assum}

\begin{prop} \label{p5.1}
Under Assumptions \ref{a5.1} and \ref{a5.2}, for  $\delta_{N,T}=\log(\log T)\sqrt{\log N}$,  and $N=n(n-1)/2$,  as $T,N\rightarrow\infty$,
\begin{eqnarray*}&&
\inf_{\stheta\in\Theta}P(\max_{ij}|\widehat \rho_{ij}-\rho_{ij}|/\widehat v_{ij}^{1/2}< \delta_{N,T}/2|\btheta)\rightarrow1\cr
&&\inf_{\stheta\in\Theta}P(  4/9<\widehat v_{ij}/v_{ij}<16/9,\forall i\neq j    |\btheta)\rightarrow1.
\end{eqnarray*}
\end{prop}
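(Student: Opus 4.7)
\textbf{Proof proposal for Proposition \ref{p5.1}.}

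The plan is to reduce both claims to a single uniform deviation bound, namely
\[
\sup_{\stheta\in\Theta}P\Bigl(\max_{i\neq j}|\widehat\rho_{ij}-\rho_{ij}|>C\sqrt{(\log n)/T}\,\Big|\,\btheta\Bigr)\to 0
\]
for some $C>0$. Since $v_{ij}=(1-\rho_{ij}^2)^2/T \geq (1-\rho_{\max}^2)^2/T$ uniformly on $\Theta$, this immediately gives
\[
\max_{i\neq j}|\widehat\rho_{ij}-\rho_{ij}|/v_{ij}^{1/2}=O_P\bigl(\sqrt{\log n}\bigr)
\]
uniformly, which is $o_P(\delta_{N,T})$ because $\delta_{N,T}=\log(\log T)\sqrt{\log N}$ and $\log N\asymp\log n$. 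Part~(b) will follow because $\widehat v_{ij}/v_{ij}=(1-\widehat\rho_{ij}^2)^2/(1-\rho_{ij}^2)^2$ is a smooth function of $\widehat\rho_{ij}-\rho_{ij}$ at a value bounded away from $\pm 1$, so a uniform bound of the form $\max_{i\neq j}|\widehat\rho_{ij}-\rho_{ij}|=o_P(1)$ pushes the ratio into $(4/9,16/9)$ with probability tending to one; plugging part~(b) back then converts the bound on $|\widehat\rho_{ij}-\rho_{ij}|/v_{ij}^{1/2}$ into one for $|\widehat\rho_{ij}-\rho_{ij}|/\widehat v_{ij}^{1/2}$, giving part~(a).

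To obtain the key sample-correlation bound I would proceed in three steps. First, decompose
$\widehat u_{it}=\widetilde u_{it}-\widetilde\bx_{it}'(\widehat\bbeta-\bbeta)$, so that
$\widehat\sigma_{ij}-\widetilde\sigma_{ij}=-\widetilde{\bar{\bx u}}_{j}'(\widehat\bbeta-\bbeta)-\widetilde{\bar{\bx u}}_{i}'(\widehat\bbeta-\bbeta)+(\widehat\bbeta-\bbeta)'\bigl(T^{-1}\sum_t\widetilde\bx_{it}\widetilde\bx_{jt}'\bigr)(\widehat\bbeta-\bbeta)$, where $\widetilde\sigma_{ij}=T^{-1}\sum_t\widetilde u_{it}\widetilde u_{jt}$ and $\widetilde{\bar{\bx u}}_{j}=T^{-1}\sum_t\widetilde\bx_{jt}\widetilde u_{jt}$. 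Under Assumption~\ref{a5.1}(i) the within-OLS estimator satisfies $\|\widehat\bbeta-\bbeta\|=O_P(1/\sqrt{nT})$ uniformly in $\btheta$, and by Assumption~\ref{a5.2}(ii) and standard moment bounds $\max_j\|\widetilde{\bar{\bx u}}_{j}\|$ and $\max_{i,j}\|T^{-1}\sum_t\widetilde\bx_{it}\widetilde\bx_{jt}'\|$ are $O_P(1)$. This makes the replacement error $\max_{i,j}|\widehat\sigma_{ij}-\widetilde\sigma_{ij}|=O_P(1/\sqrt{nT})$, which is $o(\sqrt{(\log n)/T})$. The same argument applied to $i=j$ handles $\widehat\sigma_{ii}-\widetilde\sigma_{ii}$.

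Second, for the infeasible residuals, exploit the Gaussianity of $\bu_t$ (Assumption~\ref{a5.2}(i)) to apply a Bernstein-type deviation inequality for sums of sub-exponential variables to each $\widetilde\sigma_{ij}-E\widetilde\sigma_{ij}$, then take a union bound over the $O(n^2)$ pairs. The resulting bound
\[
\max_{i,j}|\widetilde\sigma_{ij}-\sigma_{ij}|=O_P\bigl(\sqrt{(\log n)/T}\bigr)
\]
holds uniformly in $\btheta\in\Theta$ because the constants in the Bernstein bound depend only on fourth moments of $u_{it}$, which are uniformly controlled by Assumption~\ref{a5.1}(ii). Combining with the first step,
$\max_{i,j}|\widehat\sigma_{ij}-\sigma_{ij}|=O_P(\sqrt{(\log n)/T})$ uniformly. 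Finally, since $\sigma_{jj}\geq C_2>0$ uniformly and $\widehat\sigma_{jj}$ is correspondingly bounded away from zero with probability tending to one, a one-term Taylor expansion of $(x,y,z)\mapsto x/\sqrt{yz}$ around $(\sigma_{ij},\sigma_{ii},\sigma_{jj})$ transfers the rate to $\widehat\rho_{ij}$, yielding the desired uniform bound.

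The main obstacle is the uniformity over $\btheta\in\Theta$, which appears in two places: the within-OLS rate (where I rely crucially on the sparse-or-weak cross-sectional dependence built into Assumption~\ref{a5.1}(i)) and the maximal inequality for $\widetilde\sigma_{ij}-\sigma_{ij}$ (where I must ensure the sub-exponential parameter is controlled by a constant independent of $\btheta$, which follows because the marginal variances $\sigma_{ii}$ are uniformly bounded above and below on $\Theta$). Everything else is a routine application of Bernstein's inequality, a union bound over $N=n(n-1)/2$ pairs, and a smooth-function perturbation argument.
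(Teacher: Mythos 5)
Your proposal follows essentially the same route as the paper's proof: a uniform rate for the within-OLS estimator (the paper's Lemma \ref{lc.1}), Bernstein-plus-union bounds for the error sample moments and for the residual-replacement effect (Lemma \ref{lc.2}), a perturbation argument yielding $\max_{ij}|\widehat\rho_{ij}-\rho_{ij}|=O_P(\sqrt{(\log n)/T})$ uniformly in $\btheta$ (Lemma \ref{lc.3}), and then the observation that $v_{ij}\geq(1-\rho_{\max}^2)^2/T$ while $\delta_{N,T}=\log(\log T)\sqrt{\log N}$ dominates $C\sqrt{\log n}$. Apart from a harmless notational slip in the cross term (your $\widetilde{\bar{\bx u}}_{j}$ should mix indices, i.e.\ $T^{-1}\sum_t\tx_{jt}\widetilde u_{it}$), the argument is correct and matches the paper's.
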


Define
$$\Theta(J_1)=\{\btheta\in\Theta: \sum_{i<j}\rho_{ij}^2\geq  Cn^2\log n /T\}.$$
For $J_1$ defined in (\ref{e5.2}), let
\begin{equation}\label{e5.4}
J=J_0+J_1.
\end{equation}

The main result is presented as follows.

\begin{thm} \label{t5.1} Suppose Assumptions  \ref{a3.2},  \ref{a5.1},  \ref{a5.2}  hold.  As $T,N\rightarrow\infty$,
\begin{itemize}
\item [(i)] under the null hypothesis $H_0: \btheta=\bzero$,
$$
   P(J_0=0|H_0)\rightarrow0, \quad J_{1} \to^d \mathcal{N}(0,1),
$$
and hence
$$ J=J_0+J_{1} \to^d \mathcal{N}(0,1);$$
\item [(ii)] there is $C>0$ in the definition of $\Theta(J_1)$ so that   for any $q\in(0,1)$,
$$
  \inf_{\stheta\in\Theta_s}P(J_{0}>\sqrt{N}|\btheta)\rightarrow1,\quad
  \inf_{\stheta\in\Theta(J_{1})}P(J_1>z_q|\btheta)\rightarrow1,
  $$
and hence
$$\inf_{\stheta\in\Theta_s\cup\Theta(J_{1})}P(J>z_q|\btheta)\rightarrow1.
$$

\end{itemize}
\end{thm}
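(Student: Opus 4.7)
\textbf{Proof proposal for Theorem~\ref{t5.1}.} The plan is to view Theorem~\ref{t5.1} as a direct instantiation of the general machinery from Theorem~\ref{t3.1} and Theorem~\ref{t3.2}, with the ``parameter'' being the $N = n(n-1)/2$ vector of pairwise correlations $\btheta = (\rho_{ij})_{i<j}$ and the ``estimators'' $(\widehat\theta_{(ij)}, \widehat v_{(ij)}) = (\widehat\rho_{ij}, (1-\widehat\rho_{ij}^2)^2/T)$. The role of Proposition~\ref{p5.1} is precisely to verify Assumption~\ref{a3.1} in this translated setup, uniformly over $\btheta \in \Theta$. Once this is done, Theorem~\ref{t3.1} immediately yields $P(\widehat S = \emptyset \mid H_0) \to 1$, hence $P(J_0 = 0 \mid H_0) \to 1$, and also $\inf_{\stheta\in\Theta_s} P(J_0 > \sqrt{N} \mid \btheta) \to 1$, since the definition of $\Theta_s$ in \eqref{e5.3} is exactly the ``$\max_j |\theta_j|/v_j^{1/2} > 2\delta_{N,T}$'' set $\{\btheta : S(\btheta) \neq \emptyset\}$ for the rescaled correlations. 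This covers the $J_0$ half of both conclusions.

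For part (i), the null distribution of $J_1$ given in \eqref{e5.2} is the content of \cite{BFK}: under Assumptions~\ref{a5.1}--\ref{a5.2} and $H_0$, $J_1 \to^d \mathcal{N}(0,1)$. Since $P(J_0 = 0 \mid H_0) \to 1$ and $J_0 \geq 0$, we have $P(J = J_1 \mid H_0) \to 1$, so the limiting distribution of $J$ is the same as that of $J_1$. This is just the ``no-size-distortion'' step in the general framework.

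For part (ii), the power statement for $J_1$ on $\Theta(J_1)$ requires an independent moment computation. Under $\btheta$, with the within-OLS residuals, a careful expansion gives $E[T\widehat\rho_{ij}^2 \mid \btheta] = 1 + T\rho_{ij}^2 + o(1)$ uniformly, so the deterministic drift of $J_1$ is $\frac{T}{\sqrt{n(n-1)}}\sum_{i<j}\rho_{ij}^2 + o(1)$. On $\Theta(J_1) = \{\btheta : \sum_{i<j}\rho_{ij}^2 \geq Cn^2\log n/T\}$, this drift is at least $Cn\sqrt{\log n}/\sqrt{2}$. The variance of $J_1$ is $O(1) + O(T\sum\rho_{ij}^2/n^2)$; bounding cross-covariances of the $T\widehat\rho_{ij}^2$'s via the Gaussianity in Assumption~\ref{a5.2}(i) and the uniform control on $\|\btheta\|_{\max}$, the drift dominates the noise, which gives $\inf_{\stheta\in\Theta(J_1)} P(J_1 > z_q \mid \btheta) \to 1$ for any $q$. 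Finally, the combined-power statement $\inf_{\stheta\in\Theta_s\cup\Theta(J_1)} P(J > z_q \mid \btheta) \to 1$ reduces to verifying condition (iii) of Theorem~\ref{t3.2}: on $\Theta_s$ the screening lower bound actually gives $J_0 \geq \sqrt{N}\,\delta_{N,T}^2$ (every contributor to $\widehat S$ adds at least $\delta_{N,T}^2$), while the mean/variance bound above shows that on $\Theta_s$, $J_1$ is of order $O_P(1)$ from below; hence $J_0 + J_1 \geq \sqrt{N}\delta_{N,T}^2 - O_P(1) \to \infty$, and the combined test rejects.

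The step I expect to be the main obstacle is the power analysis of $J_1$ on $\Theta(J_1)$, specifically obtaining a sharp enough control on $\mathrm{Var}(J_1 \mid \btheta)$ under the alternative. Under $H_0$ one can rely on exact independence structures among the $\widehat\rho_{ij}$'s coming from Gaussianity of $\bu_t$, but under a general sparse $\btheta \in \Theta$ the pairs $\{(i,j),(k,\ell)\}$ that share an index create nontrivial fourth-order correlations between $\widehat\rho_{ij}^2$ and $\widehat\rho_{k\ell}^2$. I would handle this by expanding $T\widehat\rho_{ij}^2$ around $T\rho_{ij}^2$, using Isserlis-type identities together with $\|\btheta\|_{\max} \leq \rho_{\max} < 1$ and the $1/\sqrt{T}$-consistency of $\widehat\bbeta$ from Assumption~\ref{a5.1}(i), and absorbing the resulting cross terms into a bound of the form $\mathrm{Var}(J_1 \mid \btheta) \lesssim 1 + T\sum_{i<j}\rho_{ij}^2/n^2$, which is dominated by the squared drift on $\Theta(J_1)$.
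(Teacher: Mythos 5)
Your overall architecture is sound and matches the paper for the $J_0$ half: Proposition \ref{p5.1} verifies Assumption \ref{a3.1} for $(\widehat\rho_{ij},\widehat v_{ij})$, Theorem \ref{t3.1} gives $P(J_0=0|H_0)\to1$ and $\inf_{\stheta\in\Theta_s}P(J_0>\sqrt{N}|\btheta)\to1$, and \cite{BFK} supplies the null limit of $J_1$, so part (i) follows exactly as in the paper. Where you genuinely diverge is the power of $J_1$ on $\Theta(J_1)$ and the domination condition (iii) of Theorem \ref{t3.2}. You propose a mean--variance (Chebyshev) argument: compute the drift $\frac{T}{\sqrt{n(n-1)}}\sum_{i<j}\rho_{ij}^2$ and bound $\var(J_1|\btheta)\lesssim 1+T\sum_{i<j}\rho_{ij}^2/n^2$ uniformly via Isserlis identities, accounting for the within-transformation and $\widehat\bbeta$. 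The paper avoids this entirely: its Lemma \ref{lc.4} uses only the uniform elementwise bound $\max_{ij}|\widehat\rho_{ij}-\rho_{ij}|\leq C\sqrt{\log n/T}$ (Lemma \ref{lc.3}, itself from Bernstein-type bounds and $\sqrt{nT}$-consistency of $\widehat\bbeta$) together with Cauchy--Schwarz to get the deterministic inequality $\sum_{i<j}\widehat\rho_{ij}^2\geq\frac12\sum_{i<j}\rho_{ij}^2$ on a high-probability event, which already dominates the centering term when $\sum_{i<j}\rho_{ij}^2\geq Cn^2\log n/T$; no fourth-moment or variance calculation under the alternative is needed. Likewise, for condition (iii) the paper simply notes $J_1\geq-\frac{\sqrt{n(n-1)}}{2}-\frac{n}{2(T-1)}$ almost surely (since $T\widehat\rho_{ij}^2\geq0$), which combined with $J_0\geq\sqrt{N}\delta_{N,T}^2$ makes your stochastic ``$J_1$ is $O_P(1)$ from below on $\Theta_s$'' claim unnecessary; that observation would let you drop the hardest piece of your plan. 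Your route, if the uniform variance bound were fully established, would buy a larger detection region (of order $\sum\rho_{ij}^2\gg n/T$ rather than $n^2\log n/T$), but as written it is exactly the step you flag as the obstacle, and it is not carried out; also note the minor slip that the drift lower bound on $\Theta(J_1)$ is of order $n\log n$, not $n\sqrt{\log n}$ (harmless, since you only need it to diverge), and that $E[T\widehat\rho_{ij}^2]=T\rho_{ij}^2+(1-\rho_{ij}^2)^2+O(1)$ rather than $1+T\rho_{ij}^2+o(1)$ under a fixed nonzero $\rho_{ij}$ (also harmless for the lower bound).
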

Therefore the power is enhanced from $\Theta(J_1)$ to $\Theta_s\cup\Theta(J_{1})$ uniformly over sparse alternatives. In particular, the required signal strength of $\Theta_s$ in (\ref{e5.3}) is mild: the maximum cross-sectional correlation is only required to exceed a magnitude of $\log(\log T)\sqrt{(\log N)/T}$.

\section{Monte Carlo Experiments}

In this section, Monte Carlo simulations are employed to examine the finite sample performance of the power enhancement tests.  We  respectively  study the  factor pricing model and the  cross-sectional independence test. 

\subsection{Testing factor pricing models}

 To mimic the real data application, we consider the \cite{FF} three-factor model:
$$
y_{it} = \theta_i+\bb_i'\bff_t+u_{it}.
$$
We simulate $\{\bb_i\}_{i=1}^N$,  $\{\bff_t\}_{t=1}^T$  and $\{\bu_t\}_{t=1}^T$ independently from  $\mathcal{N}_3(\bmu_B,\bSigma_B)$,  $\mathcal{N}_3(\bmu_f, \Sig_f)$, and $\mathcal{N}_N(0,\Sig_u)$ respectively. The parameters are set to be the same as those in the simulations of  \cite{POET}, which are calibrated using  daily returns of S\&P 500's top 100 constituents, for the period  from July $1^{st}$, 2008 to June $29^{th}$ 2012. These parameters are listed in the following table.
 \begin{table}[ht]
 \caption{Means and covariances used to generate $\bb_i$ and $\bff_t$}
 \centering
 \begin{tabular}{c|ccc|c|ccc}

 \hline
 $\bmu_B$       &&  $\bSigma_B$   &  &   $\bmu_f$      &   & $\Sig_f$\\
\hline
0.9833 &0.0921 & -0.0178 & 0.0436  &0.0260  & 3.2351&0.1783&0.7783\\
-0.1233 &-0.0178 & 0.0862 &-0.0211   &0.0211 &0.1783&0.5069&0.0102\\
0.0839&0.0436 & -0.0211 & 0.7624 & -0.0043&0.7783&0.0102&0.6586\\
\hline
\end{tabular}

\label{tab1}
\end{table}

Set $\Sig_u=\diag\{\bA_1,...,\bA_{N/4}\}$ to  be a block-diagonal covariance matrix. Each diagonal block $\bA_j$ is a $4\times 4$ positive definite matrix,   whose correlation matrix has equi-off-diagonal entry $\rho_j$, generated from Uniform$[0,0.5]$. The diagonal entries of $\bA_j$ are obtained via $(\Sig_u)_{ii}=1+\|\bv_i\|^2$, where $\bv_i$ is generated independently from $\mathcal{N}_3(0,0.01\bI_3)$.


We evaluate the power of the test under two specific alternatives (we set $N>T$):
\begin{eqnarray*}
\text{sparse alternative }  H_a^1: && \theta_i=\begin{cases}
0.3, & i\leq  \frac{N}{T}\\
0, & i>\frac{N}{T}
\end{cases}
\cr
\text{weak theta } H_a^2: &&\theta_i=\begin{cases}
\sqrt{\frac{\log N}{T}}, & i\leq  N^{0.4}\\
0, & i>N^{0.4}
\end{cases}.
\end{eqnarray*}
Under $H_a^1$, there are only a few nonzero $\theta$'s with a relative large magnitude. Under $H_{a}^2$,  there are many non-vanishing $\theta$'s, but their magnitudes are all relatively small. In our simulation setup, $\sqrt{\log N/T}$ varies from $0.05$ to $0.10$.
We therefore expect that under $H_a^1$, $P(\widehat S=\emptyset)$ is close to zero because most of the first $N/T$ estimated $\theta$'s should survive from the screening step.  These survived $\hat \theta$'s contribute importantly to the rejection of the null hypothesis.  In contrast, $P(\widehat S=\emptyset)$ should be much larger under $H_a^2$ because the non-vanishing $\theta$'s are too weak  to be detected.



For each test, we calculate the relative frequency of rejection under $H_0, H_a^1$ and $H_a^2$ based on 2000 replications, with significance level $q=0.05$. We also calculate the relative frequency of $\widehat S$ being empty, which approximates $P(\widehat S=\emptyset)$. We use the soft-thresholding to estimate the error covariance matrix. 


 \begin{table}
 \caption{Size and power (\%) of tests for simulated Fama-French three-factor model}

\begin{tabular}{cc|ccc|ccc|ccc}
\hline
 &  &  & $H_0$ &  &  & $H_a^1$ &  &  & $H_a^2$ \\
$T$ & $N$ & $J_{wald}$ & PE & $P(\widehat S=\emptyset)$ & $J_{wald}$ & PE & $P(\widehat S=\emptyset)$ & $J_{wald}$ & PE & $P(\widehat S=\emptyset)$\\
\hline
&&&&&&&&&\\

300 & 500 & 5.2 & 5.4 & 99.8 & 48.0 & 97.6 & 2.6 & 69.0 & 76.4 & 64.6 \\
 & 800 & 4.9 & 5.1 & 99.8 & 60.0 & 99.0 & 1.2 & 69.2 & 76.2 & 62.2 \\
 & 1000 & 4.6 & 4.7 & 99.8 & 54.6 & 98.4 & 2.6 & 75.8 & 82.6 & 63.2 \\
 & 1200 & 5.0 & 5.4 & 99.6 & 64.2 & 99.2 & 0.8 & 74.2 & 81.0 & 63.6 \\
&&&&&&&&&\\
500 & 500 & 5.2 & 5.3 & 99.8 & 33.8 & 99.2 & 0.8 & 73.4 & 77.2 & 77.8 \\
 & 800 & 4.8 &5.0 & 99.8 & 67.4 & 100.0 & 0.0 & 72.4 & 76.4 & 75.0 \\
 & 1000 & 5.0 & 5.2 & 99.8 & 65.0 & 100.0 & 0.2 & 76.8 & 80.4 &74.0 \\
 & 1200 & 5.2 & 5.2 & 100.0 & 58.0 & 100.0 & 0.2 & 74.2 & 78.4 & 77.0 \\


\hline
\end{tabular}

\small
Note: This table reports the frequencies of rejection and $\widehat S=\emptyset$ based on 2000 replications. Here $J_{wald}$ is the standardized Wald test, and PE the power enhanced test. These tests are conducted at 5\% significance level.
\label{tab2}
\end{table}

Table \ref{tab2} presents the empirical size and power of the feasible standardized Wald test $J_{wald}$ as well as those of the power enhanced test $J=J_0+J_{wald}$. First of all, the size of  $J_{wald}$ is close to the significance level.  Under $H_0$, $P(\widehat S=\emptyset)$ is close to one, implying that the  power enhancement component $J_0$ screens off most of the estimation errors. The power enhanced test (PE) has approximately the same size as the original test $J_{Wald}$. Under $H_a^1$, the PE test significantly improves the power of the standardized  Wald-test. In this case, $P(\widehat S=\emptyset)$  is nearly zero because the screening set manages to capture the big thetas.   Under $H_a^2$,  as the non-vanishing thetas are very week, it follows that $\widehat S$ has  a large probability of being empty. But, whenever $\hat{S}$ is non-empty, it contributes to the power of the test. The PE test still slightly improves the power of the quadratic test.


\subsection{Testing cross-sectional independence}
We use the following data generating process in our experiments,
\begin{eqnarray}
  y_{it} &=& \alpha+\beta x_{it}+\mu_i+u_{it}, \quad i\leq n, t\leq T, \\
  x_{it} &=& \xi x_{i,t-1}+\mu_i+\varepsilon_{it}. \label{e6.1}
\end{eqnarray}
Note that we model $\{x_i\}$'s as AR(1) processes, so that $x_{it}$ is possibly correlated with  $\mu_i$, but not with $u_{it}$, as was the case in \cite{Im}.   For each $i$, initialize $x_{it}=0.5$ at $t=1$.
We specify the parameters as follows:  $\mu_i$ is drawn from $\mathcal{N}(0,0.25)$ for $i=1,...,n$.  The parameters $\alpha$ and $\beta$ are  set $-1$ and 2 respectively. In regression (\ref{e6.1}), $\xi=0.7$ and $\varepsilon_{it}\sim\mathcal{N}(0,1)$.

We generate $\{\bu_t\}_{t=1}^T$ from $\mathcal{N}_n(0,\Sig_u)$. Under the null hypothesis, $\Sig_u$ is set to be a diagonal matrix $\Sig_{u,0}=\diag\{\sigma_1^2,...,\sigma_n^2\}$. Following \cite{BFK}, consider the heteroscedastic  errors
\begin{equation}\label{e6.3}
\sigma_i^2=\sigma^2(1+\kappa\bar x_i)^2
\end{equation}
with $\kappa=0.5$, where $\bar x_i$ is the average of $x_{it}$ across $t$. Here  $\sigma^2$ is scaled to fix the average of $\sigma_i^2$'s at one.

For alternative specifications, we use a spatial model for the errors $u_{it}$.  \cite{BFK} considered a tri-diagonal error covariance matrix in this case. We extend it by allowing for higher order spatial autocorrelations, but require that not all the errors be spatially correlated with their immediate neighbors. Specifically, we start with $\Sig_{u,1}=\diag\{\Sig_1,...,\Sig_{n/4}\}$ as a block-diagonal matrix with $4\times4$ blocks located along the main diagonal. Each $\Sig_i$ is assumed to be $\bI_4$ initially. We then randomly choose $\lfloor n^{0.3} \rfloor$ blocks among them and make them non-diagonal by setting $\Sig_i(m,n)=\rho^{|m-n|}(m,n\leq 4)$, with $\rho=0.2$. To allow for error cross-sectional heteroscedasticity, we set $\Sig_u=\Sig_{u,0}^{1/2}\Sig_{u,1}\Sig_{u,0}^{1/2}$, where $\Sig_{u,0}=\diag\{\sigma_1^2,...,\sigma_n^2\}$ as specified in (\ref{e6.3}).

The Monte Carlo experiments are conducted for different pairs of $(n,T)$ with significance level $q=0.05$ based on 2000 replications. The empirical size, power and the frequency of $\widehat S=\emptyset$ as in (\ref{e5.1}) are recorded.

\begin{table}[h]
\caption{Size and power (\%) of tests for cross-sectional independence}

\begin{tabular}{c|c|llll}
\hline

$H_0$& $T $ \hspace{0.2cm}& \multicolumn{1}{c}{$n=200$} & \multicolumn{1}{c}{$n=400$} & \multicolumn{1}{c}{$n=600$} & \multicolumn{1}{c}{$n=800$} \\

& &  $J_1$/PE /$P(\widehat S=\emptyset)$ &  $J_1$/PE /$P(\widehat S=\emptyset)$ &  $J_1$/PE /$P(\widehat S=\emptyset)$ &  $J_1$/PE /$P(\widehat S=\emptyset)$
\\  \hline       
  & 100 & 4.7/5.5 /99.1  & 4.9/5.3  /99.6  & 5.5/5.7  /99.7  & 4.9/5.2 /99.7 \\
  & 200 & 5.3/5.3 /100.0  & 5.5/5.9 /99.6  & 4.7/5.1  /99.4  & 4.9/5.1 /99.8  \\
  & 300 & 5.2/5.2 /100.0  & 5.2/5.2 /100.0  & 4.6/4.6 /100.0 & 4.9/4.9 /100.0 \\
  & 500 & 4.7/4.7 /100.0  & 5.5/5.5 /100.0  & 5.0/5.0 /100.0 & 5.1/5.1 /100.0 \\
\hline
$H_a$& $T $ \hspace{0.2cm}& \multicolumn{1}{c}{$n=200$} & \multicolumn{1}{c}{$n=400$} & \multicolumn{1}{c}{$n=600$} & \multicolumn{1}{c}{$n=800$} \\
\hline
&&&&&\\
  & 100 & 26.4/95.5  /5.0  & 19.8/98.0  /2.3  & 13.5/98.2 /2.0  & 12.2/99.2  /0.9 \\
  & 200 & 54.6/98.8  /1.6  & 40.3/99.6  /0.5  & 24.8/99.6 /0.4  & 21/99.7    /0.3  \\
  & 300 & 78.9/99.25 /1.1  & 65.3/100.0 /0.1  & 41.7/99.9 /0.2  & 37.2/100.0 /0.1 \\
  & 500 & 93.5/99.85 /0.2  & 89.0/100.0 /0.0  & 69.1/100.0 /0.0 & 61.8/100.0 /0.0 \\
\hline
\end{tabular}
\small
Note: This table reports the frequencies of rejection by $J_1$ in (\ref{e5.2}) and PE in (\ref{e5.4}) under the null and alternative hypotheses, based on 2000 replications. The frequency of $\widehat S$ being empty is also recorded. These tests are conducted at 5\% significance level.
\label{tab3}
\end{table}
Table \ref{tab3} gives the size and power of the bias-corrected quadratic test $J_1$ in (\ref{e5.2}) and those of the power enhanced test $J$ in (\ref{e5.4}). The sizes of both tests are close to 5\%. In particular, the power enhancement test has little distortion of the original size.

 The bottom panel shows the power of the two tests under the   alternative specification. The PE test demonstrates almost full power under all combinations of $(n,T)$. In contrast, the quadratic test $J_1$ as in (\ref{e5.2}) only gains power when $T$ gets large. As $n$ increases, the proportion of nonzero off-diagonal elements in   $\Sig_u$ gradually decreases. It becomes harder for $J_1$ to effectively detect those deviations from the null hypothesis. This explains the low power exhibited by the quadratic test when facing a high sparsity level. 

\section{Empirical Study}

As an empirical application, we consider a test of  \cite{carhart1997persistence}'s 
four-factor model on the S$\&$P 500 index. Our empirical findings show that  there are only a few significant nonzero ``alpha" components, corresponding to
 a small portion of  mis-priced stocks instead of systematic mis-pricing of the whole market.


We collect monthly excess returns on all the S$\&$P 500 constituents from the CRSP database for the period January 1980 to December 2012. 
We test whether $\btheta=0$ (all alpha's  are zero)  in the factor-pricing model on a rolling window basis: for each month, we evaluate our test statistics $J_{wald}$ and $J$ (as in (\ref{e4.4}) and (\ref{e4.5}) respectively) using the preceding 60 months' returns $(T=60)$. 
The panel at each testing month consists of stocks without missing observations in the past five years, which yields a balanced panel with the cross-sectional dimension larger than the time-series dimension $(N>T)$. In this manner we not only capture the up-to-date information in the market, but also mitigate the impact of time-varying factor loadings and sampling biases. In particular, for testing months $\tau=1984.12,...,2012.12$, we run the regressions
\begin{equation}\label{eq7.1}
r_{it}^{\tau}-r_{ft}^{\tau}=\theta_i^{\tau}+\beta_{i,\MKT}^{\tau}(\MKT_t^{\tau}-r_{ft}
^{\tau})+\beta_{i,\SMB}^{\tau} \SMB_t^{\tau}+\beta_{i,\HML}^{\tau} \HML_t^{\tau}
+\beta_{i,\MOM}^{\tau} \MOM_t^{\tau}
+u_{it}^{\tau},
\end{equation}
for $i=1,...,N_{\tau}$ and $t=\tau-59,...,\tau$, where $r_{it}$ represents the return for stock $i$ at month $t$, $r_{ft}$ the risk free rate, and $\MKT, \SMB$, $\HML$ and $\MOM$ constitute market, size, value and momentum factors. The time series of factors are downloaded from Kenneth French's website. 
To make the notation consistent, we   use
$ \theta_i^{\tau}$ to represent the ``alpha" of stock $i$.

\begin{table}[h]
\caption{Summary of descriptive statistics and testing results}
\centering

\begin{tabular}{lccccc}
\hline
 Variables  & Mean & Std dev. & Median & Min & Max   \\
\hline

 $N_{\tau}$  & 617.70  &  26.31  & 621 & 574 & 665 \\
$|\widehat S|_0$  & 5.20  &  3.50  & 5 & 0 & 20 \\
$\overline{|\widehat \theta|}_i^{\tau} (\%) $  & 0.9767 &  0.1519  & 0.9308 & 0.7835 & 1.3816 \\
$\overline{|\widehat \theta|}_{i\in \widehat S}^{\tau} (\%)$  & 4.5569  & 1.4305 & 4.1549 & 1.7839 & 10.8393\\
$p$-value of $J_{wald}$  & 0.2351  &  0.2907  & 0.0853 & 0 & 0.9992 \\
$p$-value of $J$ (PE)  & 0.1148 &  0.2164  & 0.0050 & 0 & 0.9982 \\

\hline
\end{tabular}

 \label{tab5}
\end{table}

Table \ref{tab5} summarizes descriptive statistics for different components and estimates in the model. On average, 618  stocks (which is more than 500 because we are recording stocks that have \textit{ever} become the constituents of the index)  
enter the panel of the regression during each five-year estimation window. Of those, merely 5.2 stocks are selected by the screening set $\widehat S$, which directly implies the presence of sparse alternatives.  The threshold $\delta_{N,T}=\sqrt{(\log N)} \log(\log T)$ varies as the panel size $N$ changes at the end of each month, and is about 3.5 on average, a high-criticism thresholding. The selected stocks have much larger alphas ($\theta$) than other stocks do. In addition, $64.05\%$ of all the estimated alphas are positive, whereas  $87.33\%$ of the selected alphas in $\widehat S$ are positive.   This indicates that the power enhancement component in our test is primarily contributed by stocks with extra returns. 
We also notice that the $p$-values of the Wald test $J_{wald}$ are generally smaller than those of  the power enhanced test $J$. 

\begin{figure}[!htbp]
\caption{Dynamics of p-values and percents of selected stocks}
\begin{center}
\includegraphics[scale=0.75]{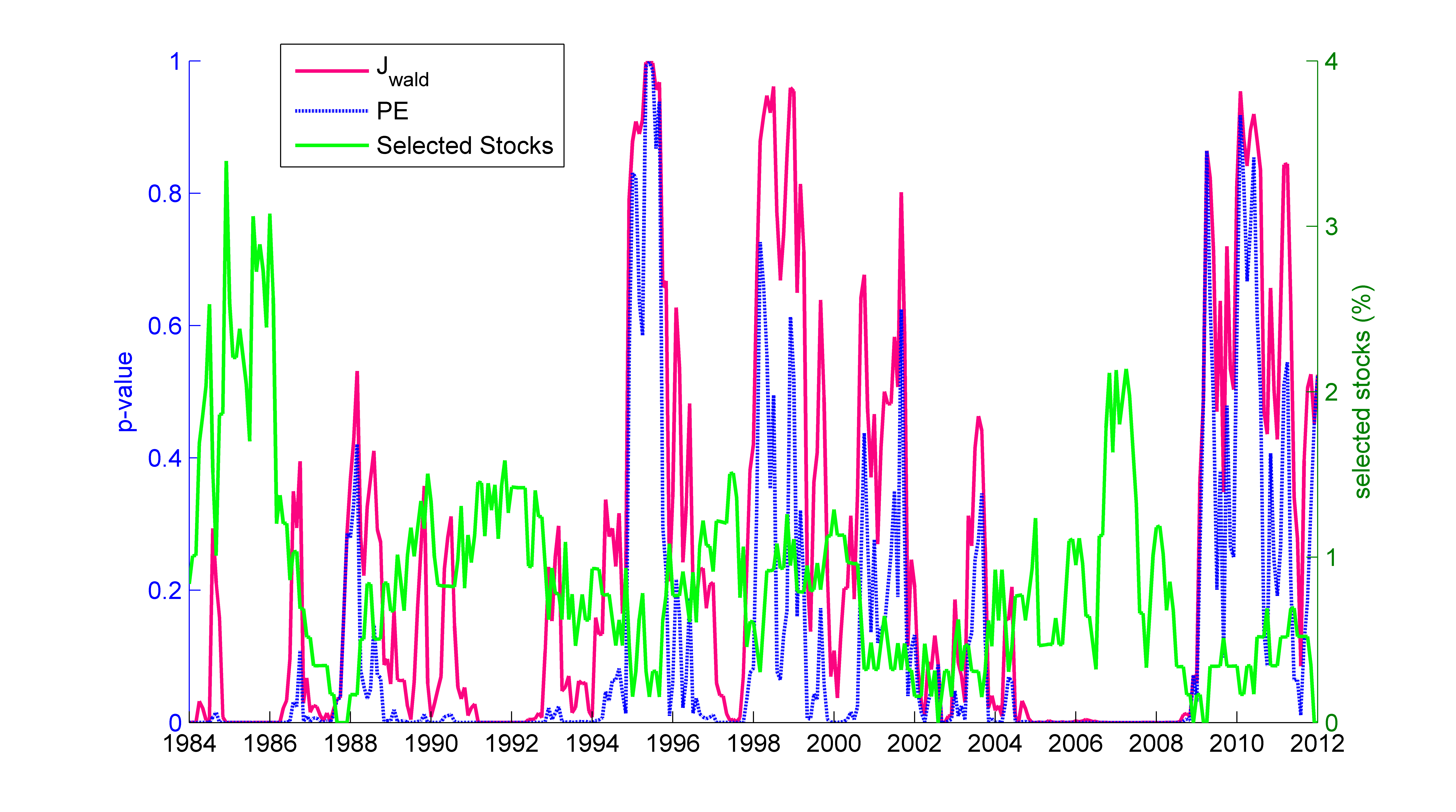}
\end{center}
\label{fig1}
\end{figure}

\begin{figure}[!htbp]
\caption{Histograms of $p$-values for $J_{wald}$ and PE.   }
\begin{center}
\includegraphics[scale=0.8]{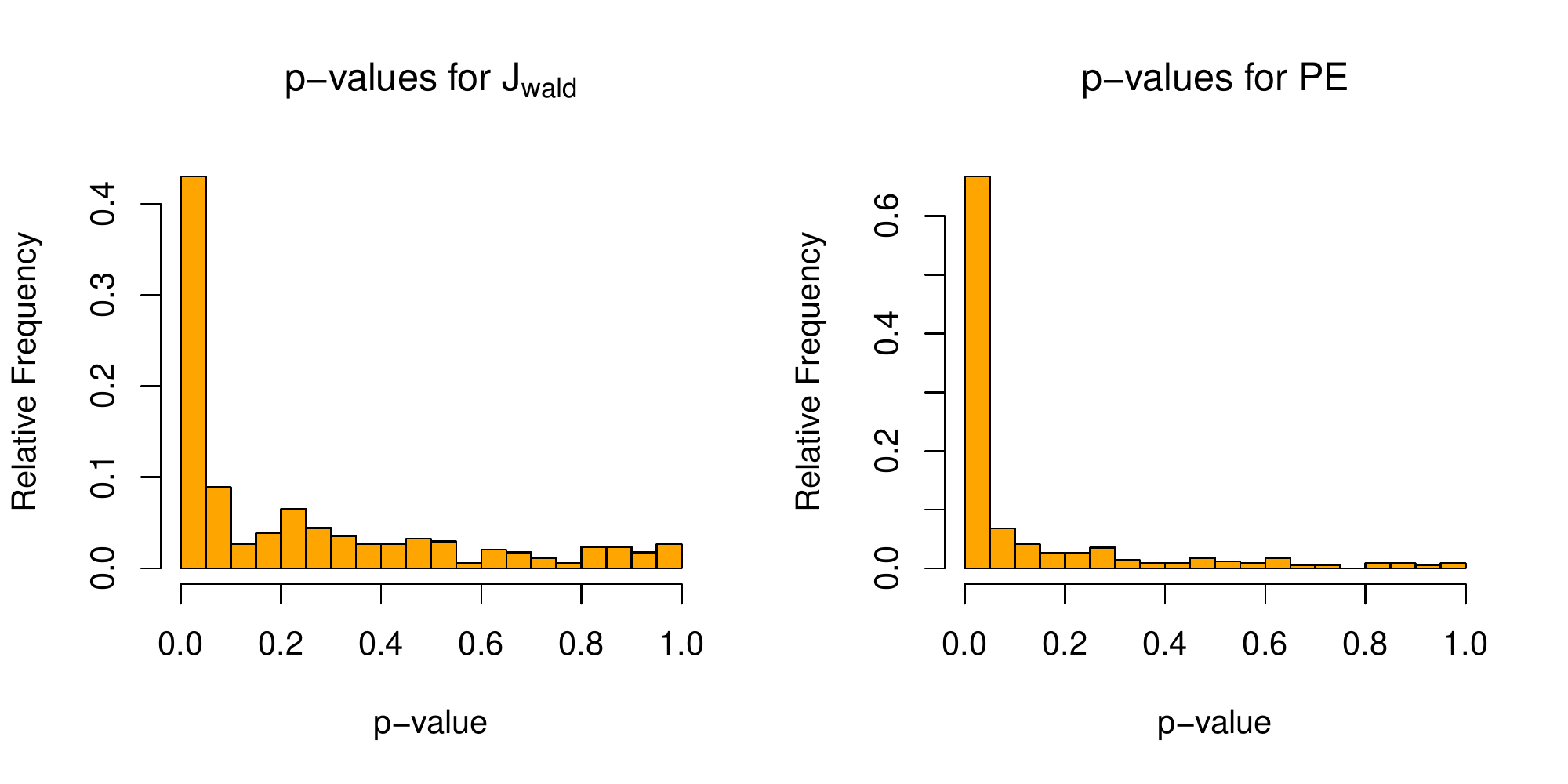}

\end{center}
\label{fig2}
\end{figure}

Similar to \cite{PY}, we plot the running $p$-values of $J_{wald}$ and the PE test from December 1984 to December 2012. We also add the dynamics of the percentage of selected stocks ($|\widehat S|_0/N$) to the plot, as shown in Figure \ref{fig1}. There is a strong negative correlation between the stock selection percentage and the $p$-values of these tests. In other words, the months at which the null hypothesis is rejected typically correspond to a few stocks with alphas exceeding the threshold. Such evidence of sparse alternatives has originally motivated our study.
We also observe that the $p$-values of the PE test lie beneath those of $J_{wald}$ test as a result of enhanced power, and hence it captures several important market disruptions ignored by the latter (e.g. collapse of Japanese bubble in 1990).
Indeed, the null hypothesis of $\btheta=0$ is rejected by the PE  test at $5\%$ level for almost all months during financial crisis, including major financial crisis such as Black Wednesday in 1992, Asian financial crisis in 1997, the financial crisis in 2008, which is also partially detected by $J_{wald}$ tests.  The histograms of the $p$-values of the two test statistics are displayed in Figure \ref{fig2}. By inspection, we see that of 43.03\% and 66.07\% of the study months, $J_{wald}$ and the PE test  reject the null hypothesis respectively. 
Again, the test results indicate the existence of sparse alternatives when faced with high crosse-sectional dimension. 

\section{Concluding remarks}

  We consider testing a high-dimensional vector $H:\btheta=0$ against sparse alternatives where the null hypothesis is violated only by a few components. Existing tests  based on  quadratic forms such as the Wald statistic often suffer from low powers  due to the accumulation of errors in estimating high-dimensional parameters.

   We introduce a ``power enhancement component" based on a screening technique, which is zero under the null, but diverges quickly under sparse alternatives.   The proposed test statistic combines the power enhancement component with a classical statistic that is often  asymptotically pivotal, and strengthens the power  under sparse alternatives. On the other hand, the null distribution does not require stringent regularity conditions, and is completely determined by that of the pivotal statistic.  As a byproduct, the screening  statistic also consistently identifies the elements that violate the null hypothesis. As specific applications, the proposed methods are applied to testing   the mean-variance efficiency in factor pricing models and testing the cross-sectional independence in panel data models.

Our empirical study on the S\&P500 index shows that  there are only a few significant nonzero components, corresponding to   a small portion of  mis-priced stocks instead of systematic mis-pricing of the whole market.
This provides empirical evidence of sparse alternatives.


\vspace{0.5 in}

\begin{center}
{\Large \bf APPENDIX}
\end{center}

\appendix

\small

Throughout the proofs, let $C$ denote a generic constant, which may differ at difference places.

\section{Proofs for Section 3}

\subsection{Proof of Theorem \ref{t3.1}}

\proof

 Define  events
$$
A_1=\left\{\max_{j\leq N}|\widehat\theta_j-\theta_j|/\widehat v_j^{1/2}<\delta_{N,T}            \right\}, \quad
A_2=\left\{   \frac{4}{9}< \widehat v_j/v_j<\frac{16}{9},\forall j=1,...,N \right\}.
$$    For any $j\in S(\btheta)$, by the definition of $S(\btheta)$, $|\theta_j|>2 \delta_{N,T}v_j^{1/2}$.    Under $A_1\cap A_2$,
\begin{eqnarray*}
  \frac{|\widehat{\theta}_j|}{{\widehat v_j^{1/2}}}
  &\geq&  \frac{|{\theta}_j|-|\widehat\theta_j-\theta_j|}{{\widehat v_j^{1/2}}}\geq \frac{3|\theta_j|}{4v_j^{1/2}}-\frac{\delta_{N,T}}{2}>\delta_{N,T}.
\end{eqnarray*}
 This  implies that $j\in\widehat S$, hence  $S(\btheta)\subset \widehat S$.  If $j\in \widehat S$, by similar arguments,  we have $\frac{|\theta_j|}{{v_j^{1/2}}} > \delta_{N,T}/3$ on $A_1\cap A_2$. Hence $\widehat S\setminus S(\btheta)\subset\{j: \delta_{N,T}/3<\frac{|\theta_j|}{{v_j^{1/2}}} <2 \delta_{N,T}\}\subset \mathcal{G}(\btheta)$.
In fact, we have proved that $S(\btheta)\subset \widehat S$ and $\widehat S\setminus S(\btheta)\subset \mathcal{G}(\btheta)$ on the event $A_1\cap A_2$  uniformly for $\btheta\in\Theta$.   This yields
$$ \inf_{\stheta\in\Theta}{P}(S(\btheta)\subset \widehat S|\btheta)\rightarrow1, \quad \mbox{and} \quad \inf_{\stheta\in \Theta}P(\widehat S\setminus S(\btheta) \subset \mathcal{G}(\btheta))\to 1.$$
Moreover,  it is readily seen that, under $H_0: \btheta=0,$ by Assumption \ref{a3.1},
$$
    P(J_0=0|H_0)\geq P(\widehat S=\emptyset|H_0)=P(\max_{j\leq N} \{|\widehat\theta_j|/\widehat v_j^{1/2}\}
    <\delta_{N,T}|H_0)\rightarrow1.
$$
In addition, $\inf_{\stheta\in\Theta}P(J_0>\sqrt{N} |S(\btheta)\neq \emptyset)$ is bounded from below by
$$
 \inf_{\stheta\in\Theta}P(\sqrt{N} \sum_{j\in \widehat S}\delta_{N,T}^2>\sqrt{N}|S(\btheta)\neq \emptyset )\geq \inf_{\stheta\in\Theta}P(\sqrt{N}\delta_{N,T}^2>\sqrt{N}|S(\btheta)\neq \emptyset)-o(1)\rightarrow1.
$$
Note that the last convergence holds uniformly in $\btheta\in\Theta$ because $\delta_{N,T}\rightarrow\infty$.  This completes the proof.

\subsection{Proof of Theorem \ref{t3.2}}

\proof It follows immediately from $P(J_0=0|H_0)\rightarrow1$ that $J\rightarrow^d F$, and hence the critical region $\{\bD: J>F_q\}$ has size $q$. Moreover, by the power condition of $J_1$ and $J_0\geq0,$
$$
\inf_{\stheta\in\Theta(J_1)}P(J>F_q|\btheta)\geq \inf_{\stheta\in\Theta(J_1)}P(J_1>F_q|\btheta)\rightarrow1.
$$
This together with the fact
$$
    \inf_{\stheta\in\Theta_s\cup\Theta(J_1)}P(J>F_q|\btheta)\geq \min\{\inf_{\stheta\in\Theta_s}P(J>F_q|\btheta), \inf_{\stheta\in\Theta(J_1)}P(J>F_q|\btheta)\},
$$
establish the theorem, if we show $\inf_{\stheta\in\Theta_s}P(J>F_q|\btheta)\rightarrow1.$

By the definition of $\widehat S$ and $J_0$, we have $\{J_0<\sqrt{N}\delta_{N,T}^2 \} = \{ \widehat S=\emptyset\}$.  Since
$\inf_{\theta\in\Theta}P(S(\btheta)\subset \widehat S|\btheta)\rightarrow1$  and $\Theta_s=\{\btheta\in\Theta: S(\btheta)\neq \emptyset\}$, we have
\begin{eqnarray*}
\sup_{\Theta_s}P(J_0<\sqrt{N}\delta_{N,T}^2 |\btheta)
&= & \sup_{\Theta_s}P(\widehat S=\emptyset|\btheta)\cr
&\leq & \sup_{\{\stheta\in\Theta: S(\stheta)\neq \emptyset\}} P(\widehat S=\emptyset, S(\btheta)\subset \widehat S|\btheta)+o(1),
\end{eqnarray*}
which converges to zero, since the first term is zero.
This implies $\inf_{\Theta_s}P(J_0\geq \sqrt{N}\delta_{N,T}^2|\btheta)\rightarrow1.$ Then by condition (ii),  as $\delta_{N,T}\rightarrow\infty$,
$$
\inf_{\stheta\in\Theta_s}P(J>F_q|\btheta)\geq \inf_{\stheta\in\Theta_s}P(\sqrt{N}\delta_{N,T}^2+J_1>F_q|\btheta)\geq\inf_{\theta\in\Theta_s}P(c\sqrt{N}+J_1>F_q|\btheta)\rightarrow1.
$$
This completes the proof.

\subsection{Proof of Theorem \ref{t3.3}}


\proof    It suffices to verify conditions (i)-(iii) in  Theorem \ref{t3.2} for $J_1=J_Q$. Condition (i) follows from Assumption \ref{a3.3}. Condition (iii) is fulfilled   for $c>2/\xi$, since
$$
\inf_{\stheta\in\Theta_s}P(c\sqrt{N}+J_Q>F_q|\btheta)\geq \inf_{\stheta\in\Theta_s}P(c\sqrt{N}-\frac{N(1+\mu_{N,T})}
{\xi_{N,T}\sqrt{N}}>F_q|\btheta) \to 1,
$$
by using $F_q=O(1)$, $\xi_{N,T}\rightarrow\xi$, and $ \mu_{N,T}\rightarrow0$.
We now verify condition (ii) for the $\Theta(J_Q)$ defined in the theorem.  Let $\bD=\diag(v_1,...,v_N)$.  Then $\|\bD\|_2<C_3/T$ by Assumption~\ref{a3.3}(iv). On the event
$A=\{\|(\htheta-\btheta)'\bD^{-1/2}\|^2<\delta_{N,T}^2N/4\}$,
we have
\begin{eqnarray*}
|(\htheta-\btheta)'\bV\btheta|&\leq&\|(\htheta-\btheta)'\bD^{-1/2}\|
\|\bD^{1/2}\bV\btheta\| \\
&\leq &  \delta_{N,T}\sqrt{N}\|\bD\|_2^{1/2}\|\bV\|_2^{1/2}(\btheta'\bV\btheta)^{1/2}/2 \\
&\leq &  \delta_{N,T}\sqrt{N} (C_3/T)^{1/2}\|\bV\|_2^{1/2}(\btheta'\bV\btheta)^{1/2}/2.
\end{eqnarray*}
For $\|\btheta\|^2>C\delta_{N,T}^2N/T$ with $C=4C_3\|\bV\|_2/\lambda_{\min}(\bV)$, we can bound further that
$$
|(\htheta-\btheta)'\bV\btheta| \leq \btheta'\bV\btheta/4.
$$
Hence,
$
\htheta'\bV\htheta\geq \btheta'\bV\btheta-2(\htheta-\btheta)'\bV\btheta\geq \btheta'\bV\btheta/2.
$
Therefore,
\begin{eqnarray*}
\sup_{\stheta\in\Theta(J_Q)}P(J_Q\leq F_q|\btheta) & \leq & \sup_{\Theta(J_Q)}P(\frac{T\btheta'\bV\btheta/2-2N}{\xi\sqrt{N}}\leq F_q|\btheta)+\sup_{\Theta(J_Q)}P(A^c|\btheta)\cr
&\leq&\sup_{\Theta(J_Q)}P({T\lambda_{\min}(\bV)\|\btheta\|^2}<2F_q{\xi\sqrt{N}}+4N|\btheta)+o(1)\cr
&\leq&\sup_{\Theta(J_Q)}P(  \lambda_{\min}(\bV)C\delta_{N,T}^2N  <5N|\btheta)+o(1),
\end{eqnarray*}
which converges to zero since $\delta_{N,T}^2\rightarrow\infty.$ This implies
$
\inf_{\Theta(J_Q)}P(J_Q>F_q|\btheta)\rightarrow1
$
and finishes the proof.

\subsection{Proof of Theorem \ref{t3.4}}

\begin{proof}

Through this proof, $C$ is a generic constant, which can vary from one line to another. Without loss of generality, under the alternative, write
$$
    \btheta'=(\btheta_1',\btheta_2')=(\bzero',\btheta_2'), \qquad \htheta'=(\htheta_1',\htheta_2'),
$$
where $\dim(\btheta_1)=N-r_N$ and $\dim(\btheta_2)=r_N$. Corresponding to  $(\btheta_1',\btheta_2')$, we partition $\bV^{-1}$ and $\bV$ into:
$$
 \bV^{-1}=\begin{pmatrix}
                                         \bM_1&\bbeta'\\
                                         \bbeta&\bM_2                                           \end{pmatrix}
\quad \mbox{and} \quad
\bV=\begin{pmatrix}
                                        \bM_1^{-1}+\bA&\bG'\\
                                        \bG&\bC                                         \end{pmatrix},
$$
where $\bM_1$ and  $\bA$ are $(N-r_N)\times (N-r_N)$; $\bbeta$ and $\bG$ are $ r_N\times (N-r_N)$; $\bM_2$ and $\bC$ are $r_N\times r_N.$

  By the matrix inversion formula,
$$
    \bA=\bM_1^{-1}\bbeta'(\bM_2-
    \bbeta\bM_1^{-1}\bbeta')^{-1}\bbeta\bM_1^{-1}.
$$

Let $\Delta = T\htheta'\bV\htheta-T\htheta_1'\bM_1^{-1}\htheta_1$.
Note that
$$
\Delta
=T\htheta_1'\bA\htheta_1+2T\htheta_2'\bG\htheta_1+T\htheta_2'\bC\htheta_2.
$$
We first look at $T\htheta_1'\bA\htheta_1$. Let $\lambda_{N, T} = T\lambda_{\max}((\bM_2-\bbeta\bM_1^{-1}\bbeta')^{-1})$ and
$\bD_1=\diag(\frac{1}{T}\bM_1)$.  Note that the diagonal entries of $\frac{1}{T}\bV^{-1}$ are given by $\diag(\frac{1}{T}\bV^{-1})=\{v_j\}_{j\leq N}$. Therefore $\bD_1$ is a diagonal matrix with entries $\{v_j\}_{j\leq N-r_N}$, and $\max_j v_j=O(T^{-1})$.

  Since $\bbeta$ is $r_N\times (N-r_N)$, using the expression of $\bA$,  we have
\begin{eqnarray*}
T\htheta_1'\bA\htheta_1 
&\leq& \lambda_{N, T}  \|\bbeta\bM_1^{-1} \htheta_1\|^2\cr
&\leq& \lambda_{N, T}  r_N \|\bM_1^{-1}(\htheta_1-\btheta_1)\|^2_{\max}(\max_{i\leq r_N}\sum_{j\leq N-r}|\beta_{ij}|)^2\cr
&\leq& \lambda_{N, T} r_N \|\bM_1^{-1}\bD_1^{1/2}\|_1^2\|\bD_1^{-1/2}(\htheta_1-\btheta_1)\|^2_{\max}   \|\bV^{-1}\|_1^2,
\end{eqnarray*}
where we used $\btheta_1 =0$ in the second inequality and the fact that $\max_{i\leq r_N}\sum_{j\leq N-r}|\beta_{ij}|\leq \|\bV^{-1}\|_1$.
Note that   $\|\bV\|_1=O(1)=\|\bV^{-1}\|_1$. Hence,
$$
    \|\bM_1^{-1}\bD_1^{1/2}\|_1^2=O(T^{-1}),   \quad \mbox{and} \quad
    \lambda_{N, T} =O(T).
$$
Thus, there is $C>0$, with probability approaching one,
$$
T\htheta_1'\bA\htheta_1\leq Cr_N\|\bD_1^{-1/2}(\htheta_1-\btheta_1)\|^2_{\max}\leq Cr_N\delta_{N,T}^2.
$$

Note that the uniform convergence in Assumption~\ref{a3.1} and boundness of $\|\btheta\|_{\max}$ imply that $P(\|\htheta\|_{\max}\leq C) \to 1$ for a sufficient large constant $C$.
For  $\bG=(g_{ij})$, note that $\max_{i\leq r}\sum_{j=1}^{N-r}|g_{ij}|\leq \|\bV\|_1$. Hence, by using $\btheta_1 =0$ again, with probability approaching one,
\begin{eqnarray*}
|T\htheta_2'\bG\htheta_1|&=&T|\htheta_2'\bG\bD_1^{1/2}\bD_1^{-1/2}
    (\htheta_1-\btheta_1)|  \\
    & \leq & T\|\htheta_2\|_{\max} \|\bD_1^{-1/2}(\htheta_1-\btheta_1)\|_{\max} \sum_{i=1}^{r_N}\sum_{j=1}^{N-r}|g_{ij}|\sqrt{ v_j}\cr
&\leq&C  r_N \delta_{N,T}\sqrt{T}.
\end{eqnarray*}
Moreover, $T\htheta_2'\bC\htheta_2\leq T \|\htheta_2\|^2\|\bC\|_2=O_P(r_NT)$.  Combining all the results above, it yields that for any $\btheta\in\Theta_b$,
$$
    \Delta=O_P(r_N\delta_{N,T}^2+r_NT).
$$

We denote $\var(\widehat\btheta), \var(\widehat\btheta_1), \var(\widehat\btheta_2)$ to be the asymptotic covariance matrix of $\widehat\btheta, \widehat\btheta_1$ and $\widehat\btheta_2.$ Then $\frac{1}{T}\bV^{-1}=\var(\widehat\btheta)$ and $\frac{1}{T}\bM_1=\var(\widehat\btheta_1)$. It then follows from (\ref{e3.3}) that
 $$Z\equiv\frac{T\htheta_1'\bM_1^{-1}\htheta_1-(N-r_N)}{\sqrt{2(N-r_N)}}\rightarrow^d\mathcal{N}(0,1).$$

For any $0<\epsilon<F_q$, define the event $A=\{|\Delta-r_N|<\sqrt{2N}\epsilon\}.$   Hence,    suppressing the dependence of $\btheta$,
 \begin{eqnarray*}
P(J_Q>F_q)&=&P(\frac{T\htheta_1'\bM_1^{-1}\htheta_1+\Delta-N}{\sqrt{2N}}>F_q)\\
&=& P(Z\sqrt{\frac{N-r_N}{N}}+\frac{\Delta-r_N}{\sqrt{2N}}>F_q) \cr
&\leq& P(Z\sqrt{\frac{N-r_N}{N}}+\epsilon>F_q)+P(A^c),
\end{eqnarray*}
which is further bounded by $1-\Phi(F_q-\epsilon)+P(A^c)+o(1)$.
Since $1-\Phi(F_q)=q$, for small enough $\epsilon$, $1-\Phi(F_q-\epsilon)= q + O(\epsilon)$. By letting $\varepsilon \to 0$ slower than $O(Tr_N/\sqrt{N})$, we have  $P(A^c)=o(1)$,  and
$\limsup_{N\to \infty, T\to \infty} P(J_Q>F_q) \leq q$.  On the other hand,
$P(J_Q > F_q ) \geq P(J_1 > F_q)$, which converges to $q$.  This proves the result.

 \end{proof}

\section{Proofs for Section 4}

\begin{lem}\label{lb.0} When $\cov(\bff_t)$ is positive definite, $E\bff_t'(E\bff_t\bff_t')^{-1}E\bff_t<1$.
\end{lem}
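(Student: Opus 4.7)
The plan is to compute $E\bff_t'(E\bff_t\bff_t')^{-1}E\bff_t$ explicitly in terms of the mean and covariance of $\bff_t$ by applying the Sherman--Morrison rank-one update formula. Set $\bmu = E\bff_t$ and $\bSigma = \cov(\bff_t)$, so that $E\bff_t\bff_t' = \bSigma + \bmu\bmu'$. Positive definiteness of $\bSigma$ (the hypothesis) then guarantees that $\bSigma^{-1}$ exists, and Sherman--Morrison gives
\[
(\bSigma + \bmu\bmu')^{-1} = \bSigma^{-1} - \frac{\bSigma^{-1}\bmu\bmu'\bSigma^{-1}}{1 + \bmu'\bSigma^{-1}\bmu}.
\]

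Left- and right-multiplying by $\bmu'$ and $\bmu$ and simplifying, I would obtain
\[
\bmu'(\bSigma + \bmu\bmu')^{-1}\bmu = \frac{\bmu'\bSigma^{-1}\bmu}{1+\bmu'\bSigma^{-1}\bmu}.
\]
Since $\bSigma^{-1}$ is positive semidefinite, the quantity $a := \bmu'\bSigma^{-1}\bmu \geq 0$, so the ratio $a/(1+a)$ lies in $[0,1)$, which is exactly the desired strict inequality. (An equivalent one-line alternative would be to note that $E\bff_t\bff_t' - \bmu\bmu' = \bSigma \succ 0$ implies $(E\bff_t\bff_t')^{-1/2}\bmu\bmu'(E\bff_t\bff_t')^{-1/2} \prec \bI$, whose largest eigenvalue equals $\bmu'(E\bff_t\bff_t')^{-1}\bmu$; hence this scalar is $<1$.)

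There is essentially no obstacle: the only thing to be careful about is that both $\bSigma$ and $E\bff_t\bff_t'$ are invertible, which is ensured because $\bSigma \succ 0$ and $E\bff_t\bff_t' = \bSigma + \bmu\bmu' \succeq \bSigma \succ 0$. The argument does not require any moment condition beyond the existence of $E\bff_t$ and $E\bff_t\bff_t'$, which are implicit in the statement of the lemma.
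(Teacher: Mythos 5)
Your proof is correct, but it takes a different route from the paper. The paper argues directly and more elementarily: it treats the case $E\bff_t=\bzero$ trivially, and for $E\bff_t\neq\bzero$ it sets $\bc=(E\bff_t\bff_t')^{-1}E\bff_t$ and uses $\bc'\cov(\bff_t)\bc>0$ to get $\bc'E\bff_tE\bff_t'\bc<\bc'E\bff_t\bff_t'\bc$, which reads as $x^2<x$ for $x=E\bff_t'(E\bff_t\bff_t')^{-1}E\bff_t$, whence $x<1$ — no matrix inversion identity is needed. Your Sherman--Morrison computation instead yields the exact value
$$
E\bff_t'(E\bff_t\bff_t')^{-1}E\bff_t=\frac{\bmu'\bSigma^{-1}\bmu}{1+\bmu'\bSigma^{-1}\bmu},
$$
which is more informative (it quantifies how far the quantity is from $1$ in terms of $\bmu'\bSigma^{-1}\bmu$, and handles $\bmu=\bzero$ uniformly), at the cost of invoking the rank-one update formula; your eigenvalue alternative via $\bmu\bmu'\prec E\bff_t\bff_t'$ is likewise valid. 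All three arguments rest on the same fact that $\cov(\bff_t)\succ0$ makes $E\bff_t\bff_t'$ strictly larger than $E\bff_tE\bff_t'$, so the choice is a matter of taste; your version buys an explicit formula, the paper's buys brevity and avoids any appeal to the inversion identity.
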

\proof  If $E\bff_t=0$, then $E\bff_t'(E\bff_t\bff_t')^{-1}E\bff_t<1$. If $E\bff_t\neq 0$,  because $\cov(\bff_t)$ is positive definite,  let $\bc=(E\bff_t\bff_t')^{-1}E\bff_t$, then $\bc'(E\bff_t\bff_t'-E\bff_tE\bff_t')\bc>0$.  Hence $\bc'E\bff_tE\bff_t'\bc<\bc'E\bff_t\bff_t'\bc$ implies
$
E\bff_t'(E\bff_t\bff_t')^{-1}E\bff_t>(E\bff_t'(E\bff_t\bff_t')^{-1}E\bff_t)^2.
$
This implies $E\bff_t'(E\bff_t\bff_t')^{-1}E\bff_t<1.$

\subsection{Proof of Proposition \ref{p4.1}}

Recall that $v_j=\var(u_{jt})/(T-TE\bff_t'(E\bff_t\bff_t')^{-1}E\bff_t),$ and  $\widehat v_j=\frac{1}{T}\sum_{t=1}^T\widehat u_{jt}^2/(Ta_{f, T})$.
Write $\sigma_{ij}=(\Sig_u)_{ij}$,  $\hsig_{ij}=\frac{1}{T}\sum_{t=1}^T\widehat u_{it}\widehat u_{jt}$,  $\sigma_j^2=Tv_j,$ and $ \widehat\sigma_j^2=T\widehat v_j$.

Simple calculations yield
\begin{equation*}
\widehat\theta_i=\theta_i+a_{f, T}^{-1}\frac{1}{T}\sum_{t=1}^Tu_{it}(1-\bff_t'\bw).
\end{equation*}


We first prove the second statement.  Note that there is $\sigma_{\min}>0$ (independent of $\btheta$) so that $\min_j\sigma_j>\sigma_{\min}$.
By Lemma \ref{la.1add},  there is $C>0$, $\inf_{\Theta}P(\max_{j\leq N}|\hsig_j-\sigma_j|<C\sqrt{\frac{\log N}{T}}|\btheta)\rightarrow1$.
On the event $\{\max_{j\leq N}|\hsig_j-\sigma_j|<C\sqrt{\frac{\log N}{T}}\}$,
 $$
 \max_{j\leq N}\left|\frac{\widehat v_j^{1/2}}{v_j^{1/2}}-1\right|\leq\max_{j\leq N}\frac{|\hsig_j-\sigma_j|}{\sigma_j}\leq\frac{C\sqrt{\log N}}{\sigma_{\min}\sqrt{T}}.
 $$
This proves the second statement.  We can now use this to prove  the first statement.

 Note that $v_j$   is independent of $\btheta$, so there is $C_1$ (independent of $\btheta$) so that $\max_{j\leq N}v_j^{-1/2}<C_1\sqrt{T}$.
On the event $\{\max_{j\leq N}v_j^{1/2}/\widehat v_j^{1/2}<2\}\cap \{\max_{j\leq N}|\widehat\theta_j-\theta_j|<C\sqrt{\frac{\log N}{T}}\}$,
$$
\max_{j\leq N}\frac{|\widehat\theta_j-\theta_j|}{\widehat v_j^{1/2}}\leq C\sqrt{\frac{\log N}{T}}2\max_{j}v_j^{-1/2}\leq  2CC_1\sqrt{\log N}<\delta_{N,T}.$$
The constants $C, C_1$ appeared are independent of $\btheta$, and Lemma \ref{la.1add} holds uniformly in $\btheta$. Hence the desired result also holds uniformly in $\btheta.$

\subsection{Proof of Proposition \ref{p4.2}}

By Theorem 1 of \cite{PY}(Theorem 1),
$$
    (Ta_{f, T}\htheta'\Sig_u^{-1}\htheta-N)/\sqrt{2N} \rightarrow^d \mathcal{N}(0,1).
$$
Therefore, we only need to show
$$
    \frac{T\htheta'(\Sig_u^{-1}-\hSig_u^{-1})\htheta}{\sqrt{2N}}=o_P(1).
$$

The left hand side is equal to
$$
\frac{T\htheta'\Sig_u^{-1}(\hSig_u-\Sig_u)\Sig_u^{-1}\htheta'}{\sqrt{N}}+\frac{T\htheta'(\hSig_u^{-1}-\Sig_u^{-1})(\hSig_u-\Sig_u)\Sig_u^{-1}\htheta'}{\sqrt{N}}\equiv a+b.
$$
It was shown by \cite{FLM11} that $\|\hSig_u-\Sig_u\|_2=O_P(m_N\sqrt{\frac{\log N}{T}})=\|\hSig_u^{-1}-\Sig_u^{-1}\|_2$. In addition, under $H_0$, $\|\htheta\|^2=O_P(N\log N/T)$. Hence $b=O_P(\frac{m_N^2\sqrt{N}(\log N)^2}{T})=o_P(1)$.

The challenging part is to prove $a=o_P(1)$ when $N>T$. As is described in the main text, simple inequalities like Cauchy-Schwarz accumulate estimation errors, and hence do not work.
Define $\be_t=\Sig_u^{-1}\bu_t=(e_{1t},...,e_{Nt})'$, which is an $N$-dimensional vector with mean zero and covariance $\Sig_u^{-1}$, whose entries are stochastically bounded.  Let $\bar \bw=(E\bff_t\bff_t')^{-1}E\bff_t$. A key step of proving this  proposition is to establish the following two convergences:  \begin{equation}\label{eb.1}
    \frac{1}{T}E|\frac{1}{\sqrt{NT}}\sum_{i=1}^N
    \sum_{t=1}^T(u_{it}^2-Eu_{it}^2)(\frac{1}{\sqrt{T}}\sum_{s=1}^Te_{is}
    (1-\bff_s' \bar \bw))^2|^2=o(1),
\end{equation}
\begin{equation}\label{eb.2}
   \frac{1}{T}E|\frac{1}{\sqrt{NT}}\sum_{i\neq j, (i,j)\in S_U}\sum_{t=1}^T(u_{it}u_{jt}-Eu_{it}u_{jt})
    [\frac{1}{\sqrt{T}}\sum_{s=1}^Te_{is}(1-\bff_s'\bar \bw)]
    [\frac{1}{\sqrt{T}}\sum_{k=1}^Te_{jk}(1-\bff_k' \bar \bw)]|^2=o(1),
\end{equation}
where $$S_U=\{(i,j): (\Sig_u)_{ij}\neq0\}.$$
The sparsity condition assumes that most of the off-diagonal entries of $\Sig_u$ are outside of $S_U$. The above two convergences are weighted cross-sectional and serial double sums, where the weights satisfy
$\frac{1}{\sqrt{T}}\sum_{t=1}^Te_{it}(1-\bff_t'\bar\bw)=O_P(1)$ for each $i$.  The proofs of (\ref{eb.1}) and (\ref{eb.2})  are given in the supplementary material in Appendix D.

 We consider the hard-thresholding covariance estimator.  The proof for the generalized sparsity case  as in \cite{Rothman09} is  very similar.
 Let $s_{ij}= \frac{1}{T}\sum_{t=1}^T\widehat u_{it}\widehat u_{jt}$ and $\sigma_{ij}=(\Sig_u)_{ij}$.
 Under hard-thresholding,
\begin{equation*}\label{eqa.1}
\hsig_{ij}=(\widehat \Sig_{u})_{ij}=\begin{cases}
s_{ii}, & \text{ if } i=j,\\
 s_{ij}, & \text{ if } i\neq j, |s_{ij}|>C(s_{ii}s_{jj}\frac{\log N}{T})^{1/2}\\
 0, & \text{ if } i\neq j, |s_{ij}|\leq C(s_{ii}s_{jj}\frac{\log N}{T})^{1/2}
\end{cases}
\end{equation*}
Write $(\htheta'\Sig_u^{-1})_i$ to denote the $i$th element of $\htheta'\Sig_u^{-1}$, and $S_U^c=\{(i,j): (\Sig_u)_{ij}=0\}$.
For $\sigma_{ij}\equiv(\Sig_u)_{ij}$ and $\widehat\sigma_{ij}=(\widehat\Sig_u)_{ij}$, we have
\begin{eqnarray*}
a&=&\frac{T}{\sqrt{N}}\sum_{i=1}^N(\htheta'\Sig_u^{-1})_i^2(\hsig_{ii}-\sigma_{ii})+\frac{T}{\sqrt{N}}\sum_{i\neq j, (i,j)\in S_U}(\htheta'\Sig_u^{-1})_i(\htheta'\Sig_u^{-1})_j (\hsig_{ij}-\sigma_{ij})\cr
&&+\frac{T}{\sqrt{N}}\sum_{(i,j)\in S_U^c}(\htheta'\Sig_u^{-1})_i(\htheta'\Sig_u^{-1})_j (\hsig_{ij}-\sigma_{ij})\cr
&=&a_1+a_2+a_3\end{eqnarray*}

We first examine $a_3. $  Note that
$$
    a_3=\frac{T}{\sqrt{N}}\sum_{(i,j)\in S_U^c}(\htheta'\Sig_u^{-1})_i(\htheta'\Sig_u^{-1})_j \hsig_{ij}. 
$$
Obviously,
$$
    P(a_{3}>T^{-1})\leq P(\max_{(i,j)\in S_U^c}|\hsig_{ij}|\neq0)\leq P(\max_{(i,j)\in S_U^c}|s_{ij}|>C(s_{ii}s_{jj}\frac{\log N}{T})^{1/2}).
$$
Because $s_{ii}$ is uniformly (across $i$) bounded away from zero with probability approaching one, and $\max_{(i,j)\in S_U^c}|s_{ij}|=O_P(\sqrt{\frac{\log N}{T}})$. Hence for any $\epsilon>0$, when $C$ in the threshold  is large enough,  $P(a_{3}>T^{-1})<\epsilon$, this implies   $a_3=o_P(1)$.

The proof is finished once we establish $a_i=o_P(1)$ for $i=1,2$, which are  given in Lemmas \ref{la.3} and \ref{lb.2add} respectively in the supplementary material.


\textbf{Proof of Theorem \ref{t4.1}}
Part (i) follows from Proposition \ref{p4.2} and that $P(J_0=0|H_0)\rightarrow1.$ Part (ii) follows immediately from Theorem \ref{t3.3}.

\section{Proofs for Section 5}

\subsection{Proof of Proposition \ref{t5.1}}


\begin{lem}\label{lc.1}
Under Assumption \ref{a5.1},
$
\inf_{\stheta\in\Theta}P(\sqrt{nT}\|\hbeta-\bbeta\|<\sqrt{\log n}|\btheta)\rightarrow1.$
\end{lem}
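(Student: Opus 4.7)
\textbf{Proof plan for Lemma~\ref{lc.1}.}

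\medskip

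My starting point would be the closed-form expression for the within-OLS estimator,
$$
\widehat\bbeta-\bbeta \;=\; \bM^{-1}\,\br,\qquad
\bM=\frac{1}{nT}\sum_{i=1}^n\sum_{t=1}^T \widetilde\bx_{it}\widetilde\bx_{it}',\qquad
\br=\frac{1}{nT}\sum_{i=1}^n\sum_{t=1}^T \widetilde\bx_{it}\widetilde u_{it}.
$$
A key simplification is that, since $\sum_t \widetilde\bx_{it}=\bzero$ for every $i$, we have $\sum_t \widetilde\bx_{it}\widetilde u_{it}=\sum_t \widetilde\bx_{it}u_{it}$, so I can work with $u_{it}$ directly in the score. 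The plan is then to show (a) $\lambda_{\min}(\bM)$ is bounded away from zero uniformly in $\btheta$ with probability tending to one, and (b) $E(\|\br\|^2\,|\,\btheta)=O((nT)^{-1})$ uniformly in $\btheta\in\Theta$, and combine these through a Chebyshev-type argument.

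\medskip

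For step (a), I would write $\bM=\frac{1}{n}\sum_{j=1}^n \bM_j$ with $\bM_j=\frac{1}{T}\sum_{t=1}^T\widetilde\bx_{jt}\widetilde\bx_{jt}'$. Assumption~\ref{a5.2}(ii) gives, with probability tending to one, $\lambda_{\min}(\bM_j)\geq c>0$ for all $j\leq n$. Since positive semidefinite matrices satisfy $\lambda_{\min}(\bA+\bB)\geq \lambda_{\min}(\bA)+\lambda_{\min}(\bB)$, it follows that $\lambda_{\min}(\bM)\geq c$ on the same event. Note that the distribution of $\{\bx_{it}\}$ does not depend on $\btheta$, so this event and its probability are uniform in $\btheta$.

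\medskip

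For step (b), the independence of $\{\bx_{it}\}$ and $\{\bu_t\}$ together with the cross-sectional i.i.d.~(in $t$) structure of $\bu_t$ in Assumption~\ref{a5.2}(i) gives
$$
E\Bigl(\bigl\|\sum_{i,t}\widetilde\bx_{it}u_{it}\bigr\|^{2}\,\Big|\,\btheta\Bigr)
=\sum_{t=1}^T\sum_{i,j=1}^n E\bigl(\widetilde\bx_{it}'\widetilde\bx_{jt}\bigr)\,E(u_{it}u_{jt}\,|\,\btheta),
$$
where the cross-time terms ($t\neq s$) vanish because $E(u_{it}u_{js}\,|\,\btheta)=0$. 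The diagonal sum $\sum_i E\|\widetilde\bx_{it}\|^{2}\,E(u_{it}^2\,|\,\btheta)$ is $O(n)$ uniformly in $\btheta$ by the bounded-moment part of Assumption~\ref{a5.1}(ii), while Assumption~\ref{a5.1}(i) directly bounds the off-diagonal sum $\sum_{i\neq j}|E(\widetilde\bx_{it}'\widetilde\bx_{jt})E(u_{it}u_{jt}\,|\,\btheta)|$ by $C_1 n$, uniformly in $\btheta\in\Theta$. Combining, the double sum is $O(nT)$ uniformly, hence $E(\|\br\|^2\,|\,\btheta)=O((nT)^{-1})$ uniformly.

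\medskip

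To conclude, on the event $\lambda_{\min}(\bM)\geq c$ one has $\|\widehat\bbeta-\bbeta\|\leq c^{-1}\|\br\|$, so by Markov's inequality applied conditionally on $\btheta$,
$$
\sup_{\stheta\in\Theta} P\bigl(nT\|\widehat\bbeta-\bbeta\|^{2}\geq \log n\,\big|\,\btheta\bigr)
\;\leq\; \sup_{\stheta\in\Theta}\frac{c^{-2}\,nT\,E(\|\br\|^{2}\,|\,\btheta)}{\log n}+o(1)
\;=\;O\!\Bigl(\frac{1}{\log n}\Bigr)\to 0,
$$
which is exactly the claim. The only mildly delicate point is ensuring that each bound is uniform in $\btheta$; this is guaranteed because the moment bounds enter through Assumption~\ref{a5.1}, which is stated uniformly, and the eigenvalue bound on $\bM$ does not depend on $\btheta$. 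No large-deviation or Gaussian concentration machinery is required here, since the target rate $\sqrt{\log n}/\sqrt{nT}$ is strictly slower than the $L_2$ rate $1/\sqrt{nT}$ that the second-moment calculation already delivers; the extra $\sqrt{\log n}$ factor just absorbs the Markov slack.
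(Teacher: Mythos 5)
Your proposal is correct and follows essentially the same route as the paper: the same decomposition $\widehat\bbeta-\bbeta=\bM^{-1}\br$, the same second-moment bound on the score using serial independence plus Assumption \ref{a5.1}(i) for the cross-sectional terms, and the same conclusion via Chebyshev together with the lower bound on $\lambda_{\min}(\bM)$. The only differences are cosmetic: you replace $\widetilde u_{it}$ by $u_{it}$ in the score and spell out the eigenvalue bound from Assumption \ref{a5.2}(ii), steps the paper leaves implicit.
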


\proof Note that
$$
    \sqrt{nT}\|\hbeta-\bbeta\|=\|(\frac{1}{nT}\sum_{i=1}^n\sum_{t=1}^T\tx_{it}\tx_{it}')^{-1}(\frac{1}{\sqrt{nT}}\sum_{i=1}^n\sum_{t=1}^T\tx_{it}\widetilde u_{it})\|.
$$
Uniformly for $\btheta\in\Theta$, due to serial independence, and $\frac{1}{nT}\sum_{i=1}^n\sum_{t=1}^TE\tx_{it}'\tx_{it}E\widetilde u_{it}\widetilde u_{it}\leq C_1$,
\begin{eqnarray*}
 E\|\frac{1}{\sqrt{nT}}\sum_{i=1}^n\sum_{t=1}^T\tx_{it}\widetilde u_{it}\|^2&=&\frac{1}{nT}\sum_{i=1}^n\sum_{t=1}^T\sum_{j=1}^n\sum_{s=1}^TE\tx_{it}'\tx_{js}\widetilde u_{it}\widetilde u_{js}\cr
&=&\frac{1}{nT}\sum_{i=1}^n\sum_{t=1}^TE\tx_{it}'\tx_{it}E\widetilde u_{it}\widetilde u_{it}+\frac{1}{nT}\sum_{i\neq j}\sum_{t=1}^TE\tx_{it}'\tx_{jt}E\widetilde u_{it}\widetilde u_{jt}\cr
&\leq& C_1+\frac{1}{n}\sum_{i\neq j}|E\tx_{it}'\tx_{jt}| |E\widetilde u_{it}\widetilde u_{jt}|\leq C.
\end{eqnarray*}
Hence the result follows from the Chebyshev inequality and  that $\lambda_{\min}(\frac{1}{nT}\sum_{i=1}^n\sum_{t=1}^T\tx_{it}\tx_{it}')$ is bounded away from zero with probability approaching one, uniformly in $\btheta$.

\begin{lem} \label{lc.2}Suppose $\max_{j\leq n}\|\frac{1}{T}\sum_t\tx_{jt}\tx_{jt}'\|_2<C'$ with probability approaching one and $\sup_{\theta}E(u_{jt}^4|\btheta)<C'$. There is $C>0$, so that  \\
(i) $
\inf_{\stheta\in\Theta}P(\max_{j\leq n}|\frac{1}{T}\sum_{t=1}^T  u_{jt}|<C\sqrt{\log n/T}|\btheta)\rightarrow1$\\
(ii) $
\inf_{\stheta\in\Theta}P(\max_{i,j\leq n}|\frac{1}{T}\sum_{t=1}^T  u_{it}u_{jt}-Eu_{it}u_{jt}|<C\sqrt{\log n/T}|\btheta)\rightarrow1$\\
(iii) $
\inf_{\stheta\in\Theta}P(\max_{j\leq n}\frac{1}{T}\sum_{t=1}^T  (u_{jt}-\widehat u_{jt})^2<C\log n/T|\btheta)\rightarrow1$\\
(iv)   $
\inf_{\stheta\in\Theta}P(\max_{i,j\leq n}|\frac{1}{T}\sum_{t=1}^T  \widehat u_{it}\widehat u_{jt}-Eu_{it}u_{jt}|<C\sqrt{\log n/T}|\btheta)\rightarrow1$
\end{lem}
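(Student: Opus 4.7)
\textbf{Proof proposal for Lemma \ref{lc.2}.}
The plan is to handle the four statements in order, leveraging the Gaussianity of $\bu_t$ under Assumption~\ref{a5.2}(i) and the $\sqrt{nT}$-consistency of $\hbeta$ from Lemma~\ref{lc.1}.

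For (i), since $\bu_t$ is i.i.d.\ $N(0,\Sig_u)$ and $\sigma_{jj}=\var(u_{jt})$ is bounded above uniformly in $\btheta\in\Theta$ (as $\max_j E(u_{jt}^2|\btheta)\leq (\max_j E(u_{jt}^4|\btheta))^{1/2}\leq C'^{1/2}$), the average $\frac{1}{T}\sum_t u_{jt}$ is $N(0,\sigma_{jj}/T)$. A standard Gaussian tail bound gives $P(|\frac{1}{T}\sum_t u_{jt}|>C\sqrt{\log n/T}\,|\,\btheta)\leq 2\exp(-C^2\log n/(2C'^{1/2}))$, and a union bound over $j\leq n$, for $C$ large enough, yields a bound that is $o(1)$ uniformly in $\btheta$. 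For (ii), I would use that products of Gaussians $u_{it}u_{jt}$ are sub-exponential with parameter bounded uniformly in $(i,j,\btheta)$. Bernstein's inequality for sub-exponential sums then gives $P(|\frac{1}{T}\sum_t(u_{it}u_{jt}-Eu_{it}u_{jt})|>C\sqrt{\log n/T}\,|\,\btheta)\leq 2\exp(-c C^2\log n)$ for $C\sqrt{\log n/T}$ small, and a union bound over the $n^2$ pairs closes the estimate when $C$ is chosen large enough.

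For (iii), write $r_{jt}:=u_{jt}-\widehat u_{jt}=\bar u_j+\tx_{jt}'(\hbeta-\bbeta)$ where $\bar u_j=T^{-1}\sum_s u_{js}$. Then
$$
\frac{1}{T}\sum_t r_{jt}^2\;\leq\; 2\bar u_j^2+2(\hbeta-\bbeta)'\Bigl(\frac{1}{T}\sum_t\tx_{jt}\tx_{jt}'\Bigr)(\hbeta-\bbeta).
$$
By (i), $\max_j\bar u_j^2=O_P(\log n/T)$ uniformly in $\btheta$; by Lemma~\ref{lc.1} and the assumption $\max_j\|T^{-1}\sum_t\tx_{jt}\tx_{jt}'\|_2<C'$, the second term is bounded by $C'\|\hbeta-\bbeta\|^2=O_P(\log n/(nT))$, which is $o(\log n/T)$. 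Taking the max over $j$ and invoking the uniformity in $\btheta$ of the constituent bounds delivers (iii).

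For (iv), decompose
$$
\frac{1}{T}\sum_t\widehat u_{it}\widehat u_{jt}-\frac{1}{T}\sum_t u_{it}u_{jt}=-\frac{1}{T}\sum_t u_{it}r_{jt}-\frac{1}{T}\sum_t r_{it}u_{jt}+\frac{1}{T}\sum_t r_{it}r_{jt}.
$$
Applying Cauchy--Schwarz to each term and using (ii) with $i=j$ to control $\frac{1}{T}\sum_t u_{it}^2$ uniformly and (iii) to control $\frac{1}{T}\sum_t r_{jt}^2$ uniformly yields
$$
\max_{i,j}\Bigl|\frac{1}{T}\sum_t\widehat u_{it}\widehat u_{jt}-\frac{1}{T}\sum_t u_{it}u_{jt}\Bigr|\;=\;O_P\bigl(\sqrt{\log n/T}\bigr),
$$
so combining with (ii) finishes (iv). The main obstacle is the cross-sectional uniformity in (ii), where the $n^2$ union bound forces the constant $C$ to be chosen after fixing the sub-exponential parameter; beyond that, (iii) and (iv) reduce to bookkeeping of Cauchy--Schwarz estimates already supplied by (i), (ii), and Lemma~\ref{lc.1}.
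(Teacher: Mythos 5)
Your proposal is correct and follows essentially the same route as the paper: exponential tail bounds (you use Gaussian/sub-exponential concentration where the paper invokes Bernstein's inequality with explicit moment constants) plus a union bound for (i)–(ii), the residual decomposition $u_{jt}-\widehat u_{jt}=\bar u_j+\tx_{jt}'(\hbeta-\bbeta)$ combined with Lemma \ref{lc.1} and the design-matrix bound for (iii), and Cauchy--Schwarz together with a uniform bound on $\max_j T^{-1}\sum_t u_{jt}^2$ (obtained from (ii)) for (iv). The minor differences in how the tail bounds are packaged do not change the argument.
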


\proof (i)   By the Bernstein inequality, for   $C=(8\max_{j\leq n}\sup_{\stheta\in\Theta}E(u_{jt}^2|\btheta))^{1/2}$, we have
\begin{eqnarray*}
&&\sup_{\stheta\in\Theta}P(\max_{j\leq n}|\frac{1}{T}\sum_{t=1}^T  u_{jt}|\geq C\sqrt{\frac{\log n}{T}}|\btheta)\leq \sup_{\stheta\in\Theta}n\max_{j\leq n}P(|\frac{1}{T}\sum_{t=1}^T  u_{jt}|\geq C\sqrt{\frac{\log n}{T}}|\btheta)\cr
&&\leq  \exp(\log n-\frac{C^2\log n}{4\max_{j\leq n}\sup_{\stheta\in\Theta}E(u_{jt}^2|\btheta)})=\frac{1}{n}.
\end{eqnarray*}
Hence (i) is proved as
$
\inf_{\stheta\in\Theta}P(\max_{j\leq n}|\frac{1}{T}\sum_{t=1}^T  u_{jt}|<C\sqrt{\log n/T}|\btheta)\geq 1-\frac{1}{n}.
$

(ii)  For $C=(12\max_{j\leq n}\sup_{\stheta\in\Theta}E(u_{jt}^4|\btheta))^{1/2}$, we have
 \begin{eqnarray*}
&&\sup_{\stheta\in\Theta}P(\max_{i,j\leq n}|\frac{1}{T}\sum_{t=1}^T  u_{it}u_{jt}-Eu_{it}u_{jt}|\geq C\sqrt{\frac{\log n}{T}}|\btheta)\cr
&&\leq \sup_{\stheta\in\Theta}n^2\max_{i,j\leq n}P(|\frac{1}{T}\sum_{t=1}^T  u_{it}u_{jt}-Eu_{it}u_{jt}|\geq C\sqrt{\frac{\log n}{T}}|\btheta)\cr
&&\leq  \exp(2\log n-\frac{C^2\log n}{4\max_{j\leq n}\sup_{\stheta\in\Theta}E(u_{jt}^4|\btheta)})=\frac{1}{n}.
\end{eqnarray*}

(iii) Note that $\widehat u_{jt}-u_{jt}=-\frac{1}{T}\sum_{t=1}^Tu_{jt}-\tx_{jt}'(\hbeta-\bbeta)$, and $\max_{j\leq n}\|\frac{1}{T}\sum_t\tx_{jt}\tx_{jt}'\|_2<C$ with probability approaching one. The result then follows from part (i) and Lemma  \ref{lc.1}.

 (iv) Observe that
 \begin{eqnarray*}
 &&   |\frac{1}{T}\sum_{t=1}^T  \widehat u_{it}\widehat u_{jt}-Eu_{it}u_{jt}|\leq|\frac{1}{T}\sum_{t=1}^T   u_{it}  u_{jt}-Eu_{it}u_{jt}|+|\frac{1}{T}\sum_{t=1}^T   u_{it}  u_{jt}-\widehat u_{it}\widehat u_{jt}|\cr
 &&\leq |\frac{1}{T}\sum_{t=1}^T   u_{it}  u_{jt}-Eu_{it}u_{jt}|+\frac{1}{T}\sum_{t=1}^T  (\widehat u_{jt}-u_{jt})^2+(\frac{2}{T}\sum_tu_{jt}^2)^{1/2}(\frac{2}{T}\sum_t(\widehat u_{jt}-u_{jt})^2)^{1/2}
\end{eqnarray*}
The first two terms and  $(\frac{2}{T}\sum_t(\widehat u_{jt}-u_{jt})^2)^{1/2}$ in the third term are bounded by results in (ii) and (iii). Therefore, it suffices to show that there is a constant $M>0$ so that
$$
\inf_{\stheta\in\Theta}P(\max_{j\leq n}\frac{1}{T}\sum_t u_{jt}^2<M|\btheta)\rightarrow1.
$$
Note that $\max_{j\leq n}\frac{1}{T}\sum_t u_{jt}^2\leq \max_{j\leq n}|\frac{1}{T}\sum_t u_{jt}^2-Eu_{jt}^2|+\max_{j\leq n}Eu_{jt}^2$.
In addition, by (ii), there is $C>0$ so that
$$
\inf_{\stheta\in\Theta}P(\max_{j\leq n}|\frac{1}{T}\sum_{t=1}^T  u_{jt}^2-Eu_{jt}^2|<C\sqrt{\log n/T}|\btheta)\rightarrow1.
$$
Hence we can pick up $M$ so that $M-\sup_{\stheta\in\Theta}\max_{j\leq n}E(u_{jt}^2|\btheta)>C\sqrt{\log n/T}$, and
  \begin{eqnarray*}
 && \sup_{\stheta\in\Theta}P(\max_{j\leq n}\frac{1}{T}\sum_t u_{jt}^2\geq M|\btheta)\leq  \sup_{\stheta\in\Theta}P( \max_{j\leq n}|\frac{1}{T}\sum_t u_{jt}^2-Eu_{jt}^2|    \geq M- \max_{j\leq n}Eu_{jt}^2 |\btheta)\cr
  &&\leq  \sup_{\stheta\in\Theta}P( \max_{j\leq n}|\frac{1}{T}\sum_t u_{jt}^2-Eu_{jt}^2|    \geq C\sqrt{\frac{\log n}{T}} |\btheta)\rightarrow0.
\end{eqnarray*}
This proves the desired result.

\begin{lem} \label{lc.3} Under Assumption \ref{a5.1},  there is $C>0$,
$\inf_{\stheta\in\Theta}P(\max_{ij}|\widehat \rho_{ij}-\rho_{ij}|<C\sqrt{\log n/T}|\btheta)\rightarrow1.$
\end{lem}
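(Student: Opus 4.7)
The plan is to reduce the claim about correlations to the uniform bound on covariances already proved in Lemma \ref{lc.2}(iv), via an elementary algebraic decomposition. First I would note that on $\Theta$, Assumption \ref{a5.1}(ii) guarantees $\sigma_{jj}=E(u_{jt}^2|\btheta)\in[C_2, C_1]$ uniformly in $j$ and $\btheta$, and hence $\sigma_{ii}^{1/2}\sigma_{jj}^{1/2}$ is bounded away from both zero and infinity. Lemma \ref{lc.2}(iv) applied with $i=j$ then gives $\max_j|\widehat\sigma_{jj}-\sigma_{jj}|=O_P(\sqrt{\log n/T})$ uniformly in $\btheta$, which in particular implies that with probability approaching one (uniformly in $\btheta$), all $\widehat\sigma_{jj}$ lie in a compact interval inside $(0,\infty)$, so the denominators $\widehat\sigma_{ii}^{1/2}\widehat\sigma_{jj}^{1/2}$ are also uniformly bounded away from zero.

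Next I would use the identity
\begin{equation*}
\widehat\rho_{ij}-\rho_{ij}=\frac{\widehat\sigma_{ij}-\sigma_{ij}}{\widehat\sigma_{ii}^{1/2}\widehat\sigma_{jj}^{1/2}}+\sigma_{ij}\left(\frac{1}{\widehat\sigma_{ii}^{1/2}\widehat\sigma_{jj}^{1/2}}-\frac{1}{\sigma_{ii}^{1/2}\sigma_{jj}^{1/2}}\right).
\end{equation*}
For the first term, the numerator is $O_P(\sqrt{\log n/T})$ uniformly by Lemma \ref{lc.2}(iv), and the denominator is bounded below by a positive constant with probability tending to one. For the second term, $|\sigma_{ij}|\leq \rho_{\max}\sigma_{ii}^{1/2}\sigma_{jj}^{1/2}$ is uniformly bounded, and the bracket equals
\begin{equation*}
\frac{\sigma_{ii}^{1/2}\sigma_{jj}^{1/2}-\widehat\sigma_{ii}^{1/2}\widehat\sigma_{jj}^{1/2}}{\widehat\sigma_{ii}^{1/2}\widehat\sigma_{jj}^{1/2}\sigma_{ii}^{1/2}\sigma_{jj}^{1/2}}.
\end{equation*}
Using the elementary inequality $|\sqrt{a}-\sqrt{b}|\leq |a-b|/(\sqrt{a}+\sqrt{b})$ together with the uniform positivity of $\sigma_{jj}$ and $\widehat\sigma_{jj}$, I would bound $|\widehat\sigma_{ii}^{1/2}-\sigma_{ii}^{1/2}|$ and $|\widehat\sigma_{jj}^{1/2}-\sigma_{jj}^{1/2}|$ by a constant multiple of $|\widehat\sigma_{ii}-\sigma_{ii}|$ and $|\widehat\sigma_{jj}-\sigma_{jj}|$ respectively, and then split the numerator via $\widehat\sigma_{ii}^{1/2}\widehat\sigma_{jj}^{1/2}-\sigma_{ii}^{1/2}\sigma_{jj}^{1/2}=(\widehat\sigma_{ii}^{1/2}-\sigma_{ii}^{1/2})\widehat\sigma_{jj}^{1/2}+\sigma_{ii}^{1/2}(\widehat\sigma_{jj}^{1/2}-\sigma_{jj}^{1/2})$. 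Each piece is then $O_P(\sqrt{\log n/T})$ uniformly by Lemma \ref{lc.2}(iv).

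Combining, there exists a constant $C>0$, not depending on $\btheta$, such that on the intersection of the events provided by Lemma \ref{lc.2}(iv) (for covariances and for variances) the bound $\max_{ij}|\widehat\rho_{ij}-\rho_{ij}|<C\sqrt{\log n/T}$ holds, and the intersection has probability tending to one uniformly in $\btheta\in\Theta$. There is no serious obstacle here: the only subtlety is to verify that every constant appearing in the chain of inequalities can be chosen independently of $\btheta$, which is ensured because the positivity and boundedness constants in Assumption \ref{a5.1}(ii) are uniform in $\btheta$, and Lemma \ref{lc.2}(iv) itself is stated uniformly.
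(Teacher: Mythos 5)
Your proposal is correct and follows essentially the same route as the paper: both split $\widehat\rho_{ij}-\rho_{ij}$ via the triangle inequality into a covariance-estimation piece and a denominator (variance-estimation) piece, bound each uniformly in $\btheta$ at rate $\sqrt{\log n/T}$ using Lemma \ref{lc.2}(iv), and keep the denominators away from zero using the uniform moment bounds of Assumption \ref{a5.1}(ii). The only (immaterial) difference is that you center directly at the population moments $\sigma_{ij}$, whereas the paper's displayed decomposition compares $\widehat\rho_{ij}$ with the infeasible sample correlation built from the true errors $u_{it}$; the constants and probability statements are uniform in $\btheta$ in both versions for the same reasons.
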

\proof By the definition $\widehat\rho_{ij}=(\frac{1}{T}\sum_{t=1}^T\widehat u_{it}^2)^{-1/2}(\frac{1}{T}\sum_{t=1}^T\widehat u_{jt}^2)^{-1/2}\frac{1}{T}\sum_{t=1}^T\widehat u_{it}\widehat u_{jt}$.  By the triangular inequality,
  \begin{eqnarray*}
&&|\widehat\rho_{ij}-\rho_{ij}|\leq \underbrace{\frac{|\frac{1}{T}\sum_t\widehat u_{it}\widehat u_{jt}-u_{it}u_{jt}|}{(\frac{1}{T}\sum_{t=1}^T\widehat u_{it}^2)^{1/2}(\frac{1}{T}\sum_{t=1}^T\widehat u_{jt}^2)^{1/2}}}_{X_1}\cr
&&+\underbrace{|\frac{1}{T}\sum_tu_{it}u_{jt}|   | (\frac{1}{T}\sum_{t=1}^T\widehat u_{it}^2\frac{1}{T}\sum_{t=1}^T\widehat u_{jt}^2)^{-1/2}-  (\frac{1}{T}\sum_{t=1}^Tu_{it}^2\frac{1}{T}\sum_{t=1}^T u_{it}^2)^{-1/2}|}_{X_2}
\end{eqnarray*}
By part (iv) of Lemma \ref{lc.2},  $
\inf_{\stheta\in\Theta}P(\max_{i,j\leq n}|\frac{1}{T}\sum_{t=1}^T  \widehat u_{it}\widehat u_{jt}-Eu_{it}u_{jt}|<C\sqrt{\log n/T}|\btheta)\rightarrow1$.    Hence for sufficiently large  $M>0$ such that $\inf_{\stheta}\min_{j}E(u_{jt}^2|\btheta)-C/M>C\sqrt{\log n/T}$,
  \begin{eqnarray*}
&&\sup_{\stheta\in\Theta}P(\max_{ij}|X_1|>M\sqrt{\frac{\log n}{T}}|\btheta)\leq \sup_{\stheta\in\Theta}P(\min_{j} \frac{1}{T}\sum_t\widehat u_{jt}^2<C/M |\btheta)+o(1)\cr
&&\leq  \sup_{\stheta\in\Theta}P(\max_j| \frac{1}{T}\sum_t\widehat u_{jt}^2-Eu_{jt}^2|>\min_{j}Eu_{jt}^2-C/M |\btheta)+o(1)=o(1).
\end{eqnarray*}
By a similar argument, there is $M'>0$ so that $\sup_{\stheta\in\Theta}P(\max_{ij}|X_2|>M'\sqrt{\frac{\log n}{T}}|\btheta)=o(1)$. The result then follows as,
  \begin{eqnarray*}
&&\sup_{\theta}P(\max_{ij}|\widehat\rho_{ij}-\rho_{ij}|\geq 2(M+M')\sqrt{\log n/T}) \cr
&&\leq\sup_{\theta}P(\max_{ij}|X_1|\geq (M+M')\sqrt{\log n/T})+\sup_{\theta}P(\max_{ij}| X_2|\geq (M+M')\sqrt{\log n/T})=o(1).
\end{eqnarray*}


\textbf{Proof of Proposition \ref{p5.1}}

\proof  As $1-\rho_{ij}^2>1-c$ uniformly for $(i,j)$ and $\btheta$,   the second convergence follows from Lemma \ref{lc.3}. Also, with probability approaching one,
$$
\frac{|\widehat\rho_{ij}-\rho_{ij}|}{\widehat v_{ij}^{1/2}}\leq \frac{3\sqrt{T}}{2(1-c)}C\sqrt{\frac{\log n}{T}}<\delta_{N,T}/2.$$

\subsection{Proof of Theorem \ref{t5.1}}

\begin{lem} \label{lc.4} There is $C>0$ so that
$J_1$ has power uniformly on $\Theta(J_1)=\{ \sum_{i<j}\rho_{ij}^2\geq  Cn^2\log n /T\}$.
\end{lem}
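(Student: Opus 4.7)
The plan is to show that $J_1 \to +\infty$ uniformly over $\Theta(J_1)$, which immediately gives $\inf_{\btheta\in\Theta(J_1)}P(J_1 > z_q|\btheta) \to 1$ for any fixed $q$. The workhorse is Lemma \ref{lc.3}: there exists $C_0 > 0$ and an event $\mathcal{E}$ with $\inf_{\btheta\in\Theta}P(\mathcal{E}|\btheta)\to 1$ on which $\max_{i<j}|\widehat\rho_{ij}-\rho_{ij}| \leq C_0\sqrt{(\log n)/T}$. All subsequent manipulations will be deterministic on $\mathcal{E}$, so uniformity in $\btheta$ is automatic.

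First I would expand
$$T\widehat\rho_{ij}^2 = T\rho_{ij}^2 + 2T\rho_{ij}(\widehat\rho_{ij}-\rho_{ij}) + T(\widehat\rho_{ij}-\rho_{ij})^2,$$
sum over $i<j$, and drop the nonnegative quadratic remainder. Writing $S = \sum_{i<j}\rho_{ij}^2$, the cross term is controlled by Cauchy--Schwarz:
$$\Big|2T\sum_{i<j}\rho_{ij}(\widehat\rho_{ij}-\rho_{ij})\Big| \;\leq\; 2T\max_{i<j}|\widehat\rho_{ij}-\rho_{ij}|\sqrt{\tfrac{n(n-1)}{2}\,S} \;\leq\; C_1\, n\sqrt{TS\log n}$$
on $\mathcal{E}$, for some $C_1$ depending only on $C_0$. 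Substituting into the definition \eqref{e5.2} of $J_1$ and using $\sqrt{n(n-1)} \asymp n$ yields the deterministic lower bound
$$J_1 \;\geq\; \frac{TS}{\sqrt{n(n-1)}} \;-\; C_1\sqrt{TS\log n} \;-\; \frac{\sqrt{n(n-1)}}{2} \;-\; \frac{n}{2(T-1)}.$$

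The final step is the choice of the constant $C$ in the definition of $\Theta(J_1)$. Under $TS \geq Cn^2\log n$ one has $\sqrt{TS\log n} \leq TS/(n\sqrt{C})$, so taking $C > (4C_1)^2$ absorbs the cross term into at most half of the signal $TS/\sqrt{n(n-1)} \gtrsim Cn\log n$. That leading term then dwarfs both the centering $\sqrt{n(n-1)}/2 = O(n)$ and the bias correction $n/(2(T-1)) = O(n/T)$, giving $J_1 \gtrsim (C/2)\,n\log n \to \infty$ uniformly on $\Theta(J_1) \cap \mathcal{E}$, which is the claim. The main (and only) obstacle is the cross term: a naive H\"older $\ell_\infty$--$\ell_1$ bound would replace $n\sqrt{S}$ with $\sum_{i<j}|\rho_{ij}|$, which can be of order $n^2$ and would swamp the signal; Cauchy--Schwarz, matching $\sum|\rho_{ij}|$ to $n\sqrt{S}$, is exactly what calibrates the $n^2\log n/T$ threshold defining $\Theta(J_1)$ and forces the choice of $C$.
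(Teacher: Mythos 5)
Your proof is correct and follows essentially the same route as the paper: the max-deviation bound from Lemma \ref{lc.3} is turned into an $\ell_2$ bound on $\sum_{i<j}(\widehat\rho_{ij}-\rho_{ij})^2$, Cauchy--Schwarz controls the cross term, and the choice of $C$ guarantees that $\sum_{i<j}\widehat\rho_{ij}^2$ retains a constant fraction of $\sum_{i<j}\rho_{ij}^2$, which pushes $J_1$ above any fixed critical value uniformly on $\Theta(J_1)$. The only cosmetic differences are that the paper keeps the nonnegative quadratic remainder (showing $\sum_{i<j}\widehat\rho_{ij}^2\geq\frac{1}{2}\sum_{i<j}\rho_{ij}^2$) and states the conclusion as $\sup_{\stheta\in\Theta(J_1)}P(J_1<F_q|\btheta)\to 0$ rather than as a deterministic divergence of $J_1$ on the good event.
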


\proof   By Lemma \ref{lc.3},  there is $C>0$,
$\inf_{\stheta\in\Theta}P(\max_{ij}|\widehat \rho_{ij}-\rho_{ij}|<C\sqrt{\log n/T}|\btheta)\rightarrow1.$ If we define
$$
A=\{\sum_{i<j}(\widehat\rho_{ij}-\rho_{ij})^2< C^2n^2(\log n/T) \},
$$
then $\inf_{\Theta}P(A|\btheta)\rightarrow1.$ On the event $A$,  we have, uniformly in $\btheta=\{\rho_{ij}\},$
$$\sum_{i<j}(\widehat\rho_{ij}-\rho_{ij})\rho_{ij}\leq (\sum_{i<j}(\widehat\rho_{ij}-\rho_{ij})^{2})^{1/2}(\sum_{i<j}\rho_{ij}^2)^{1/2}\leq \frac{Cn\sqrt{\log n}}{\sqrt{T}}(\sum_{i<j}\rho_{ij}^2)^{1/2}.
$$
Therefore, when $  \sum_{i<j}\rho_{ij}^2\geq  16C^2n^2\log n /T$,
$$\sum_{i<j}\widehat\rho_{ij}^2=\sum_{i<j}(\widehat\rho_{ij}-\rho_{ij})^2+\rho_{ij}^2+2(\widehat\rho_{ij}-\rho_{ij})\rho_{ij}\geq \sum_{i<j}\rho_{ij}^2-\frac{2Cn\sqrt{\log n}}{\sqrt{T}}(\sum_{i<j}\rho_{ij}^2)^{1/2}\geq\frac{1}{2}\sum_{i<j}\rho_{ij}^2.
$$
This entails that when $  \sum_{i<j}\rho_{ij}^2\geq  16Cn^2\log n /T$, we have
\begin{eqnarray*}
&&\sup_{\Theta(J_1)}P(J_1<F_q|\btheta)\leq \sup_{\Theta(J_1)}P(
\sum_{i<j}\widehat\rho^2_{ij}<  \frac{n(n-1)}{2T}+(F_q+\frac{n}{2(T-1)})  \frac{\sqrt{n(n-1)}}{T}  |\btheta)\cr
&&\leq \sup_{\Theta(J_1)}P(\frac{1}{2}\sum_{i<j}\rho_{ij}^2<  \frac{n(n-1)}{2T}+(F_q+\frac{n}{2(T-1)})  \frac{\sqrt{n(n-1)}}{T}  |\btheta)+\sup_{\Theta(J_1)}P(A^c|\btheta)\rightarrow0.
\end{eqnarray*}

\textbf{Proof of Theorem \ref{t5.1}}

It suffices to verify conditions (i)-(iii) of Theorem \ref{t3.2}. Condition (i) follows from Theorem 1 of \cite{BFK}.  As for condition (ii), note that $J_1\geq -\frac{\sqrt{n(n-1)}}{2}-\frac{n}{2(T-1)}$ almost surely. Hence as $n,T\rightarrow\infty$,
$$\inf_{\stheta\in\Theta_s}P(c\sqrt{N}+J_1>z_q|\btheta) \geq  \inf_{\stheta\in\Theta_s}P(c\sqrt{N}-\frac{\sqrt{n(n-1)}}{2}-\frac{n}{2(T-1)}>z_q|\btheta)=1. $$
Finally,  condition (iii) follows from Lemma \ref{lc.4}.

\newpage

\section{Supplementary Material}

\subsection{Auxiliary lemmas for the proof of Proposition \ref{p4.2}}

Define $\be_t=\Sig_u^{-1}\bu_t=(e_{1t},...,e_{Nt})'$, which is an $N$-dimensional vector with mean zero and covariance $\Sig_u^{-1}$, whose entries are stochastically bounded.  Let $\bar \bw=(E\bff_t\bff_t')^{-1}E\bff_t$. Also recall that
$$
a_1=\frac{T}{\sqrt{N}}\sum_{i=1}^N(\htheta'\Sig_u^{-1})_i^2(\hsig_{ii}-\sigma_{ii}),
$$
$$
a_2=\frac{T}{\sqrt{N}}\sum_{i\neq j, (i,j)\in S_U}(\htheta'\Sig_u^{-1})_i(\htheta'\Sig_u^{-1})_j (\hsig_{ij}-\sigma_{ij}).
$$

One of the key steps of proving   $a_1=o_P(1), a_2=o_P(1)$ is to establish the following two convergences:  \begin{equation}\label{eq.D1}
    \frac{1}{T}E|\frac{1}{\sqrt{NT}}\sum_{i=1}^N
    \sum_{t=1}^T(u_{it}^2-Eu_{it}^2)(\frac{1}{\sqrt{T}}\sum_{s=1}^Te_{is}
    (1-\bff_s' \bar \bw))^2|^2=o(1),
\end{equation}
\begin{equation}\label{eq.D2}
   \frac{1}{T}E|\frac{1}{\sqrt{NT}}\sum_{i\neq j, (i,j)\in S_U}\sum_{t=1}^T(u_{it}u_{jt}-Eu_{it}u_{jt})
    [\frac{1}{\sqrt{T}}\sum_{s=1}^Te_{is}(1-\bff_s'\bar \bw)]
    [\frac{1}{\sqrt{T}}\sum_{k=1}^Te_{jk}(1-\bff_k' \bar \bw)]|^2=o(1),
\end{equation}
where $S_U=\{(i,j): (\Sig_u)_{ij}\neq0\}.$ The proofs of (\ref{eq.D1}) and (\ref{eq.D2})  are given later below.

\begin{lem} \label{la.3}Under $H_0$,
$a_1=o_P(1)$.
\end{lem}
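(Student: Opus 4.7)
The plan is to reduce $a_1$ to the weighted double sum controlled by (D.1) and then invoke that bound. Under $H_0$ one has $\widehat\theta_i = a_{f,T}^{-1}T^{-1}\sum_t u_{it}(1-\bff_t'\bw)$, so $(\htheta'\Sig_u^{-1})_i = a_{f,T}^{-1}T^{-1}\sum_t e_{it}(1-\bff_t'\bw)$. Setting $Y_i = T^{-1/2}\sum_t e_{it}(1-\bff_t'\bw)$, this gives the exact identity
$$a_1 = \frac{a_{f,T}^{-2}}{\sqrt{N}}\sum_{i=1}^N Y_i^2 (\hsig_{ii} - \sigma_{ii}).$$
First I would split $\hsig_{ii} - \sigma_{ii} = T^{-1}\sum_t(u_{it}^2 - Eu_{it}^2) + \Delta_{1i}$ and $Y_i^2 = \tilde Y_i^2 + \Delta_{2i}$, where $\tilde Y_i = T^{-1/2}\sum_t e_{it}(1-\bff_t'\bar\bw)$ is the population-weighted version. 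The leading piece becomes
$$\frac{a_{f,T}^{-2}}{\sqrt{T}}\cdot \frac{1}{\sqrt{NT}}\sum_{i=1}^N\sum_{t=1}^T(u_{it}^2 - Eu_{it}^2)\,\tilde Y_i^2,$$
whose second moment is exactly $a_{f,T}^{-4}$ times the quantity bounded by (D.1), hence $o(1)$; Markov then yields $o_P(1)$.

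The remainder splits into two controllable pieces. The term $\Delta_{1i}$ collects the cross terms arising from $\hat u_{it} - u_{it} = -\widehat\theta_i - \bff_t'(\widehat\bb_i - \bb_i)$ (valid under $H_0$); each such cross term is uniformly $O_P(\log N / T)$ by Proposition~\ref{p4.1} combined with a standard OLS bound $\max_i\|\widehat\bb_i-\bb_i\| = O_P(\sqrt{\log N/T})$ under Assumption~\ref{a4.1}. Because under $H_0$ one has $\sum_i(\htheta'\Sig_u^{-1})_i^2 = \|\Sig_u^{-1}\htheta\|^2 \leq \|\Sig_u^{-1}\|_2^2\|\htheta\|^2 = O_P(N/T)$ (using $\|\Sig_u^{-1}\|_1 = O(1)$ from Assumption~\ref{a4.1}), the total contribution of $\Delta_{1i}$ to $a_1$ is at most $\frac{T}{\sqrt{N}}\cdot O_P(N/T)\cdot O_P(\log N/T) = O_P(\sqrt{N}\log N/T)$, which is $o_P(1)$ under $N^{1/2}(\log N)^\gamma = o(T)$ from Assumption~\ref{a4.2}. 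The term $\Delta_{2i}$ handles the replacement of $\bw$ by $\bar\bw$: writing $Y_i^2 - \tilde Y_i^2 = 2\tilde Y_i(Y_i-\tilde Y_i) + (Y_i-\tilde Y_i)^2$ with $Y_i - \tilde Y_i$ linear in the $K$-dimensional perturbation $\bar\bw - \bw = O_P(T^{-1/2})$, a direct second-moment calculation (using $E\tilde Y_i^2 \lesssim 1$ and boundedness of $\|\Sig_u^{-1}\|_2$) pins the contribution at $O_P(\sqrt{N}/T) = o_P(1)$.

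The genuine obstacle is the leading piece: a naive Cauchy--Schwarz giving $|a_1| \leq \frac{T}{\sqrt{N}}\|\Sig_u^{-1}\htheta\|^2 \cdot \max_i|\hsig_{ii}-\sigma_{ii}| = O_P(\sqrt{N\log N/T})$ fails whenever $N \gg T$, which is precisely the regime of interest. This is why (D.1) is indispensable: it treats the entire weighted cross-sectional-temporal double sum as a single object and extracts the missing $\sqrt{T}$ factor by exploiting the Gaussianity of $\bu_t$ and the sparsity of $\Sig_u$ (so that off-diagonal covariance terms in the second moment computation vanish or are summable). Proving (D.1) itself is the delicate analytical step; once it is granted, the reduction above is essentially algebraic bookkeeping, and Lemma~\ref{la.3} follows.
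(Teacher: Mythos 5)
Your proposal is correct and follows essentially the same route as the paper: isolate the leading term $\frac{1}{\sqrt{NT}}\sum_{i,t}(u_{it}^2-Eu_{it}^2)\bigl(\frac{1}{\sqrt{T}}\sum_s e_{is}(1-\bff_s'\bw)\bigr)^2$ and kill it with the second-moment bound (\ref{eq.D1}), then bound the $\widehat u_{it}$-versus-$u_{it}$ remainder using $\max_i\frac{1}{T}\sum_t(\widehat u_{it}-u_{it})^2=O_P(\log N/T)$, the uniform rates for $\widehat\theta_i$ and $\widehat\bb_i$, and $\|\htheta'\Sig_u^{-1}\|^2=O_P(N\log N/T)$, exactly as in the paper's split $a_1=a_{12}+a_{111}+a_{112}$. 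Your only deviation is the explicit $\Delta_{2i}$ step replacing $\bw$ by $\bar\bw$ before invoking (\ref{eq.D1}), a point the paper passes over silently, so if anything you are slightly more careful on that substitution.
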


\begin{proof}
We have $a_1=\frac{T}{\sqrt{N}}\sum_{i=1}^N(\htheta'\Sig_u^{-1})_i^2\frac{1}{T}\sum_{t=1}^T(\widehat u_{it}^2-Eu_{it}^2)$,  which is
$$
\frac{T}{\sqrt{N}}\sum_{i=1}^N(\htheta'\Sig_u^{-1})_i^2\frac{1}{T}\sum_{t=1}^T(\widehat u_{it}^2- u_{it}^2)+\frac{T}{\sqrt{N}}\sum_{i=1}^N(\htheta'\Sig_u^{-1})_i^2\frac{1}{T}\sum_{t=1}^T(u_{it}^2-Eu_{it}^2)=a_{11}+a_{12}.
$$
For $a_{12}$, note that $(\htheta'\Sig_u^{-1})_i=(1-\bar\bff'\bw)^{-1}\frac{1}{T}\sum_{s=1}^T(1-\bff_s'\bw)(\bu_s'\Sig_u^{-1})_i=c\frac{1}{T}\sum_{s=1}^T(1-\bff_s'\bw)e_{is}$, where $c=(1-\bar\bff'\bw)^{-1}=O_P(1).$  Hence
$$
a_{12}=\frac{Tc}{\sqrt{N}}\sum_{i=1}^N(\frac{1}{T}\sum_{s=1}^T(1-\bff_s'\bw)e_{is})^2\frac{1}{T}\sum_{t=1}^T(u_{it}^2-Eu_{it}^2)
$$
By (\ref{eq.D1}), $Ea_{12}^2=o(1)$. On the other hand,
$$
a_{11}=\frac{T}{\sqrt{N}}\sum_{i=1}^N(\htheta'\Sig_u^{-1})_i^2\frac{1}{T}\sum_{t=1}^T(\widehat u_{it}-u_{it})^2+\frac{2T}{\sqrt{N}}\sum_{i=1}^N(\htheta'\Sig_u^{-1})_i^2\frac{1}{T}\sum_{t=1}^Tu_{it}(\widehat u_{it}-u_{it})=a_{111}+a_{112}.
$$
Note that $\max_{i\leq N}\frac{1}{T}\sum_{t=1}^T(\widehat u_{it}-u_{it})^2=O_P({\frac{\log N}{T}})$ by Lemma 3.1 of \cite{FLM11}. Since $\|\htheta\|^2=O_P(\frac{N\log N}{T})$, $\|\Sig_u^{-1}\|_2=O(1)$ and $N(\log N)^3=o(T^2)$,
$$a_{111}\leq O_P({\frac{\log N}{T}})\frac{T}{\sqrt{N}} \|\widehat\btheta'\Sig_u^{-1}\|^2=O_P(\frac{(\log N)^2\sqrt{N}}{T})=o_P(1),
$$
To bound $a_{112}$,  note that
$$
\widehat u_{it}-u_{it}=\widehat\theta_i-\theta_i+(\hb_i-\bb_i)'\bff_t,\quad \max_i|\widehat \theta_i-\theta_i|=O_P(\sqrt{\frac{\log N}{T}})=\max_i\|\hb_i-\bb_i\|.
$$
Also,  $\max_i|\frac{1}{T}\sum_{t=1}^Tu_{it}|=O_P(\sqrt{\frac{\log N}{T}})=\max_i\|\frac{1}{T}\sum_{t=1}^Tu_{it}\bff_t\|$. Hence
\begin{eqnarray*}
a_{112}&=&\frac{2T}{\sqrt{N}}\sum_{i=1}^N(\htheta'\Sig_u^{-1})_i^2\frac{1}{T}\sum_{t=1}^Tu_{it}(\widehat \theta_{i}-\theta_{i})+\frac{2T}{\sqrt{N}}\sum_{i=1}^N(\htheta'\Sig_u^{-1})_i^2(\hb_i-\bb_i)'\frac{1}{T}\sum_{t=1}^T\bff_tu_{it}\cr
&\leq& O_P(\frac{\log N}{\sqrt{N}})\|\htheta'\Sig_u^{-1}\|^2=o_P(1).
\end{eqnarray*}
In summary, $a_1=a_{12}+a_{111}+a_{112}=o_P(1)$.
\end{proof}

 \begin{lem} \label{lb.2add}Under $H_0$,
$a_2=o_P(1)$.
\end{lem}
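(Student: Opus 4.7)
The plan is to mirror Lemma \ref{la.3}, now exploiting the sparsity of $S_U$. Under $H_0$ one has $(\htheta'\Sig_u^{-1})_i = c\cdot T^{-1}\sum_{s=1}^T(1-\bff_s'\bw)e_{is}$ with $c=(1-\bar\bff'\bw)^{-1}=O_P(1)$, and I would decompose $\hsig_{ij}-\sigma_{ij} = T^{-1}\sum_{t=1}^T(u_{it}u_{jt}-Eu_{it}u_{jt}) + R_{ij}$, where $R_{ij}$ collects the three cross terms generated by $\widehat u_{it}-u_{it} = -(\widehat\theta_i-\theta_i)-(\widehat\bb_i-\bb_i)'\bff_t$. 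This splits $a_2 = a_{21}+a_{22}$ accordingly.

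For $a_{21}$, I would first replace $\bw$ by its population limit $\bar\bw$; the discrepancy contributes $o_P(1)$ via a one-term expansion combined with $\|\htheta\|^2 = O_P(N(\log N)/T)$ under $H_0$. What remains is, up to the factor $c^2$, exactly the object whose second moment is bounded by the key identity \eqref{eq.D2} established in the supplement. Hence $Ea_{21}^2 = o(1)$, so $a_{21} = o_P(1)$ by Chebyshev.

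For $a_{22}$, Proposition \ref{p4.1} together with the Gaussian tail bounds on $u_{it}$ and the concentration of $T^{-1}\sum_t u_{it}\bff_t$ delivers the uniform estimate $\max_{i,j}|R_{ij}| = O_P((\log N)/T)$, since each piece factors as a product of two $O_P(\sqrt{(\log N)/T})$ quantities. I would then combine this with the row-sparsity bound $\sum_{(i,j)\in S_U,\, i\neq j}|(\htheta'\Sig_u^{-1})_i(\htheta'\Sig_u^{-1})_j| \leq m_N\|\htheta'\Sig_u^{-1}\|^2 = O_P(m_N N(\log N)/T)$, which follows from Cauchy-Schwarz, $\|\Sig_u^{-1}\|_2 = O(1)$, and the $H_0$ bound on $\|\htheta\|^2$. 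The product yields $|a_{22}| = O_P(m_N\sqrt{N}(\log N)^2/T)$, and a short verification under either branch of Assumption \ref{a4.2}(ii) together with $N^{1/2}(\log N)^\gamma = o(T)$, $\gamma>2$, reduces this to $o((\log N)^{2-\gamma}) = o(1)$.

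The main obstacle is the population piece $a_{21}$: a naive Cauchy-Schwarz bound on the weighted bilinear sum accumulates the $O(|S_U|)$ off-diagonal contributions and fails once $N\gg T$. The decisive leverage comes from \eqref{eq.D2}, which exploits both the $\Sig_u^{-1}$-whitening inside the weights $T^{-1/2}\sum_s(1-\bff_s'\bar\bw)e_{is}$ and the row-sparsity of $S_U$; once that identity is in hand, the rest of the argument is routine bookkeeping governed by the sparsity parameters $m_N$ and $D_N$.
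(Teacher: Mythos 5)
Your proposal is correct and takes essentially the same route as the paper's own proof (with your $a_{21}$/$a_{22}$ labels swapped relative to the paper's): isolate the population term $\frac{1}{T}\sum_t(u_{it}u_{jt}-Eu_{it}u_{jt})$ and control it by Chebyshev via the second-moment bound (\ref{eq.D2}), then bound the estimation-error remainder using the uniform $O_P(\log N/T)$ rates together with the row-sparsity bound $\sum_{i\neq j,(i,j)\in S_U}|(\htheta'\Sig_u^{-1})_i(\htheta'\Sig_u^{-1})_j|\leq m_N\|\htheta'\Sig_u^{-1}\|^2=O_P(m_N N\log N/T)$ and Assumption \ref{a4.2}. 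Your explicit treatment of replacing $\bw$ by $\bar\bw$ before invoking (\ref{eq.D2}) is a minor detail the paper glosses over, but it does not change the argument.
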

\begin{proof} We have $a_2=\frac{T}{\sqrt{N}}\sum_{i\neq j, (i,j)\in S_U}(\htheta'\Sig_u^{-1})_i(\htheta'\Sig_u^{-1})_j \frac{1}{T}\sum_{t=1}^T(\widehat u_{it}\widehat u_{jt}-Eu_{it}u_{jt})$, which is
$$\frac{T}{\sqrt{N}}\sum_{i\neq j, (i,j)\in S_U}(\htheta'\Sig_u^{-1})_i(\htheta'\Sig_u^{-1})_j \left(\frac{1}{T}\sum_{t=1}^T(\widehat u_{it}\widehat u_{jt}-u_{it}u_{jt})+\frac{1}{T}\sum_{t=1}^T( u_{it}u_{jt}-Eu_{it}u_{jt})\right)=a_{21}+a_{22}.$$
where
$$
a_{21}=\frac{T}{\sqrt{N}}\sum_{i\neq j, (i,j)\in S_U}(\htheta'\Sig_u^{-1})_i(\htheta'\Sig_u^{-1})_j \frac{1}{T}\sum_{t=1}^T(\widehat u_{it}\widehat u_{jt}-u_{it}u_{jt}).
$$
Under $H_0$, $\Sig_u^{-1}\widehat\btheta=\frac{1}{T}(1-\bar\bff'\bw)^{-1}\sum_{t=1}^T\Sig_u^{-1}\bu_t(1-\bff_t'\bw)$, and $\be_t=\Sig_u^{-1}\bu_t$, we have
\begin{eqnarray*}
a_{22}&=&\frac{T}{\sqrt{N}}\sum_{i\neq j, (i,j)\in S_U}(\htheta'\Sig_u^{-1})_i(\htheta'\Sig_u^{-1})_j \frac{1}{T}\sum_{t=1}^T(u_{it}u_{jt}-Eu_{it}u_{jt})\cr
&=&\frac{Tc}{\sqrt{N}}\sum_{i\neq j, (i,j)\in S_U}\frac{1}{T}\sum_{s=1}^T(1-\bff_s'\bw)e_{is}\frac{1}{T}\sum_{k=1}^T(1-\bff_k'\bw)e_{jk}		 \frac{1}{T}\sum_{t=1}^T( u_{it}u_{jt}-Eu_{it}u_{jt}).
\end{eqnarray*}
By (\ref{eq.D2}), $Ea_{22}^2=o(1)$.

On the other hand, $a_{21}=a_{211}+a_{212}$, where
\begin{eqnarray*}
a_{211}&=&\frac{T}{\sqrt{N}}\sum_{i\neq j, (i,j)\in S_U}(\htheta'\Sig_u^{-1})_i(\htheta'\Sig_u^{-1})_j \frac{1}{T}\sum_{t=1}^T(\widehat u_{it}-u_{it})(\widehat u_{jt}-u_{jt}),\cr
a_{212}&=&\frac{2T}{\sqrt{N}}\sum_{i\neq j, (i,j)\in S_U}(\htheta'\Sig_u^{-1})_i(\htheta'\Sig_u^{-1})_j \frac{1}{T}\sum_{t=1}^Tu_{it}(\widehat u_{jt}-u_{jt}).
\end{eqnarray*}
By the Cauchy-Schwarz inequality,  $\max_{ij}|\frac{1}{T}\sum_{t=1}^T(\widehat u_{it}-u_{it})(\widehat u_{jt}-u_{jt})|=O_P(\frac{\log N}{T})$. Hence
\begin{eqnarray*}
|a_{211}|&\leq& O_P(\frac{\log N}{\sqrt{N}})\sum_{i\neq j, (i,j)\in S_U}|(\htheta'\Sig_u^{-1})_i||(\htheta'\Sig_u^{-1})_j |\cr
&\leq& O_P(\frac{\log N}{\sqrt{N}})\left(\sum_{i\neq j, (i,j)\in S_U}(\htheta'\Sig_u^{-1})_i^2\right)^{1/2}\left(\sum_{i\neq j, (i,j)\in S_U}(\htheta'\Sig_u^{-1})_j^2\right)^{1/2}\cr
&=&O_P(\frac{\log N}{\sqrt{N}})\sum_{i=1}^N(\htheta'\Sig_u^{-1})_i^2\sum_{j: (\Sig_u)_{ij}\neq 0}1\leq O_P(\frac{\log N}{\sqrt{N}})\|\htheta'\Sig_u^{-1}\|^2m_N\cr
&=&O_P(\frac{m_N\sqrt{N}(\log N)^2}{T})=o_P(1).
\end{eqnarray*}

Similar to the proof of term $a_{112}$ in Lemma  \ref{la.3}, $\max_{ij}|\frac{1}{T}\sum_{t=1}^Tu_{it}(\widehat u_{jt}-u_{jt})|=O_P(\frac{\log N}{T})$.
$$
|a_{212}|\leq O_P(\frac{\log N}{\sqrt{N}})\sum_{i\neq j, (i,j)\in S_U}|(\htheta'\Sig_u^{-1})_i||(\htheta'\Sig_u^{-1})_j |=O_P(\frac{m_N\sqrt{N}(\log N)^2}{T})=o_P(1).
$$
In summary, $a_2=a_{22}+a_{211}+a_{212}=o_P(1)$.
\end{proof}

\subsection{Proof of (\ref{eq.D1}) and (\ref{eq.D2})}

For any index set $A$, we let $|A|_0$ denote its number of elements.

\begin{lem}\label{la.5} Recall that $\be_t=\Sig_u^{-1}\bu_t$.   $e_{it}$ and $u_{jt}$ are independent if $i\neq j$.
\end{lem}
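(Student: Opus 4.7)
The plan is very short because the statement reduces to a covariance computation under joint Gaussianity. First I would invoke Assumption~\ref{a4.1}(i), which guarantees that $\bu_t \sim \mathcal{N}(\bzero,\Sig_u)$. Since $\be_t = \Sig_u^{-1}\bu_t$ is a linear transformation of $\bu_t$, the stacked vector $(\be_t', \bu_t')'$ is jointly Gaussian. Consequently, the pair $(e_{it}, u_{jt})$ is jointly Gaussian for every $i,j$, and independence is equivalent to zero covariance.

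Next I would compute the cross-covariance directly:
\begin{equation*}
\cov(e_{it}, u_{jt}) = E\bigl[(\Sig_u^{-1}\bu_t)_i\, u_{jt}\bigr] = \sum_{k=1}^N (\Sig_u^{-1})_{ik}\, E[u_{kt} u_{jt}] = \sum_{k=1}^N (\Sig_u^{-1})_{ik}(\Sig_u)_{kj} = (\Sig_u^{-1}\Sig_u)_{ij} = \delta_{ij}.
\end{equation*}
For $i \neq j$ this yields $\cov(e_{it}, u_{jt}) = 0$, so by joint Gaussianity $e_{it} \perp u_{jt}$, which is the claim.

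There is no real obstacle here: the only subtlety is noticing that Gaussianity of $\bu_t$ is precisely what converts uncorrelatedness (an algebraic identity following from $\Sig_u^{-1}\Sig_u = \bI_N$) into full independence, a property that is used subsequently when establishing \eqref{eq.D1} and \eqref{eq.D2} to decouple the factors $u_{it}^2 - Eu_{it}^2$ (or $u_{it}u_{jt} - Eu_{it}u_{jt}$) from the weights built out of $\{e_{is}\}$.
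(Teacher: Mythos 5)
Your proposal is correct and follows essentially the same route as the paper: both compute the cross-covariance between $\be_t$ and $\bu_t$ (the paper via the block-matrix identity $\cov(\bu_t',\be_t') = \bA\cov(\bu_t',\bu_t')\bA$ giving off-diagonal block $\bI_N$, you entrywise via $(\Sig_u^{-1}\Sig_u)_{ij}=\delta_{ij}$) and then invoke joint Gaussianity to upgrade zero covariance to independence. No gap.
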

\begin{proof}
Because $\bu_t$ is Gaussian, it suffices to show that $\cov(e_{it}, u_{jt})=0$ when $i\neq j.$ Consider the vector $(\bu_t', \be_t')'=\bA(\bu_t',\bu_t')' $, where
$$\bA=\begin{pmatrix}
                                        \bI_N&0\\
                 0&\Sig_u^{-1}                                           \end{pmatrix}. $$
                 Then $\cov(\bu_t', \be_t')=\bA\cov(\bu_t',\bu_t')\bA$, which is
$$\begin{pmatrix}
                                        \bI_N&0\\
                 0&\Sig_u^{-1}   \end{pmatrix}
                 \begin{pmatrix}
           \Sig_u& \Sig_u\\
                  \Sig_u&\Sig_u  \end{pmatrix}
                 \begin{pmatrix}
                                        \bI_N&0\\
                 0&\Sig_u^{-1}   \end{pmatrix}=      \begin{pmatrix}
                                     \Sig_u&\bI_N\\
                 \bI_N&\Sig_u^{-1}   \end{pmatrix}.
$$
This completes the proof.
\end{proof}

\textbf{Proof of (\ref{eq.D1}) }

Let $X=\frac{1}{\sqrt{NT}}\sum_{i=1}^N\sum_{t=1}^T(u_{it}^2-Eu_{it}^2)(\frac{1}{\sqrt{T}}\sum_{s=1}^Te_{is}(1-\bff_s'\bw))^2$. The goal is to show $EX^2=o(T)$.   We show respectively $\frac{1}{T}(EX)^2=o(1)$ and $\frac{1}{T}\var(X)=o(1)$.
The proof  of (\ref{eq.D1})  is the same regardless of the type of  sparsity  in Assumption  \ref{a4.2}.
 For notational simplicity, let
$$\xi_{it}=u_{it}^2-Eu_{it}^2,\quad \zeta_{is}=e_{is}(1-\bff_s'\bw).
$$
Then $X=\frac{1}{\sqrt{NT}}\sum_{i=1}^N\sum_{t=1}^T\xi_{it}(\frac{1}{\sqrt{T}}\sum_{s=1}^T \zeta_{is})^2$.   Because of the serial independence, $\xi_{it}$ is independent of $\zeta_{js}$ if $t\neq s$, for any $i,j\leq N$, which implies $\cov(\xi_{it}, \zeta_{is}\zeta_{ik})=0$ as long as either $s\neq t$ or $k\neq t.$


 \textbf{Expectation}

 For the expectation,
\begin{eqnarray*}
EX&=&\frac{1}{\sqrt{NT}}\sum_{i=1}^N\sum_{t=1}^T\cov(\xi_{it}, (\frac{1}{\sqrt{T}}\sum_{s=1}^T \zeta_{is})^2)=\frac{1}{T\sqrt{NT}}\sum_{i=1}^N\sum_{t=1}^T\sum_{s=1}^T\sum_{k=1}^T\cov(\xi_{it},  \zeta_{is}\zeta_{ik})\cr
&=&\frac{1}{T\sqrt{NT}}\sum_{i=1}^N\sum_{t=1}^T(\cov(\xi_{it}, \zeta_{it}^2)+2\sum_{k\neq t}\cov(\xi_{it}, \zeta_{it}\zeta_{ik}))\cr
&=&\frac{1}{T\sqrt{NT}}\sum_{i=1}^N\sum_{t=1}^T\cov(\xi_{it}, \zeta_{it}^2)=O(\sqrt{\frac{N}{T}}),
\end{eqnarray*}
where the second last equality follows since $E\xi_{it}=E\zeta_{it}=0$ and  when $k\neq t$ $\cov(\xi_{it}, \zeta_{it}\zeta_{ik})=E\xi_{it}\zeta_{it}\zeta_{ik}=E\xi_{it}\zeta_{it}E\zeta_{ik}=0.$ It then follows that $\frac{1}{T}(EX)^2=O(\frac{N}{T^2})=o(1)$, given $N=o(T^2)$.

  \textbf{Variance}

Consider the variance.  We have,
\begin{eqnarray*}
\var(X)&=&\frac{1}{N}\sum_{i=1}^N\var(\frac{1}{\sqrt{T}}\sum_{t=1}^T\xi_{it}(\frac{1}{\sqrt{T}}\sum_{s=1}^T \zeta_{is})^2)\cr
&&+\frac{1}{NT^3}\sum_{i\neq j}\sum_{t,s,k,l,v,p\leq T}\cov(\xi_{it} \zeta_{is}\zeta_{ik},  \xi_{jl}  \zeta_{jv}\zeta_{jp})=B_1+B_2.
\end{eqnarray*}
$B_1$ can be bounded by the Cauchy-Schwarz inequality. Note that $E\xi_{it}=E\zeta_{js}=0$,
\begin{eqnarray*}
B_1&\leq&\frac{1}{N}\sum_{i=1}^NE(\frac{1}{\sqrt{T}}\sum_{t=1}^T\xi_{it}(\frac{1}{\sqrt{T}}\sum_{s=1}^T \zeta_{is})^2)^2\leq \frac{1}{N}\sum_{i=1}^N[E(\frac{1}{\sqrt{T}}\sum_{t=1}^T\xi_{it})^4]^{1/2}[E(\frac{1}{\sqrt{T}}\sum_{s=1}^T \zeta_{is})^8]^{1/2}.
\end{eqnarray*}
Hence  $B_1=O(1)$.

We now show $\frac{1}{T}B_2=o(1)$. Once this is done, it implies $\frac{1}{T}\var(X)=o(1)$. The proof of (\ref{eq.D1}) is then completed because $\frac{1}{T}EX^2=\frac{1}{T}(EX)^2+\frac{1}{T}\var(X)=o(1)$.

For two variables $X,Y$, writing $X\perp Y$ if they are independent.
Note that  $E\xi_{it}=E\zeta_{is}=0$, and when $t\neq s$, $\xi_{it}\perp\zeta_{js}$, $\xi_{it}\perp \xi_{js}$,  $\zeta_{it}\perp\zeta_{js}$   for any $i,j\leq N$. Therefore, it is straightforward to verify that if  the set $\{t,s,k,l,v,p\}$ contains more than three distinct elements,  then $\cov(\xi_{it} \zeta_{is}\zeta_{ik},  \xi_{jl}  \zeta_{jv}\zeta_{jp})=0.$ Hence if we denote $\Xi$ as the set of  $(t,s,k,l,v,p)$ such that $\{t,s,k,l,v,p\}$ contains no more than three distinct elements, then its cardinality satisfies: $|\Xi|_0\leq CT^3$ for some $C>1$, and
$$\sum_{t,s,k,l,v,p\leq T} \cov(\xi_{it} \zeta_{is}\zeta_{ik},  \xi_{jl}  \zeta_{jv}\zeta_{jp})=\sum_{(t,s,k,l,v,p)\in \Xi}\cov(\xi_{it} \zeta_{is}\zeta_{ik},  \xi_{jl}  \zeta_{jv}\zeta_{jp}).
$$
Hence  $$
B_2=\frac{1}{NT^3}\sum_{i\neq j}\sum_{(t,s,k,l,v,p)\in \Xi}\cov(\xi_{it} \zeta_{is}\zeta_{ik},  \xi_{jl}  \zeta_{jv}\zeta_{jp}).
$$
Let us partition $\Xi$ into $\Xi_1\cup\Xi_2$  where each element $(t,s,k,l,v,p)$ in $\Xi_1$ contains  exactly three distinct indices, while each element in $\Xi_2$ contains less than three distinct indices. We know that  $\frac{1}{NT^3}\sum_{i\neq j}\sum_{(t,s,k,l,v,p)\in \Xi_2}\cov(\xi_{it} \zeta_{is}\zeta_{ik},  \xi_{jl}  \zeta_{jv}\zeta_{jp})=O(\frac{1}{NT^3}N^2T^2)=O(\frac{N}{T}),
$ which implies  $$\frac{1}{T}B_2=\frac{1}{NT^4}\sum_{i\neq j}\sum_{(t,s,k,l,v,p)\in \Xi_1}\cov(\xi_{it} \zeta_{is}\zeta_{ik},  \xi_{jl}  \zeta_{jv}\zeta_{jp})+O_p(\frac{N}{T^2}).$$
The first term on the right hand side can be written as $\sum_{h=1}^5B_{2h}$. Each of these five terms is defined and analyzed separately as below.
$$
B_{21}=\frac{1}{NT^4}\sum_{i\neq j}\sum_{t=1}^T\sum_{s\neq t}\sum_{l\neq s, t}E\xi_{it}\xi_{jt}E\zeta_{is}^2E\zeta_{jl}^2\leq O(\frac{1}{NT})\sum_{i\neq j}|E\xi_{it}\xi_{jt}|.
$$
Note that if $(\Sig_{u})_{ij}=0$, $u_{it}$ and $u_{jt}$ are independent, and hence $E\xi_{it}\xi_{jt}=0$. This implies $\sum_{i\neq j}|E\xi_{it}\xi_{jt}|\leq O(1)\sum_{i\neq j, (i,j)\in S_U}1=O(N)$. Hence $B_{21}=o(1)$.
$$
B_{22}=\frac{1}{NT^4}\sum_{i\neq j}\sum_{t=1}^T\sum_{s\neq t}\sum_{l\neq s, t}E\xi_{it}\zeta_{it}E\zeta_{is}\xi_{js}E\zeta_{jl}^2.
$$
By Lemma \ref{la.5}, $u_{js}$ and $e_{is}$ are independent for $i\neq j.$ Also, $u_{js}$ and $\bff_s$ are independent, which implies $\xi_{js}$ and $\zeta_{is}$ are independent. So $E\xi_{js}\zeta_{is}=0$. It follows that $B_{22}=0.$
\begin{eqnarray*}
B_{23}&=&\frac{1}{NT^4}\sum_{i\neq j}\sum_{t=1}^T\sum_{s\neq t}\sum_{l\neq s, t}E\xi_{it}\zeta_{it}E\zeta_{is}\zeta_{js}E\xi_{jl}\zeta_{jl}=O(\frac{1}{NT})\sum_{i\neq j}|E\zeta_{is}\zeta_{js}|\cr
&&=O(\frac{1}{NT})\sum_{i\neq j}|Ee_{is}e_{js}E(1-\bff_s'\bw)^2|=O(\frac{1}{NT})\sum_{i\neq j}|Ee_{is}e_{js}|.
\end{eqnarray*}
By the definition   $\be_s=\Sig_u^{-1}\bu_s$, $\cov(\be_s)=\Sig_u^{-1}$. Hence $Ee_{is}e_{js}=(\Sig_u^{-1})_{ij}$, which implies $B_{23}\leq O(\frac{N}{NT})\|\Sig_u^{-1}\|_1=o(1)$.
$$
B_{24}=\frac{1}{NT^4}\sum_{i\neq j}\sum_{t=1}^T\sum_{s\neq t}\sum_{l\neq s, t}E\xi_{it}\xi_{jt}E\zeta_{is}\zeta_{js}E\zeta_{il}\zeta_{jl}=O(\frac{1}{T}),
$$
which is analyzed in the same way as $B_{21}$.

Finally,
$
B_{25}=\frac{1}{NT^4}\sum_{i\neq j}\sum_{t=1}^T\sum_{s\neq t}\sum_{l\neq s, t}E\xi_{it}\zeta_{jt}E\zeta_{is}\xi_{js}E\zeta_{il}\zeta_{jl}=0$, because $E\zeta_{is}\xi_{js}=0$ when  $i\neq j$, following from Lemma  \ref{la.5}. Therefore, $\frac{1}{T}B_2=o(1)+O(\frac{N}{T^2})=o(1)$.

\textbf{Proof of (\ref{eq.D2})}

For notational simplicity, let $\xi_{ijt}=u_{it}u_{jt}-Eu_{it}u_{jt}$. Because of the serial independence and the Gaussianity, $\cov(\xi_{ijt}, \zeta_{ls}\zeta_{nk})=0$ when either $s\neq t$ or $k\neq t$,  for any $i,j,l,n\leq N$.
In addition, define a set
$$
H=\{(i,j)\in S_U: i\neq j\}.
$$Then by the sparsity assumption, $\sum_{(i,j)\in H}1=D_N=O(N)$. Now let
\begin{eqnarray*}
Z&=&\frac{1}{\sqrt{NT}}\sum_{(i,j)\in H}\sum_{t=1}^T(u_{it}u_{jt}-Eu_{it}u_{jt})[\frac{1}{\sqrt{T}}\sum_{s=1}^Te_{is}(1-\bff_s'\bw)][\frac{1}{\sqrt{T}}\sum_{k=1}^Te_{jk}(1-\bff_k'\bw)]\cr
&=&\frac{1}{\sqrt{NT}}\sum_{(i,j)\in H}\sum_{t=1}^T \xi_{ijt}[\frac{1}{\sqrt{T}}\sum_{s=1}^T \zeta_{is}][\frac{1}{\sqrt{T}}\sum_{k=1}^T\zeta_{jk}]=\frac{1}{T\sqrt{NT}}\sum_{(i,j)\in H}\sum_{t=1}^T\sum_{s=1}^T\sum_{k=1}^T  \xi_{ijt}\zeta_{is}\zeta_{jk}.
\end{eqnarray*}
The goal is to show $\frac{1}{T}EZ^2=o(1)$.  We  respectively show $\frac{1}{T}(EZ)^2=o(1)=\frac{1}{T}\var(Z)$.

 \textbf{Expectation}

 The proof for the expectation is the same regardless of the type of sparsity in Assumption  \ref{a4.2}, and is very similar to that of  (\ref{eq.D1}). In fact,
\begin{eqnarray*}
EZ&=&\frac{1}{T\sqrt{NT}}\sum_{(i,j)\in H}\sum_{t=1}^T\sum_{s=1}^T\sum_{k=1}^T  \cov(\xi_{ijt}, \zeta_{is}\zeta_{jk})=\frac{1}{T\sqrt{NT}}\sum_{(i,j)\in H}\sum_{t=1}^T \cov(\xi_{ijt}, \zeta_{it}^2).
\end{eqnarray*}
 Because $\sum_{(i,j)\in H}1=O(N)$, $EZ=O(\sqrt{\frac{N}{T}}).$ Thus $\frac{1}{T}(EZ)^2=o(1)$.

  \textbf{Variance}

 For the variance,  we have
\begin{eqnarray*}
\var(Z)&=&\frac{1}{T^3N} \sum_{(i,j)\in H}\var(\sum_{t=1}^T\sum_{s=1}^T\sum_{k=1}^T  \xi_{ijt}\zeta_{is}\zeta_{jk})\cr
&&+\frac{1}{T^3N}\sum_{(i,j)\in H,}\sum_{(m,n)\in H, (m,n)\neq (i,j),}\sum_{t,s,k,l,v,p\leq T}\cov(\xi_{ijt}\zeta_{is}\zeta_{jk}, \xi_{mnl}\zeta_{mv}\zeta_{np})\cr
&=&A_1+A_2.
\end{eqnarray*}
By the Cauchy-Schwarz inequality and the serial independence of $\xi_{ijt}$,
\begin{eqnarray*}
A_1&\leq& \frac{1}{N} \sum_{(i,j)\in H}E[\frac{1}{\sqrt{T}}\sum_{t=1}^T\xi_{ijt}\frac{1}{\sqrt{T}}\sum_{s=1}^T  \zeta_{is}\frac{1}{\sqrt{T}}\sum_{k=1}^T\zeta_{jk}]^2\cr
&\leq& \frac{1}{N} \sum_{(i,j)\in H}[E(\frac{1}{\sqrt{T}}\sum_{t=1}^T\xi_{ijt})^4]^{1/2}[E(\frac{1}{\sqrt{T}}\sum_{s=1}^T  \zeta_{is})^8]^{1/4}[E(\frac{1}{\sqrt{T}}\sum_{k=1}^T\zeta_{jk})^8]^{1/4}.
\end{eqnarray*}
So $A_1=O(1)$.

Note that  $E\xi_{ijt}=E\zeta_{is}=0$, and when $t\neq s$, $\xi_{ijt}\perp\zeta_{ms}$, $\xi_{ijt}\perp \xi_{mns}$,  $\zeta_{it}\perp\zeta_{js}$ (independent) for any $i,j,m, n\leq N$. Therefore, it is straightforward to verify that if  the set $\{t,s,k,l,v,p\}$ contains more than three distinct elements,  then $\cov(\xi_{ijt}\zeta_{is}\zeta_{jk}, \xi_{mnl}\zeta_{mv}\zeta_{np})=0.$ Hence for the same set $\Xi$  defined as  before, it satisfies:  $|\Xi|_0\leq CT^3$ for some $C>1$, and
$$\sum_{t,s,k,l,v,p\leq T}\cov(\xi_{ijt}\zeta_{is}\zeta_{jk}, \xi_{mnl}\zeta_{mv}\zeta_{np})=\sum_{(t,s,k,l,v,p)\in \Xi}\cov(\xi_{ijt}\zeta_{is}\zeta_{jk}, \xi_{mnl}\zeta_{mv}\zeta_{np}).
$$

We  proceed by studying the two cases of Assumption  \ref{a4.2} separately, and show that in both cases $\frac{1}{T}A_2=o(1)$. Once this is done, because we have just shown $A_1=O(1)$, then $\frac{1}{T}\var(Z)=o(1)$. The proof is then completed because $\frac{1}{T}EZ^2=\frac{1}{T}(EZ)^2+\frac{1}{T}\var(Z)=o(1)$.

 \textbf{When  $D_N=O(\sqrt{N})$}

 Because $|\Xi|_0\leq CT^3$ and $|H|_0=D_N=O(\sqrt{N})$,   and $|\cov(\xi_{ijt}\zeta_{is}\zeta_{jk}, \xi_{mnl}\zeta_{mv}\zeta_{np})|$ is bounded uniformly in $i,j,m,n\leq N$, we have
  $$
\frac{1}{T}A_2=\frac{1}{T^4N}\sum_{(i,j)\in H,}\sum_{(m,n)\in H, (m,n)\neq (i,j),}\sum_{t,s,k,l,v,p\in\Xi}\cov(\xi_{ijt}\zeta_{is}\zeta_{jk}, \xi_{mnl}\zeta_{mv}\zeta_{np})=O(\frac{1}{T}).
 $$

  \textbf{When    $D_n=O(N)$, and $m_N=O(1)$}

Similar to the proof of the first statement, for the same set $\Xi_1$ that contains exactly three distinct indices in each of its element, (recall $|H|_0=O(N)$)
$$\frac{1}{T}A_2=\frac{1}{NT^4}\sum_{(i,j)\in H,}\sum_{(m,n)\in H, (m,n)\neq (i,j),}\sum_{t,s,k,l,v,p\in\Xi_1}\cov(\xi_{ijt}\zeta_{is}\zeta_{jk}, \xi_{mnl}\zeta_{mv}\zeta_{np})+O(\frac{N}{T^2}).$$
The first term on the right hand side can be written as $\sum_{h=1}^5A_{2h}$. Each of these five  terms is defined and analyzed separately as below.  Before that, let us introduce a useful lemma.

  The following lemma is needed  when $\Sig_u$ has bounded number of nonzero entries in each row ($m_N=O(1)$).   Let $|S|_0$ denote the number of elements in a set $S$ if $S$ is  countable. For any $i\leq N$, let $$A(i)=\{j\leq N: \cov(u_{it}, u_{jt})\neq0\}=\{j\leq N: (i,j)\in S_U\}.$$

   \begin{lem} \label{la.6}Suppose $m_N=O(1)$. For any $i, j\leq N$, let $B(i,j)$ be a set of $k\in\{1,...,N\}$ such that:\\
(i) $k\notin A(i)\cup A(j)$ \\
(ii) there is $p\in A(k)$ such that $\cov(u_{it}u_{jt}, u_{kt}u_{pt})\neq0$.

 Then $
 \max_{i,j\leq N}|B(i,j)|_0=O(1).
 $
\end{lem}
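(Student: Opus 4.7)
The plan is to exploit the Gaussianity of $\bu_t$ from Assumption~\ref{a4.1}(i) via Isserlis' theorem, which gives a closed-form decomposition of the fourth mixed moment. For the centered Gaussian vector $(u_{it},u_{jt},u_{kt},u_{pt})$, Isserlis yields
$$
E[u_{it}u_{jt}u_{kt}u_{pt}]=\sigma_{ij}\sigma_{kp}+\sigma_{ik}\sigma_{jp}+\sigma_{ip}\sigma_{jk},
$$
where $\sigma_{ab}\equiv(\Sig_u)_{ab}$. Subtracting $E[u_{it}u_{jt}]\,E[u_{kt}u_{pt}]=\sigma_{ij}\sigma_{kp}$ gives the key identity
$$
\cov(u_{it}u_{jt},u_{kt}u_{pt})=\sigma_{ik}\sigma_{jp}+\sigma_{ip}\sigma_{jk}.
$$
This reduces the problem to a statement purely about the sparsity pattern of $\Sig_u$, bypassing any genuine combinatorial counting.

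Next I would apply condition~(i): by the definition of $A(i)$ and $A(j)$, $k\notin A(i)\cup A(j)$ is equivalent to $\sigma_{ik}=0$ \emph{and} $\sigma_{jk}=0$ simultaneously. Substituting these into the identity above forces both summands on the right-hand side to vanish for every choice of $p$,
$$
\cov(u_{it}u_{jt},u_{kt}u_{pt})=0\cdot\sigma_{jp}+\sigma_{ip}\cdot 0=0.
$$
In particular, no $p\in A(k)$ can make this covariance nonzero, so condition~(ii) is never satisfiable once (i) holds. Therefore $B(i,j)=\emptyset$ for every pair $(i,j)$, and the uniform bound $\max_{i,j\leq N}|B(i,j)|_0=0=O(1)$ follows trivially.

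The only real obstacle is recognizing that under Gaussianity the lemma admits this clean algebraic proof; there is no nontrivial combinatorial argument needed, and in fact the hypothesis $m_N=O(1)$ is not used in the argument itself. It is presumably listed to match the regime (Assumption~\ref{a4.2}(ii)(b)) in which the lemma gets invoked downstream: in that regime one needs $O(1)$-type counts over neighborhoods of the covariance graph, and the present lemma supplies the sharpest such bound (namely $0$) for the particular configuration considered.
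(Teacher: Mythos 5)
Your proof is correct, and it takes a genuinely different route from the paper's. The paper argues qualitatively: since $\sigma_{ik}=\sigma_{jk}=0$ and the vector is Gaussian, $u_{kt}$ is independent of $(u_{it},u_{jt})$, so a nonzero $\cov(u_{it}u_{jt},u_{kt}u_{pt})$ forces $p\in A(i)\cup A(j)$ and hence $k\in A(p)$; this gives $B(i,j)\subset\bigcup_{p\in A(i)\cup A(j)}A(p)$, whose cardinality is $O(m_N^2)=O(1)$ precisely because $m_N=O(1)$. You instead invoke the exact Gaussian fourth-moment (Isserlis) identity
$\cov(u_{it}u_{jt},u_{kt}u_{pt})=\sigma_{ik}\sigma_{jp}+\sigma_{ip}\sigma_{jk}$,
which vanishes for every $p$ once $\sigma_{ik}=\sigma_{jk}=0$ (including degenerate index choices such as $p\in\{i,j,k\}$ or $i=j$), so $B(i,j)=\emptyset$ and the bound is $0$ rather than $O(m_N^2)$; as you note, $m_N=O(1)$ is then not needed for the lemma itself, and your sharper conclusion would actually make the corresponding cross terms in the bound for $A_{21}$ and $A_{22}$ vanish exactly. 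What each approach buys: yours is shorter and quantitatively sharper, but it leans entirely on the closed-form Gaussian moment identity; the paper's covering argument only uses the implication ``uncorrelated $+$ Gaussian $\Rightarrow$ independent'' together with the neighborhood structure of the covariance graph, so its template survives in settings where sparsity is phrased directly in terms of independence (or approximate versions thereof) rather than exact Gaussian fourth-moment algebra. Both are valid proofs of the stated lemma.
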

\begin{proof} First we note that if $B(i,j)=\emptyset$, then $|B(i,j)|_0=0.$ If it is not empty, for any $k\in B(i,j)$,  by definition, $k\notin A(i)\cup A(j)$, which implies $\cov(u_{it}, u_{kt})=\cov(u_{jt}, u_{kt})=0$. By the Gaussianity,   $u_{kt}$ is independent of $(u_{it}, u_{jt})$. Hence if  $p\in A(k)$ is such that $\cov(u_{it}u_{jt}, u_{kt}u_{pt})\neq0$, then $u_{pt}$ should be correlated with either $u_{it}$ or $u_{jt}$. We thus must have $p\in A(i)\cup A(j)$. In other words, there is $p\in A(i)\cup A(j)$ such that $\cov(u_{kt}, u_{pt})\neq0$, which implies $k\in A(p)$. Hence,
$$
k\in \bigcup_{p\in A(i)\cup A(j)}A(p)\equiv M(i,j),
$$
and thus $B(i,j)\subset M(i,j)$.
Because $m_N=O(1)$, $\max_{i\leq N}|A(i)|_0=O(1)$, which implies $\max_{i,j}|M(i,j)|_0=O(1)$, yielding the result.
\end{proof}

Now we define and bound each of $A_{2h}$. For any $(i,j)\in H=\{(i,j): (\Sig_u)_{ij}\neq0\}$, we must have $j\in A(i)$.
So
\begin{eqnarray*}
A_{21}&=&\frac{1}{NT^4}\sum_{(i,j)\in H,}\sum_{(m,n)\in H, (m,n)\neq (i,j),}\sum_{t=1}^T\sum_{s\neq t}\sum_{l\neq t, s}E\xi_{ijt}\xi_{mnt}E\zeta_{is}\zeta_{js}E\zeta_{ml}\zeta_{nl}\cr
&\leq&O(\frac{1}{NT})\sum_{(i,j)\in H,}\sum_{(m,n)\in H, (m,n)\neq (i,j)}|E\xi_{ijt}\xi_{mnt}|\cr
&\leq&O(\frac{1}{NT})\sum_{(i,j)\in H}(\sum_{m\in A(i)\cup A(j)}\sum_{n\in A(m)}+\sum_{m\notin A(i)\cup A(j)}\sum_{n\in A(m)})|\cov(u_{it}u_{jt}, u_{mt}u_{nt})|.
\end{eqnarray*}
The first term is $O(\frac{1}{T})$ because $|H|_0=O(N)$ and $|A(i)|_0$ is bounded uniformly by $m_N=O(1)$.  So the number of summands in $\sum_{m\in A(i)\cup A(j)}\sum_{n\in A(m)}$ is bounded.  For the second term, if $m\notin A(i)\cup A(j)$,  $n\in A(m)$ and $\cov(u_{it}u_{jt}, u_{mt}u_{nt})\neq0$, then $m\in B(i,j)$. Hence the second term is bounded by $O(\frac{1}{NT})\sum_{(i,j)\in H}\sum_{m\in B(i,j)}\sum_{n\in A(m)}|\cov(u_{it}u_{jt}, u_{mt}u_{nt})|$, which is also $O(\frac{1}{T})$ by Lemma \ref{la.6}. Hence $A_{21}=o(1)$.

Similarly, applying Lemma \ref{la.6},
$$
A_{22}=\frac{1}{NT^4}\sum_{(i,j)\in H,}\sum_{(m,n)\in H, (m,n)\neq (i,j),}\sum_{t=1}^T\sum_{s\neq t}\sum_{l\neq t, s}E\xi_{ijt}\xi_{mnt}E\zeta_{is}\zeta_{ms}E\zeta_{jl}\zeta_{nl}=o(1),
$$
which is proved in the same lines of those of $A_{21}$.

Also note three simple facts: (1) $\max_{j\leq N}|A(j)|_0=O(1)$, (2) $(m,n)\in H$ implies $n\in A(m)$, and (3) $\xi_{mms}=\xi_{nms}$.  The term $A_{23}$ is defined as
\begin{eqnarray*}
A_{23}&=&\frac{1}{NT^4}\sum_{(i,j)\in H,}\sum_{(m,n)\in H, (m,n)\neq (i,j),}\sum_{t=1}^T\sum_{s\neq t}\sum_{l\neq t, s}E\xi_{ijt}\zeta_{it}E\zeta_{js}\xi_{mns}E\zeta_{ml}\zeta_{nl}\cr
&\leq&O(\frac{1}{NT})\sum_{j=1}^N\sum_{i\in A(j)}1\sum_{(m,n)\in H, (m,n)\neq (i,j)}|E\zeta_{js}\xi_{mns}|\cr
&\leq&O(\frac{2}{NT})\sum_{j=1}^N\sum_{n\in A(j)}|E\zeta_{js}\xi_{jns}|+O(\frac{1}{NT})\sum_{j=1}^N\sum_{ m\neq j, n\neq j}|E\zeta_{js}\xi_{mns}|=a+b.
\end{eqnarray*}
Term $a=O(\frac{1}{T})$. For $b$, note that  Lemma \ref{la.5} implies that when $m,n\neq j$, $u_{ms}u_{ns}$ and $e_{js}$ are independent because of the Gaussianity. Also because $\bu_s$ and $\bff_s$ are independent, hence $\zeta_{js}$ and $\xi_{mms}$ are independent, which implies that $b=0$. Hence $A_{23}=o(1)$.

The same argument as of $A_{23}$ also implies
$$
A_{24}=\frac{1}{NT^4}\sum_{(i,j)\in H,}\sum_{(m,n)\in H, (m,n)\neq (i,j),}\sum_{t=1}^T\sum_{s\neq t}\sum_{l\neq t, s}E\xi_{ijt}\zeta_{mt}E\zeta_{is}\xi_{mns}E\zeta_{il}\zeta_{nl}=o(1)
$$
Finally, because $\sum_{(i,j)\in H}1\leq\sum_{i=1}^N\sum_{j\in A(i)}1\leq m_N\sum_{i=1}^N1$, and $m_N=O(1)$, we have
\begin{eqnarray*}
A_{25}&=&\frac{1}{NT^4}\sum_{(i,j)\in H,}\sum_{(m,n)\in H, (m,n)\neq (i,j),}\sum_{t=1}^T\sum_{s\neq t}\sum_{l\neq t, s}E\xi_{ijt}\zeta_{it}E\zeta_{is}\zeta_{ms}E\xi_{mnl}\zeta_{nl}\cr
&\leq&O(\frac{1}{NT})\sum_{(i,j)\in H,}\sum_{(m,n)\in H, (m,n)\neq (i,j)}|E\xi_{ijt}\zeta_{it}E\zeta_{is}\zeta_{ms}E\xi_{mnl}\zeta_{nl}|\cr
&\leq &O(\frac{1}{NT})\sum_{i=1}^N\sum_{m=1}^N|E\zeta_{is}\zeta_{ms}|\leq O(\frac{1}{NT})\sum_{i=1}^N\sum_{m=1}^N|(\Sig_u^{-1})_{im}|E(1-\bff_s'\bw)^2\cr
&\leq &O(\frac{N}{NT})\|\Sig_u^{-1}\|_1=o(1).
\end{eqnarray*}
In summary, $\frac{1}{T}A_2=o(1)+O(\frac{N}{T^2})=o(1)$. This completes the proof.

\subsection{Further technical lemmas for Section 4}

We cite a   lemma that will be needed throughout the proofs.

\begin{lem}\label{la.1} Under Assumption \ref{a4.1}, there is $C>0$, \\
(i) $P(\max_{i,j\leq N}|\frac{1}{T}\sum_{t=1}^Tu_{it}u_{jt}-Eu_{it}u_{jt}|>C\sqrt{\frac{\log N}{T}})\rightarrow0$.\\
(ii) $P(\max_{i\leq K, j\leq N}|\frac{1}{T}\sum_{t=1}^Tf_{it}u_{jt}|>C\sqrt{\frac{\log N}{T}})\rightarrow0$.\\
(iii) $P(\max_{j\leq N}|\frac{1}{T}\sum_{t=1}^Tu_{jt}|>C\sqrt{\frac{\log N}{T}})\rightarrow0$.
\end{lem}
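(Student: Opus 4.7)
The plan is to establish each of the three maximal bounds by combining an appropriate exponential tail inequality for a single sum with a union bound, exploiting the Gaussian/sub-exponential tails of $\bu_t$ and $\bff_t$ together with the strong-mixing structure. The unifying strategy is: for each fixed index or pair, show that the deviation above $C\sqrt{(\log N)/T}$ has probability at most $O(N^{-\alpha})$ for some $\alpha>2$; then a union bound over the $O(N^2)$ indices yields the claim.

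\textbf{Proof of (iii).} Since $\{u_{jt}\}_{t\leq T}$ is i.i.d.\ $\mathcal{N}(0,\sigma_{jj})$ with $\max_j\sigma_{jj}\leq \|\Sigma_u\|_1<\infty$ by Assumption~\ref{a4.1}(i), the sum $\frac{1}{T}\sum_{t=1}^T u_{jt}$ is $\mathcal{N}(0,\sigma_{jj}/T)$. A Gaussian tail bound gives, for any $C>0$ and $M:=\max_j\sigma_{jj}$,
$$
P\left(\Big|\tfrac{1}{T}\sum_{t=1}^T u_{jt}\Big|>C\sqrt{\tfrac{\log N}{T}}\right)\leq 2\exp\left(-\tfrac{C^2\log N}{2M}\right).
$$
Taking $C$ sufficiently large (e.g.\ $C^2>6M$) and applying a union bound over $j\leq N$ yields a bound of order $N^{-2}\to 0$.

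\textbf{Proof of (i).} For fixed $i,j$, the variables $u_{it}u_{jt}-E u_{it}u_{jt}$ are i.i.d.\ (since $\bu_t$ is i.i.d.) and, being centered products of jointly Gaussian random variables with bounded second moments, are sub-exponential with sub-exponential norm uniformly bounded in $i,j$ (using $\max_{i,j}\sigma_{ii}\sigma_{jj}\leq \|\Sigma_u\|_1^2$). Applying a one-sided Bernstein inequality for i.i.d.\ sub-exponential variables, for $C$ large enough,
$$
P\left(\Big|\tfrac{1}{T}\sum_{t=1}^T(u_{it}u_{jt}-Eu_{it}u_{jt})\Big|>C\sqrt{\tfrac{\log N}{T}}\right)\leq 2\exp(-c_1 C^2\log N)
$$
for some constant $c_1>0$ depending only on $\|\Sigma_u\|_1$, as long as $\sqrt{(\log N)/T}=o(1)$, which holds by Assumption~\ref{a4.2}. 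A union bound over the $N^2$ pairs $(i,j)$ finishes the argument once $C$ is chosen so that $c_1 C^2>3$.

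\textbf{Proof of (ii).} This is the main obstacle because $\{\bff_t\}$ is only strong-mixing rather than i.i.d., so a naive Bernstein bound does not apply. Here I would invoke a Bernstein-type inequality for weakly dependent sub-exponential random variables — for instance, the one of Merlev\`ede, Peligrad and Rio for $\alpha$-mixing sequences. The product $Z_{ij,t}:=f_{it}u_{jt}$ is centered (by independence of $\bff_t$ and $\bu_t$ combined with $Eu_{jt}=0$), and, conditioning on $\bff_t$, the tail of $Z_{ij,t}$ satisfies
$$
P(|Z_{ij,t}|>s)\leq E\left[2\exp(-s^2/(2\sigma_{jj}f_{it}^2))\right]\leq C_1\exp(-(s/C_2)^{r}),
$$
for some $r\in(0,1)$ determined by $r_1$ in Assumption~\ref{a4.1}(ii); this follows from integrating the Gaussian tail against the sub-Weibull tail of $f_{it}$. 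Moreover, the sequence $\{Z_{ij,t}\}_t$ inherits the $\alpha$-mixing rate of $\{(\bff_t,\bu_t)\}_t$, which by Assumption~\ref{a4.1}(iii) decays exponentially. The Merlev\`ede–Peligrad–Rio inequality then gives, for any $s>0$,
$$
P\left(\Big|\tfrac{1}{T}\sum_{t=1}^T Z_{ij,t}\Big|>s\right)\leq T\exp\left(-\tfrac{(Ts)^\gamma}{C_3}\right)+\exp\left(-\tfrac{Ts^2}{C_3(1+TV)}\right),
$$
where $V$ is the long-run variance (bounded uniformly in $i,j$) and $\gamma>0$ depends on $(r_1,r_2)$. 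Plugging in $s=C\sqrt{(\log N)/T}$ and using $\log N=o(T^{1-\gamma'})$ (which is implied by Assumption~\ref{a4.2}), one can take $C$ large enough so that the right-hand side is $o(N^{-3})$, and then a union bound over the $KN$ pairs $(i,j)$, together with $K$ fixed, completes the proof.

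\textbf{Main obstacle.} The technical core is part (ii): one must carefully verify that the product $f_{it}u_{jt}$ satisfies the Bernstein tail/moment conditions uniformly in $(i,j)$ with parameters independent of $N$, and that the mixing-based concentration inequality applies at the rate $\sqrt{(\log N)/T}$ under the given growth condition $\log N = o(T^{1-\gamma})$. Parts (i) and (iii) are routine i.i.d.\ concentration; the serial dependence of $\bff_t$ in (ii) is what forces use of a dependence-aware exponential inequality.
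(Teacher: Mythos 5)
Your proposal is correct and follows essentially the same route as the paper, which proves this lemma by a one-line citation to Lemmas A.3 and B.1 of \cite{FLM11}; those lemmas are themselves obtained exactly as you describe, via exponential-tail (Bernstein/Merlev\`ede--Peligrad--Rio type) concentration for strong-mixing sub-Weibull variables combined with a union bound over the $O(N^2)$ indices, with your treatment of (i) and (iii) simply exploiting the extra i.i.d.\ Gaussian structure of $\bu_t$. The only point worth noting is that you invoke Assumption \ref{a4.2} to ensure $\log N=o(T)$ although the lemma is stated under Assumption \ref{a4.1} alone; some such growth restriction is indeed implicitly required (it appears in the cited FLM11 lemmas), so this is a fair reading rather than a gap.
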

\begin{proof} The proof follows  from Lemmas A.3 and B.1 in \cite{FLM11}.
 \end{proof}

\begin{lem}\label{la.1add} When the distribution of $(\bu_t, \bff_t)$ is independent of $\btheta$,   there is $C>0$, \\
(i)  $\sup_{\sbtheta\in\Theta}P(\max_{j\leq N}|\widehat\theta_j-\theta_j|>C\sqrt{\frac{\log N}{T}}|\btheta)\rightarrow0$\\
 (ii) $\sup_{\sbtheta\in\Theta}P(\max_{i,j\leq N}|\hsig_{ij}-\sigma_{ij}|>C\sqrt{\frac{\log N}{T}}|\btheta)\rightarrow0$,\\
 (iii) $\sup_{\sbtheta\in\Theta}P(\max_{i\leq N}|\hsig_{i}-\sigma_{i}|>C\sqrt{\frac{\log N}{T}}|\btheta)\rightarrow0$.

\end{lem}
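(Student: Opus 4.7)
The core observation driving all three parts is that, although $\btheta$ appears in the definitions of $\widehat\theta_j$, $\widehat u_{jt}$, and $\widehat\sigma_{ij}$, the \emph{estimation errors} $\widehat\theta_j-\theta_j$, $\widehat u_{jt}-u_{jt}$, and $\widehat\sigma_{ij}-\sigma_{ij}$ can be written as measurable functionals of $(\bu_t,\bff_t)_{t\le T}$ alone. Since by hypothesis the joint distribution of $(\bu_t,\bff_t)$ does not depend on $\btheta$, the probabilities in the statement are in fact constant in $\btheta$, so the $\sup_{\stheta\in\Theta}$ is a trivial operation and the bounds reduce to ordinary probability bounds that follow from Lemma~\ref{la.1}.

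For (i), I plan to substitute $y_{jt}=\theta_j+\bb_j'\bff_t+u_{jt}$ into the OLS expression (\ref{e4.2}) and use the least-squares orthogonality $\frac{1}{T}\sum_t(1-\bff_t'\bw)\bff_t=\bzero$ to obtain the exact identity
$$
\widehat\theta_j-\theta_j \;=\; \frac{1}{Ta_{f,T}}\sum_{t=1}^T u_{jt}(1-\bff_t'\bw),
$$
which is free of $\btheta$ and $\bb_j$. Expanding the right-hand side gives a term $\frac{1}{T}\sum_t u_{jt}$ plus $\bw'\frac{1}{T}\sum_t \bff_t u_{jt}$; Lemma~\ref{la.1}(ii)--(iii) bound both uniformly in $j$ at rate $\sqrt{(\log N)/T}$, and the prefactor $a_{f,T}^{-1}$ is $O_P(1)$ by Lemma~\ref{lb.0} together with standard consistency of $\bar\bff$ and $(\frac{1}{T}\sum_t\bff_t\bff_t')^{-1}$ under Assumption~\ref{a4.1}. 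Note that $\|\bw\|$ is $O_P(1)$ as well, giving the desired rate.

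For (ii), I would first write $\widehat u_{jt}-u_{jt}=-(\widehat\theta_j-\theta_j)-(\widehat\bb_j-\bb_j)'\bff_t$, which again is a functional of $(\bu_t,\bff_t)$ only. Then
$$
\widehat\sigma_{ij}-\sigma_{ij}
\;=\;\Bigl[\tfrac{1}{T}\sum_t u_{it}u_{jt}-Eu_{it}u_{jt}\Bigr]
\;+\;\tfrac{1}{T}\sum_t\bigl(\widehat u_{it}\widehat u_{jt}-u_{it}u_{jt}\bigr),
$$
so the first bracket is bounded uniformly by Lemma~\ref{la.1}(i) at rate $\sqrt{(\log N)/T}$. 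The residual cross term expands into four pieces, each of which is handled by Cauchy--Schwarz together with the uniform rates $\max_j|\widehat\theta_j-\theta_j|=O_P(\sqrt{(\log N)/T})$ (from part (i)) and analogous rates for $\max_j\|\widehat\bb_j-\bb_j\|$ (which follow by the same OLS identity and Lemma~\ref{la.1}), plus $\max_{i,j}|\frac{1}{T}\sum_t u_{it}\bff_t|$ and $\max_t\|\bff_t\|^2=O_P(\log T)$ under Assumption~\ref{a4.1}(ii). Each cross term is thus of smaller order than $\sqrt{(\log N)/T}$. Part (iii) is the special case $i=j$.

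The main technical point is not any one estimate but the bookkeeping required to see that \emph{all} the $\btheta$-dependence cancels in the error expansions; once that is in hand, there is no obstacle beyond applying the already-established Lemma~\ref{la.1}. A minor subtlety is checking that the various $O_P(1)$ factors ($a_{f,T}^{-1}$, $\|\bw\|$, $\max_t\|\bff_t\|^2$) all arise from $(\bff_t)$ alone, so that their bounds likewise hold uniformly over $\btheta\in\Theta$.
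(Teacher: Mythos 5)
Your proposal is correct and follows essentially the same route as the paper: part (i) rests on the same exact identity $\widehat\theta_j-\theta_j=\frac{1}{Ta_{f,T}}\sum_{t=1}^T u_{jt}(1-\bff_t'\bw)$ combined with Lemma \ref{la.1} and $a_{f,T}$ bounded away from zero, part (ii) uses the same decomposition of $\hsig_{ij}-\sigma_{ij}$ through $\widehat u_{it}-u_{it}=-(\widehat\theta_i-\theta_i)-(\widehat\bb_i-\bb_i)'\bff_t$ with Cauchy--Schwarz and the uniform rates for $\widehat\theta_i-\theta_i$ and $\widehat\bb_i-\bb_i$, and the key observation that all these errors are functionals of $(\bu_t,\bff_t)$ alone (so the $\sup_{\stheta\in\Theta}$ is vacuous) is exactly the paper's argument. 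The only minor caveat is part (iii): in the paper's notation $\hsig_i^2=\hsig_{ii}/a_{f,T}$ and $\sigma_i^2=\sigma_{ii}/a_f$, so it is not literally the $i=j$ case of (ii) but follows from it after additionally using $a_{f,T}\rightarrow^p a_f>0$ (a quantity independent of $\btheta$, which you already invoke in part (i)) and the fact that $\sigma_i$ is bounded away from zero.
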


\begin{proof}

Note that $\widehat\theta_j-\theta_j=\frac{1}{a_{f, T} T}\sum_{t=1}^Tu_{jt}(1-\bff_t'\bw)$. Here  $a_{f, T}=1-\bar\bff'\bw\rightarrow^p 1-E\bff_t'(E\bff_t\bff_t')^{-1}E\bff_t>0$, hence $a_{f, T}$ is bounded away from zero with probability approaching one.  Thus  by Lemma \ref{la.1}, there is $C>0$ independent of $\btheta$, such that
$$
\sup_{\sbtheta\in\Theta}P(\max_{j\leq N}|\widehat\theta_j-\theta_j|>C\sqrt{\frac{\log N}{T}}|\btheta)=P(\max_j|\frac{1}{a_{f, T} T}\sum_{t=1}^Tu_{jt}(1-\bff_t'\bw)|>C\sqrt{\frac{\log N}{T}})\rightarrow0
$$
(ii) There is $C$ independent of $\btheta$, such that the event $$
A=\{\max_{i,j}|\frac{1}{T}\sum_{t=1}^Tu_{it}u_{jt}-\sigma_{ij}|<C\sqrt{\frac{\log N}{T}},\quad \frac{1}{T}\sum_{t=1}^T\|\bff_t\|^2<C\}
$$
has probability approaching one. Also, there is $C_2$ also independent of $\btheta$ such that the event $B=\{\max_i\frac{1}{T}\sum_tu_{it}^2<C_2\}$ occurs with probability approaching one. Then
on the event $A\cap B$, by the triangular and Cauchy-Schwarz inequalities,
  $$|\widehat\sigma_{ij}-\sigma_{ij}|\leq C\sqrt{\frac{\log N}{T}}+2\max_{i}\sqrt{\frac{1}{T}\sum_t(\widehat u_{it}-u_{it})^2C_2}+\max_i\frac{1}{T}\sum_t(u_{it}-\widehat u_{it})^2.$$
It can be shown that
$$
\max_{i\leq N}\frac{1}{T}\sum_{t=1}^T(\widehat u_{it}-u_{it})^2\leq\max_i(\|\widehat\bb_i-\bb_i\|^2+(\widehat\theta_i-\theta_i)^2)(\frac{1}{T}\sum_{t=1}^T\|\bff_t\|^2+1).
$$
Note that $\widehat\bb_i-\bb_i$ and $\widehat\theta_i-\theta_i$ only depend on $(\bff_t, \bu_t)$ (independent of $\btheta$). By Lemma 3.1 of \cite{FLM11}, there is $C_3>0$ such that
$
\sup_{\bb,\sbtheta}P(\max_{i\leq N}\|\widehat\bb_i-\bb_i\|^2+(\widehat\theta_i-\theta_i)^2>C_3\frac{\log N}{T})=o(1).
$
Combining the last two displayed inequalities yields, for  $C_4=(C+1)C_3$,
$$
\sup_{\sbtheta}P(\max_{i\leq N}\frac{1}{T}\sum_{t=1}^T(\widehat u_{it}-u_{it})^2>C_4\frac{\log N}{T}|\btheta)=o(1),
$$
which yields the desired result.

 (iii): Recall $\widehat\sigma_j^2=\widehat\sigma_{jj}/a_{f, T}$, and $\sigma_j^2=\sigma_{jj}/(1-E\bff_t'(E\bff_t\bff_t')^{-1}E\bff_t)$. Moreover, $a_{f, T}$ is independent of $\btheta.$ The result  follows immediately from part (ii). \end{proof}

\begin{lem}\label{la.2} For any $\epsilon>0$, $\sup_{\sbtheta}P(\|\hSig_u^{-1}-\Sig_u^{-1}\|>\epsilon|\btheta)=o(1)$.
\end{lem}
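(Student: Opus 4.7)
The plan is to first establish operator-norm consistency of $\hSig_u$ for $\Sig_u$ uniformly in $\btheta$, and then pass to the inverse by a standard perturbation argument. A critical structural observation is that, although $\widehat{u}_{it}$ depends on $\btheta$ through the OLS estimator, Proposition \ref{p4.1} and its supporting assumptions guarantee that the distribution of $\{\bff_t, \bu_t\}$ is independent of $\btheta$, so every entry-wise deviation bound used below holds with constants that do not depend on $\btheta$ and the supremum over $\Theta$ enters trivially.

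First I would show
\[
\sup_{\sbtheta\in\Theta} P\!\left(\|\hSig_u - \Sig_u\|_2 \;>\; C m_N \sqrt{\tfrac{\log N}{T}} \,\Big|\, \btheta\right) \;\longrightarrow\; 0
\]
for some $C>0$. The starting point is Lemma \ref{la.1add}(ii), which supplies a uniform-in-$\btheta$ entry-wise bound $\max_{i,j}|s_{ij}-\sigma_{ij}| = O_P(\sqrt{\log N/T})$. Combining this with Assumption \ref{a4.2}(i), the minimal signal condition $\min_{(\Sig_u)_{ij}\neq 0}|(\Sig_u)_{ij}|\gg \sqrt{\log N/T}$, a standard Bickel-Levina argument applied to the hard-thresholding rule shows that on a high-probability event (i) all truly zero off-diagonal entries of $\Sig_u$ are set to zero by thresholding, and (ii) all nonzero entries survive and satisfy $|\hsig_{ij}-\sigma_{ij}| = O(\sqrt{\log N/T})$. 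Hence each row of $\hSig_u - \Sig_u$ has at most $m_N$ nonzero entries on this event, each of size $O(\sqrt{\log N/T})$, so $\|\hSig_u - \Sig_u\|_2 \le \|\hSig_u - \Sig_u\|_1 = O_P(m_N \sqrt{\log N/T})$ uniformly in $\btheta$. Under either branch of Assumption \ref{a4.2}(ii), this rate is $o_P(1)$.

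Second, I would invoke the matrix identity
\[
\hSig_u^{-1} - \Sig_u^{-1} \;=\; -\,\Sig_u^{-1}(\hSig_u-\Sig_u)\,\hSig_u^{-1}.
\]
By Assumption \ref{a4.1}(i), $\|\Sig_u^{-1}\|_2 \le \|\Sig_u^{-1}\|_1 < \infty$, so on the event $\{\|\hSig_u - \Sig_u\|_2 \le \tfrac{1}{2}\|\Sig_u^{-1}\|_2^{-1}\}$, which has probability tending to one uniformly in $\btheta$ by the previous step, a Neumann-series bound yields $\|\hSig_u^{-1}\|_2 \le 2\|\Sig_u^{-1}\|_2$ and therefore $\|\hSig_u^{-1}-\Sig_u^{-1}\|_2 \le 2\|\Sig_u^{-1}\|_2^2\|\hSig_u-\Sig_u\|_2 = o_P(1)$, uniformly in $\btheta$. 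The main obstacle is keeping the uniformity in $\btheta$ through each step; this is resolved by exploiting the $\btheta$-free distribution of $(\bu_t,\bff_t)$ together with the already-established uniform residual bound from Lemma \ref{la.1add}, so no additional probabilistic work beyond the standard sparse-covariance machinery is required.
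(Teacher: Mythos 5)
Your proposal is correct and follows essentially the same route as the paper: a uniform-in-$\btheta$ entrywise bound from Lemma \ref{la.1add}(ii), then an operator-norm bound of order $m_N\sqrt{(\log N)/T}$ for $\hSig_u^{-1}-\Sig_u^{-1}$, which is $o(1)$ under Assumption \ref{a4.2}. The only difference is that the paper obtains the inverse-matrix rate by directly citing the sparse-thresholding result of \cite{FLM11}, while you reconstruct that step explicitly via the row-sparsity $\ell_1$ bound on $\hSig_u-\Sig_u$ and a Neumann-series perturbation argument; this is just filling in the cited machinery, not a distinct method.
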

\begin{proof}
By Lemma {\ref{la.1add}} (ii), $\sup_{\sbtheta\in\Theta}P(\max_{i,j\leq N}|\hsig_{ij}-\sigma_{ij}|>C\sqrt{\frac{\log N}{T}}|\btheta)\rightarrow1.$ By   \cite{FLM11}, on the event $\max_{i,j\leq N}|\hsig_{ij}-\sigma_{ij}|\leq C\sqrt{\frac{\log N}{T}}$, there is constant $C'$ that is independent of $\btheta$,
$
\|\widehat \Sig_u^{-1}-\Sig_u^{-1}\|\leq C'm_N(\frac{\log N}{T})^{1/2}.
$
Hence the result follows due to the sparse condition $m_N(\frac{\log N}{T})^{1/2}=o(1)$.
\end{proof}

\newpage

\normalsize

\bibliographystyle{ims}
\bibliography{liaoBib}

\end{document}